\documentclass[11pt]{article}
\pdfoutput=1
\usepackage{booktabs} 
\usepackage[ruled]{algorithm2e} 

\SetAlFnt{\small}
\SetAlCapFnt{\small}
\SetAlCapNameFnt{\small}
\SetAlCapHSkip{0pt}
\IncMargin{-\parindent}


\usepackage{booktabs}
\usepackage{amsfonts,amsmath,amssymb,amsthm,graphicx}
\usepackage{changepage}
\usepackage{enumerate}
\usepackage{scalerel}
\usepackage{accents}
\usepackage{hyperref}
\usepackage{xfrac}
\usepackage[margin=1in]{geometry}
\usepackage{fullpage}
\usepackage{multirow}
\usepackage{thm-restate}

\usepackage{natbib}
\usepackage{csquotes}
\usepackage{comment}

\usepackage[capitalize]{cleveref}

\usepackage{tikz}
\usetikzlibrary{patterns}
\usetikzlibrary{intersections}

\theoremstyle{plain}
\newtheorem{theorem}{Theorem}
\newtheorem{lemma}{Lemma}
\newtheorem{claim}{Claim}

\newtheorem{proposition}{Proposition}
\newtheorem{corollary}{Corollary}

\theoremstyle{plain}
\newtheorem{definition}{Definition}
\newtheorem{example}{Example}
\newtheorem{assumption}[definition]{Assumption}

\theoremstyle{plain}
\providecommand{\keywords}[1]
{
  \small	
  \textbf{\textit{Keywords---}} #1
}

\newcommand{\val}{v}
\newcommand{\cost}{c}
\newcommand{\alloc}{x}
\newcommand{\pay}{p}
\newcommand{\util}{u}

\newcommand{\feasibles}{\mathcal{X}}
\newcommand{\type}{\theta}
\newcommand{\types}{\Theta}
\newcommand{\dist}{F}
\newcommand{\dista}{D}
\newcommand{\dists}{\mathcal{F}}
\newcommand{\mech}{\mathcal{M}}
\newcommand{\game}{\mathcal{G}}
\newcommand{\strategy}{\sigma}

\newcommand{\BNE}{{\rm BNE}}

\newcommand{\action}{a}
\newcommand{\actions}{A}
\newcommand{\first}{^1}
\newcommand{\second}{^2}
\newcommand{\combined}{^C}
\newcommand{\afirst}{\action\first}
\newcommand{\asecond}{\action\second}
\newcommand{\acombined}{\action\combined}
\newcommand{\asetfirst}{\actions\first}
\newcommand{\asetsecond}{\actions\second}
\newcommand{\asetcombined}{\actions\combined}
\newcommand{\mechf}{\mech\first}
\newcommand{\mechs}{\mech\second}
\newcommand{\mechc}{\mech\combined}
\newcommand{\combinedGame}{\mechc=\game(\mechf, \Gamma, \mechs)}

\newcommand{\combinedGameSimple}{\mechc=\game(\mechf, \mechs)}

\newcommand{\wel}{{\rm Wel}}
\newcommand{\rev}{{\rm Rev}}
\newcommand{\poa}{{\rm PoA}}

\newcommand{\bid}{b}
\newcommand{\E}{\mathbb{E}}

\newcommand{\signal}{\psi}
\newcommand{\signals}{\Psi}
\newcommand{\nosignal}{\bar{\signal}}
\newcommand{\opt}{{\rm OPT}}

\newcommand{\agents}{\mathcal{N}}
\newcommand{\items}{M} 
\newcommand{\estimate}{\psi}

%
%
\newcommand{\given}{\,\mid\,}

\newcommand{\prob}[2][]{\text{\bf Pr}\ifthenelse{\not\equal{}{#1}}{_{#1}}{}\!\left[{\def\givenn{\middle|}#2}\right]}
\newcommand{\expect}[2][]{\text{\bf E}\ifthenelse{\not\equal{}{#1}}{_{#1}}{}\!\left[{\def\givenn{\middle|}#2}\right]}

\newcommand{\tparen}{\big}
\newcommand{\tprob}[2][]{\text{\bf Pr}\ifthenelse{\not\equal{}{#1}}{_{#1}}{}\tparen[{\def\given{\tparen|}#2}\tparen]}
\newcommand{\texpect}[2][]{\text{\bf E}\ifthenelse{\not\equal{}{#1}}{_{#1}}{}\tparen[{\def\given{\tparen|}#2}\tparen]}

\newcommand{\sprob}[2][]{\text{\bf Pr}\ifthenelse{\not\equal{}{#1}}{_{#1}}{}[#2]}
\newcommand{\sexpect}[2][]{\text{\bf E}\ifthenelse{\not\equal{}{#1}}{_{#1}}{}[#2]}

\newcommand{\dd}{{\mathrm d}}

\newcommand{\reals}{\mathbb{R}}


\let\oldparagraph\paragraph
\renewcommand{\paragraph}[1]{\oldparagraph{#1.}}

\begin{document}


\title{Making Auctions Robust to Aftermarkets}


\author{Moshe Babaioff\thanks{Microsoft Research.
Email: \texttt{moshe@microsoft.com}}
\and Nicole Immorlica\thanks{Microsoft Research.
Email: \texttt{nicimm@gmail.com}. }
\and Yingkai Li\thanks{Cowles Foundation for Research in Economics, Yale University.
Email: \texttt{yingkai.li@yale.edu}.
Research carried out while this author was an intern in Microsoft Research New England Lab and a PhD candidate at Northwestern University.
} 
\and Brendan Lucier\thanks{Microsoft Research.
Email: \texttt{brlucier@microsoft.com}. }}





\date{}

\maketitle

\begin{abstract}
A prevalent assumption in auction theory is that the auctioneer has full control over the market and that the allocation she dictates is final. In practice, however, agents might be able to resell acquired items in an aftermarket. A prominent example is the market for carbon emission allowances. These allowances are commonly allocated by the government using uniform-price auctions, and firms can typically trade these allowances among themselves in an aftermarket that may not be fully under the auctioneer’s control. While the uniform-price auction is approximately efficient in isolation, we show that speculation and resale in aftermarkets might result in a significant welfare loss. Motivated by this issue, we consider three approaches, each ensuring high equilibrium welfare in the combined market. The first approach is to adopt smooth auctions such as discriminatory auctions. This approach is robust to correlated valuations and to participants acquiring information about others’ types. However, discriminatory auctions have several downsides, notably that of charging bidders different prices for identical items, resulting in fairness concerns that make the format unpopular. Two other approaches we suggest are either using posted-pricing mechanisms, or using uniform-price auctions with anonymous reserves. We show that when using balanced prices, both these approaches ensure high equilibrium welfare in the combined market. The latter also inherits many of the benefits from uniform-price auctions such as price discovery, and can be introduced with a minor modification to auctions currently in use to sell carbon emission allowances.
\end{abstract}

\keywords{carbon markets, aftermarkets, price of anarchy, multi-unit auctions, posted prices}

\section{Introduction}
\label{sec:intro}

{There is a vast literature in economics and computer science  analyzing the welfare properties of auctions at equilibrium.  A common assumption in this literature is that the auction participants are {the} \emph{consumers} who derive value from using the goods they purchase.  However, in many settings, buyers may have the option to resell their winnings in outside markets that are beyond the purview of the auction designer.  Resale possibilities can change behavior in the primary auction, such as by encouraging 
speculation.  The resulting distortions can reduce final welfare even after accounting for any gains from post-auction trade.

This raises the possibility that mechanisms that appear approximately efficient in isolation might actually generate very poor outcomes {if agents anticipate aftermarket trade opportunities and adjust their strategies accordingly.}
Further complicating the issue, it may be that neither the auction designer nor the auctioneer have any sway over the format of the secondary market(s), and may not even be aware that they exist.  We therefore ask: how robust are different auction formats to the distortion effects of aftermarket trading?  Can existing theoretical analyses of auction efficiency at equilibrium be adapted to this setting?
}


\paragraph{Application: Markets for Carbon Emission Allowances}
Our planet is warming at an alarming rate.  
The Swiss Re Institute, the research arm of the reinsurance company Swiss Re based in Zurich, Switzerland, estimated in an April 2021 report \citep{Swiss} on the impact of climate change that ``The world stands to lose close to 10\% of total economic value by mid-century if climate change stays on the currently-anticipated trajectory.''
Due to these dire circumstances, the United Nations made combating climate change and its impacts one of 17 sustainable development goals \citep{un}.

Current efforts are largely focused on imposing limits on greenhouse gas emissions.  This makes the ``right to emit'' a scarce resource.  Emission allowances are allocated via markets, with the goal of distributing them to the industries that can provide the highest value to society per unit of emission.
A large fraction of these allowances in major markets, such as the EU and California, are distributed via auctions on a monthly or quarterly basis.  
The EU Emissions Trading System (EU ETS), for instance, allocated 57\% of allowances via auction {between 2013 and 2020}.\footnote{See the article from \citet{euauction}.}
Each of these auctions is typically run using a single-round sealed-bid uniform-price auction.  
This format was adopted to reflect the priorities of the EU commission which requires, according to Article 10(4) of DIRECTIVE 2003/87/EC, that auctions are designed to ensure transparency, equitable informational and procedural access, and that ``participants do not undermine the operation of the auctions.''\footnote{See the legal document from \citep{eulaw}.}

{The uniform-price auction is known to be approximately efficient at equilibrium when run in isolation~\citep{de2013inefficiency}.}
{However,
emission allowances can also be traded via unregulated (or only partially regulated) secondary markets.}
These secondary markets are not fully controlled by the primary auctioning bodies and can take many forms -- bilateral trade, brokered trade, and exchanges, to name a few.  
%
%
The existence of these secondary markets is concerning because they can distort outcomes, 
but they are also unavoidable absent extreme regulation.

What is the impact of these secondary markets on the primary market?
There are certainly potential benefits to secondary markets, 
such as providing {simpler market access to smaller firms} who do not feel confident participating in the primary auction.
But distortion effects may be present as well.
According to \citep{eudoc}, the auction clearing prices closely track the mean of the best-ask and best-bid prices on the EEX spot secondary market. Furthermore, both prices rise steadily month-over-month. 
These facts, taken together, suggest there is room for agents to {speculate by} buying allowances in the auctions and reselling them at a later time in the secondary markets. 
As noted in prior work of \citet{quemin2021financials},
``regulators are currently ill-equipped to appraise the beneficial and detrimental facets of speculation, and proper warning systems are wanting.''  Their work provides a diagnostic toolkit to assess the degree and impact of speculation in these markets. 
In our work, we ask whether the design of the primary market itself can defend against detrimental speculative behavior.  Namely, \emph{are the welfare guarantees of certain auction formats (approximately) robust to reallocation by arbitrary secondary markets? {Can small changes to currently-used auctions achieve such robustness?}}

\paragraph{A Model of Aftermarkets}
{Motivated by emission allowance markets,}  
our primary model is 
a multi-unit auction in which each item corresponds to an allowance for one unit of emission.  {However, we note that most of our results 
extend to a richer class of combinatorial auctions; see Section~\ref{sec:extensions}.}  In the multi-unit model, agents (buyers/bidders) have decreasing marginal value for units, representing their value for using (consuming) the allowances.  These valuation functions are private knowledge but drawn from known distributions.  Items (allowances) acquired in the auction can be resold in a secondary market (aftermarket). To distinguish a secondary market from a general mechanism, we impose some mild conditions on the form these markets can take. Specifically, we assume that these are \emph{trade mechanisms}: mechanisms that are budget balanced\footnote{Our results will hold not only for trade mechanisms that are strongly budget balanced (net payment of 0), but also for weakly budget balanced mechanisms (mechanisms that never lose money).} and do not force participation.  

Since agents anticipate the secondary market, the potential for resale can change behavior in the primary auction.  
{Our equilibrium notion is perfect Bayesian equilibrium (PBE).\footnote{All of the positive results in this paper actually hold for all Bayes-Nash equilibria (and thus, in particular, for all perfect Bayesian equilibria).
We choose the stronger notion of PBE as our solution concept to make the negative results more convincing, as we discuss later.
}
Roughly speaking, each agent is assumed to behave rationally in the secondary market given the auction outcome and her beliefs about other agents (subgame perfection), and her belief 
must be consistent with observed outcomes of the auction (Bayesian updating).  Moreover, each agent is forward-looking and bids at equilibrium in the auction given their beliefs about what will occur in aftermarket trading.}
%
One subtlety is that behavior in the secondary market can depend on the information released after the primary auction, such as whether bids are publicly observed. 
We want results that are robust to this choice, so we allow an arbitrary revelation of signals correlated with the auction bids and outcomes before the secondary market begins.  We assume that agents are fully aware of the secondary market (and what information they'll learn about the primary auction outcome) when participating in the primary auction.  We call the resulting mechanism that combines the primary auction and the aftermarket trade mechanism the \emph{combined market}. We seek conditions on the design of the primary market that guarantee high welfare {in every equilibrium allocation}
in the combined market.


\paragraph{Aftermarkets can Substantially Reduce Welfare}
It may seem counter-intuitive that secondary trade mechanisms can reduce overall welfare.  Indeed, 
given the outcome of the primary market,
as trade is voluntary, any trade in the secondary market only Pareto improves the utilities of all agents. 
{The problem is that the 
agents, being aware of the existence of the secondary market, adjust their strategies in the primary market. In particular, aftermarkets create opportunities for strategically acquiring items in the primary market for the sole purpose of opportunistically reselling  them later.} 
In principle this can lead to lower welfare overall.  

{To formally illustrate the problem, we show that} {even a uniform-price auction}
can suffer arbitrarily large loss of welfare due to the presence of a secondary market.  Importantly, we insist that the equilibria we construct avoid weakly dominated strategies.
{Indeed, low-welfare equilibria are already known to exist for uniform-price auctions in isolation and thus also in combined markets, but these equilibria rely on overly aggressive bids that seem not
predictive and hence cannot be viewed as negative results; these equilibria are eliminated by excluding weakly dominated strategies.\footnote{For example, suppose that one agent bids infinitely high on all units, and all other bidders bid $0$.  The first agent then wins all items and pays nothing, regardless of the valuations.  This is technically a Bayes-Nash equilibrium {(and such a BNE exists even in the dominant-strategy Second Price Auction)}.  However, since the first agent's bid in this example is dominated by bidding her true marginal values, this equilibrium does not survive the elimination of weakly dominated strategies.}
In contrast, we establish that in the presence of a secondary market, even equilibria that exclude weakly dominated strategies
%
can have vanishing welfare.}

\medskip

\noindent
\textbf{Theorem} (informal).
There exist instances of a uniform-price auction of $m$ items followed by a trade mechanism such that the expected welfare obtained in some perfect Bayesian equilibrium is only a
$\frac{1}{\Omega(\log(m))}$ fraction of the optimal expected welfare.  This is true even when restricting to equilibria that avoid weakly dominated strategies.

\medskip

What drives such bad equilibria?    Intuitively, if an individual participant wins many items 
in the auction, they can use their market power to distort prices in the secondary market and increase their own revenue at the expense of efficiency.  This incentivizes a speculator to win many items 
in the auction.  In particular, it can be perfectly rational for a speculator to bid aggressively at auction to ensure they have many items 
 to work with in the secondary market.
%
{Moreover, this type of aggressive resale strategy will not necessarily be resisted by the other agents since bidders are incentivized to keep their bids low to reduce the price of items they are already winning.}
%
%
{In Section~\ref{sec:example} we present an example illustrating all of these effects in a perfect Bayesian equilibrium in undominated strategies.  Our example needs only a very simple form of secondary market with dominant strategies, in which a reseller makes take-it-or-leave-it price offers.}


\paragraph{Auction Formats that are Robust to Aftermarkets}
The example above shows that the uniform-price auction is susceptible to the presence of an aftermarket.
We argue however that some market formats do have robust welfare guarantees in combined markets.  First, we show that {\em smooth} auctions maintain their welfare guarantees in such environments.    
Smoothness is a technical condition introduced by \citet{roughgarden2009intrinsic} in the context of proving worst-case guarantees on welfare properties of equilibria.  We formally define smoothness in Section~\ref{sec:smooth}, but for now
it is enough to know that if an auction format is $(\lambda,\mu)$-smooth for some $\lambda \in (0,1]$ and $\mu \geq 1$, then any equilibrium will generate at least a $\lambda/\mu$ fraction of the optimal expected welfare~\citep{syrgkanis2013composable}.
As noted by \citet{syrgkanis2013composable}, smoothness can be thought of as a sort of approximate First Welfare Theorem\footnote{{Informally, the First Welfare Theorem states that when prices clear the market, the allocation is socially efficient.}
}, whereby any loss in efficiency due to a reduced allocation to one bidder can always be partially offset by the payment of another bidder. 
Our result shows that any efficiency guarantee proven for an auction in isolation using smoothness will immediately extend to any equilibrium of the combined market, no matter what trade mechanism is used in the secondary market.  

\medskip

\noindent
\textbf{Theorem} (informal).  For any combined market consisting of a $(\lambda,\mu)$-smooth auction followed by a trade mechanism, the expected welfare 
{of any Bayes-Nash equilibrium (and thus, in particular,  of any perfect Bayesian equilibrium)} 
is at least $(\lambda/\mu)$ times the optimal welfare.

\medskip

{We also show that this guarantee continues to hold even when agent valuations can be arbitrarily correlated, and even if participants can choose to acquire costly information about others' types in advance of the auction (such as a potential speculator investigating market forecasts).}

Uniform-price auctions are not $(\lambda,\mu)$-smooth for any constants $\lambda$ and $\mu$; 
{so the theorem does not apply for such auctions. 
Unlike for smooth auctions,
their welfare can be severely reduced in the presence of a secondary market (as we mentioned above). }
{Thus, current emission allowances markets are not robust to speculation opportunities created by aftermarkets.}
However, discriminatory-price auctions, in which each buyer pays her marginal bid for each unit she wins, are $(1-1/e,1)$-smooth.
%
The smooth discriminatory-price auction therefore guarantees $1-\sfrac{1}{e}$ fraction of the optimal welfare in the combined market.\footnote{In the special case of a single item, the discriminatory-price auction reduces to the first-price auction. 
In asymmetric buyers setting, \citet{jin2022first} show that the equilibrium welfare of first-price auction can be at most $1-\frac{1}{e^2}$ fraction of the optimal when there are no aftermarkets.
In \cref{apx:single efficient}, we show that when there are two symmetric buyers,
any Bayes-Nash equilibrium of the combined market created by any trade mechanism that follows the first-price auction, is efficient.
}

{Is the discriminatory price auction a viable solution for allocating carbon allowances?}
Unfortunately, this auction format has some downsides. First, bidding is rather challenging and highly depends on distributional knowledge by the bidders. 
Second, discriminatory auctions might be perceived as unfair since identical goods are sold for different prices, creating envy between buyers. Finally, winners typically realize in retrospect that they could lower their payment by lowering their bids, creating regret. 
These issues could discourage participation in the auction, and indeed such concerns have been cited as reasons why this auction format was not adopted by the EU ETS~\citep{euetsregulation}.  

\paragraph{An Alternative Solution: Posted Prices}
{Motivated by these concerns,} we also show that one can achieve robustness to secondary market distortions in another way.  Instead of running an auction, one could use \emph{posted prices}: make a quantity of items available at a declared price and allow buyers to purchase (in an arbitrary order) while supplies last.  This combines the fixed-price feature of a carbon tax with the quantity restriction of an auction.  

{
In general, posted-price mechanisms are not robust to secondary markets.
We show by way of example that even if a posted-price mechanism achieves high welfare on its own, this welfare can be significantly decreased by speculation that occurs at equilibrium in the presence of a secondary market.  This motivates us to focus on a particular form of posted-price mechanism: those that use \emph{balanced prices}, which are set proportional to the expected average welfare generated in the efficient allocation.}  We define balanced prices formally in Section~\ref{sec:balanced-prices}.  It is known that balanced prices yield strong welfare guarantees for many allocation problems, including multi-unit auctions~\citep{feldman2014combinatorial,dutting2020prophet}.  Specifically, if prices are $(\alpha,\beta)$-balanced for $\alpha,\beta \geq 1$, then the expected welfare obtained when buyers purchase sequentially is at least a $1/(1+\alpha\beta)$ fraction of the optimum~\citep{dutting2020prophet}.  We prove that the welfare guarantee from balanced prices continues to hold at any equilibrium given any arrival order of the buyers even in the presence of a secondary market.\footnote{Unlike with smooth auctions, this result does not extend to {correlated value distributions or to} 
settings where buyers can acquire additional information before the auction opens.  For example, it is problematic if the buyers become more informed than the designer who set the prices.}

\medskip

\noindent
\textbf{Theorem} (informal).
For any combined market consisting of a posted-price mechanism with $(\alpha,\beta)$-balanced prices followed by a trade mechanism, the expected welfare 
{of any Bayes-Nash equilibrium (and thus, in particular,  of any perfect Bayesian equilibrium)} 
is at least $1 / (1+\alpha\beta)$ times the optimal welfare. 

\medskip

For our setting of selling identical items to buyers with decreasing marginal values, there is a per-item price that is $(1,1)$-balanced, and therefore guarantees at least half of the expected optimal welfare even in the presence of a secondary market.

\paragraph{A Proposal: Balanced Reserves}
To this point we have described two methods for allocating items 
that achieve robust welfare guarantees in the presence of secondary markets.  One is to use a smooth mechanism, such as a discriminatory auction, and the other is to find a balanced price and sell items at that price while supplies last.  We have already discussed potential drawbacks of the former solution, but the latter has its own set of practical challenges: for one thing, it is a dramatic change relative to the uniform-price auctions typically used; for another, if the government underestimates demand and sets its price too low, this could encourage a rush where buyers race to purchase items at the moment they become available, resulting in low welfare, buyer frustration,
and a perception of unfairness 
As it turns out, one can address these issues and obtain all of the benefits of balanced posted prices with a small tweak to a uniform-price auction -- the introduction of appropriately-chosen reserves.  Reserve prices are already common in many emission auctions, such as the one administered by the California Air Resources Board~\cite{carb}, in the form of price floors.
We show {that an appropriate choice of reserve prices}
guards against welfare loss in combined markets: 
{namely, one can augment a uniform-price auction with a per-allowance reserve price with bounded welfare loss, by setting the reserve to be}  
the balanced per-allowance price one would use in a posted-price mechanism.   
We prove in Section~\ref{sec:reserves} that the expected welfare at any equilibrium of this auction is at least half of the expected optimal welfare, and this guarantee persists in the presence of an arbitrary secondary market.  We view this as a practical solution that can be implemented with minimal effort: as long as the government can estimate {just one statistic, }
the expected average social value of 
{a carbon allowance, } 
they can mitigate the impact of speculation and other equilibrium effects of resale by {employing an appropriately-determined auction reserve.}
We further show that the welfare guarantees degrade gracefully as one adjusts the prices, 
meaning that unavoidable misspecifications in the price determination will have a modest effect on the welfare guarantees.

\subsection{Additional Related Work}
\label{sec:related}

\paragraph{Equilibria of Combined Markets} The challenges in analyzing equilibria in combined markets was acknowledged in \citet{haile2003auctions} due to the fact that there exist endogenously induced common value components in the auction.
In the simple single-item setting with winner posting prices as secondary markets,\footnote{In this model, the authors also assume that no information, especially the bids, are revealed in the secondary market to avoid the ratchet effect.}
\citet{hafalir2008asymmetric} characterized the equilibrium behavior of the agents in the combined market,
and \citet{hafalir2009revenue} adopted the characterization to show that
the expected welfare of the first-price auction with secondary markets
may decrease by a multiplicative factor of $\sfrac{2e}{(2e-1)}$.
In addition to the above discussions,
there are many papers
discussing various properties of the resale model in the economics literature,
including but not limited to the observation of bid shading in the auction \citep{pagnozzi2007bidding},
and the revenue ranking of the simple auctions \citep{lebrun2010revenue}.
See the survey of \citet{susin2017auctions} for more discussions on the equilibrium properties of the resale model.
Finally, there are several recent papers focusing on designing optimal mechanisms when the seller has no control over the secondary market.
\citet{carroll2019robustly} show that second price auction with reserve prices is the robustly revenue optimal mechanisms with unknown resale opportunities.
\citet{dworczak2020mechanism} considers the design of information released to the secondary markets and show that the information structure that induces truthful behaviors are cutoff rules.
He also provides sufficient conditions for simple information structure to be optimal.

\paragraph{Sequential Auctions} One closely related line of theoretical work is price of anarchy for sequential auctions, which also study subgame perfect equilibrium outcomes \citep{syrgkanis2012bseq,leme2012sequential}. In \citet{leme2012sequential}, the authors illustrate that although price of anarchy of the sequential composition of first-price auction is small for unit-demand agents, the result breaks for agents with submodular valuations, and the price of anarchy can be unbounded in the latter case. In contrast, our results indicate that for submodular valuations, a simultaneous first-price auction followed by any trade mechanism
will have constant price of anarchy for the combined market. The main difference that allows us to handle combinatorial auctions in sequential auction format is that all items are sold only in the first market, and the secondary market is only providing the platform for agents to retrade the items, rather than selling items sequentially, with each item sold once in one of the auctions.
Recently, \citet{eden2020price} bound the price of anarchy when each agent is subject to an externality from the allocation of the other agents. The authors motivate the externality by the resale model where those resale behaviors are assumed to be fixed exogenously, 
which is substantially different from our model where 
agents behaviors depend on the format of the combined market.

\paragraph{Price of Anarchy}
As discussed earlier, our techniques leverage smoothness and balanced pricing.  
For multi-unit auctions in particular, this theory has been used to derive equilibrium welfare bounds for different auction formats in isolation~\citep{de2013inefficiency,markakis2015uniform}, and we extend this analysis to settings with aftermarkets. The balanced pricing framework is a general approach for designing posted-price mechanisms in a broad class of allocation problems~\citep{kleinberg2012matroid,feldman2014combinatorial,dutting2020prophet}.  These constructions employ the theory of Prophet inequalities to bound the welfare obtained when buyers sequentially purchase their preferred bundles at the proposed prices, which are calculated using the distribution of buyer values.  Similar to smoothness, we extend the existing analysis to show that the welfare guarantees attainable through balanced pricing extend to settings with aftermarkets.

\paragraph{Carbon Markets} There is a rich literature exploring market and regulation-based techniques for reducing emissions and their effectiveness.  Here we discuss a small sampling of this literature, referring the reader to many excellent overviews such as \cite{Cotton15} or \cite{cramton2017global} for further details. 
\citet{weitzman74} asks whether it is better to control emissions via imposing standards (quantity regulation) or charging taxes (price regulation), and notes that prices tend to fare better when the social cost of emissions is close to linear, whereas quantity regulation can be preferable in the face of uncertainty when marginal costs are variable.  
We note that taxes share many similarities with the posted-pricing mechanism we study. \citet{KerrCramton98}, in turn, propose selling emission allowances in an auction (they suggest an ascending auction).  Their paper explicitly suggests these allowances be tradeable in aftermarkets to maximize liquidity.  \citet{goldner2020} study uniform-price auctions with price floors and ceilings, a common mechanism in practice, and prove welfare guarantees under certain conditions in the absence of aftermarkets. Our paper complements these by exploring the interplay of these aftermarkets and the primary auction and stating conditions under which welfare guarantees extend to the combined market.


\section{Preliminaries}
\label{sec:prelim}


\subsection{The Basic Setting}
\label{sec:basic}

For clarity we begin by describing a basic model focused on a multi-unit auction of identical items.  Our results actually apply to a more general model 
of combinatorial auctions
and different information structures (including the ability to purchase information about aggregate demand); we describe these extensions in Section~\ref{sec:extensions}.

\subsubsection{The Allocation Problem}

A seller initially holds a set $\items$ of $m$ identical items to be allocated among a set $\agents$ of $n$ buyers.  
A feasible allocation is a profile $\mathbf{x} = (x_1, \dotsc, x_n)$, where $x_i \in [m]$ is the number of items obtained by agent $i$ and $\sum_i x_i \leq m$.  We write $\feasibles$ for the set of feasible allocations.

Buyer $i$ has a private valuation function $v_i \colon [m] \to \reals_{\geq 0}$
where $\val_i(x_i)$ denotes buyer $i$'s value for obtaining $x_i$ items, normalized so that $\val_i(0) = 0$.  We emphasize that this is a consumption value.
Valuations are assumed to have non-increasing marginal valuations: for each $j \geq 1$, $v_i(j) - v_i(j-1)$ is non-negative and weakly decreasing in $j$.  We will sometimes refer to $v_i$ as the type of agent $i$. We write $\types = \times_i\types_i$ for the set of valuation profiles. {We assume that $\val_i$ is sampled independently from a known distribution $\dist_i$, and denote the prior product distribution over the valuations by $\dist = \times_i\dist_i$.}
The utility of agent $i$ given allocation $\alloc_i$
and total payment $\pay_i$
is $u_i(\alloc_i, \pay_i) = \val_i(\alloc_i) - \pay_i$.
Buyers are assumed to be risk-neutral and seek to maximize expected utility.

The \emph{welfare} of an allocation $\mathbf{x} \in \feasibles$ when the valuations are $\mathbf{v}$
is defined to be 
$\wel(\mathbf{v},\mathbf{x}) = \sum_i v_i(x_i)$.
For any valuation profile $\mathbf{v}$, 
let $\wel(\mathbf{v},\feasibles) = \sup_{\alloc\in \feasibles} \wel(\mathbf{v},\alloc) $ be the optimal (highest) welfare
given the valuation functions $\mathbf{v}$ and feasibility constraint $\feasibles$.
We say an allocation is \emph{efficient} if it achieves the optimal welfare. 
Let $\wel(\dist, \feasibles) = \expect[\mathbf{v}\sim\dist]{\wel(\mathbf{v},\feasibles)}$ be the expected optimal welfare.
When $\feasibles$ is clear from the context, 
we omit it in the notation and use $\wel(\mathbf{v}), \wel(\dist)$ to denote the optimal welfare and expected optimal welfare, respectively.

\subsubsection{Mechanisms}



Agents can acquire items by participating in an auction then trading among themselves in a secondary market.  We will formally describe both the auction and the secondary market 
as \emph{mechanisms}.
%
Formally, a mechanism $\mech =(\alloc^{\mech}, \pay^{\mech}): \actions \to \Delta(\feasibles \times \reals^n)$ is defined 
by an \emph{allocation rule} $\alloc^{\mech}: \actions \to \Delta(\feasibles)$ and a \emph{payment rule}
$\pay^{\mech}: \actions \to \reals^n$,  
where $\actions = \times_i \actions_i$
and $\actions_i$ is the action space of agent $i$ in the mechanism.
Thus, for action profile $\mathbf{a} =(a_1, a_2,\ldots,a_n)\in (A_1, A_2,\ldots,A_n)= A$ the \emph{outcome}
of the mechanism is the (randomized) allocation $\alloc^{\mech}(\mathbf{a})$, and each agent $i$ is charged (in expectation) a \emph{payment} of $\pay^{\mech}_i(\mathbf{a}) \geq 0$.  
The utility of agent $i$ with valuation $v_i$
when participating in the mechanism $\mech$ in which agents take actions $\mathbf{a} \in \actions$ is  
$u_i(\mech(\mathbf{a})) = \val_i(\alloc^{\mech}(\mathbf{a})) - \pay^{\mech}_i(\mathbf{a})$.

A mechanism $\mech$ with valuation distribution $\dist$ defines a game. A \emph{strategy} $\strategy_i: v_i \to \Delta(\action_i)$ for agent~$i$ is a mapping from her valuation $v_i$ to a distribution over her actions.
With slight abuse of notation denote by $\strategy_{-i}(\val_{-i})$ the profile of actions taken by agents other than $i$ when each $j\neq i$ has valuation $\val_j$.
A strategy $\strategy_i$ is a \emph{best response} for agent $i$ given strategies of the others $\strategy_{-i}$
if for any strategy $\strategy_i'$ it holds that 
$\expect{u_i(\mech((\strategy_i(\val_i), \strategy_{-i}(\val_{-i})))}\geq  
\expect{u_i(\mech((\strategy_i'(\val_i), \strategy_{-i}(\val_{-i})))}$
for every valuation $\val_i$, where the expectation is over the valuations of the other agents as well as any randomness in the mechanism and strategies. 
A profile of strategies $\strategy=(\strategy_1,\ldots,\strategy_n)$ is a \emph{Bayesian Nash equilibrium} (BNE) for mechanism $\mech$ with distribution $\dist$, if for every agent $i$, strategy $\strategy_i$ is a best response for agent $i$ given strategies of the others $\strategy_{-i}$.

By slightly overloading the notation, 
we also denote $\wel(\mech, \strategy, \dist)$ 
as the expected welfare obtained in mechanism $\mech$ using equilibrium strategy profile $\strategy$. 
Let the \emph{price of anarchy of mechanism $\mech$ within the family of distributions $\dists$} be
\begin{align*}
\poa(\mech, \dists) = 
\sup_{\dist\in \dists} \frac{\wel(\dist)}{\inf_{\strategy\in \BNE(\dist, \mech)}\{\wel(\mech, \strategy, \dist)\}}
\end{align*}
where $\BNE(\dist, \mech)$ is the set  of Bayesian Nash equilibria given distributions $\dist$ and mechanism~$\mech$.

\paragraph{Auctions}


	
We can describe multi-unit auctions as mechanisms.
For example, in most common multi-unit auctions, an action of bidder $i$ is a \emph{bid} 
of the form $(a_{i1},\dots,a_{im})\in A_i= \reals_{\geq 0}^{m}$ representing her $m$ marginal values, with $a_{i1}\geq \dots \geq a_{im} \geq 0$.  The agents simultaneously declare these bids to the auctioneer.  The $n \times m$ received marginal bids are then sorted from largest to smallest, and the
$m$ identical items are greedily allocated to the bidders of the $m$ highest marginal bids (breaking ties arbitrarily).  The two most common auction formats, uniform and discriminatory, then differ in how payments are calculated:
	\begin{itemize}
	\item
	\emph{discriminatory auction}: Each agent pays her winning bids.  That is, if agent $i$ wins $x_i$ items, then she pays her highest $x_i$ marginal bids: $p_i = \sum_{j=1}^{x_i} a_{ij}$. 
	\item \emph{uniform-price auction}: A common price $p$ per unit is chosen, and each agent pays $p$ for each license won.  This price $p$ is taken to lie between the $(m)$-th highest marginal bid and the $(m+1)$-st highest (within the $n \times m$ reported marginals).  In other words, $p$ lies between the highest losing bid and the lowest winning bid. In this paper we focus on the case where $p$ is the highest losing bid.  
	\end{itemize}

We will also be interested in a simple form of posted-price mechanism.  In the multi-unit setting, the auctioneer selects in advance a price $p$ per item.  The buyers are then approached sequentially in a fixed order.  Each buyer can choose to buy as many items as desired, up to the amount remaining, at a price of $p$ per item.  That is, if the buyers purchase in the order $1, 2, \dotsc, n$, then each bidder $i$ can choose any non-negative $x_i \leq m - \sum_{j < i}x_j$ and pays $px_i$.  Once all items are sold the mechanism ends.

	

\paragraph{Secondary Markets}
Informally, a secondary market allows users to trade items that they obtained from the auction.
For example, one might imagine that agents could offer to sell their items at a certain price, and other agents might choose to purchase at the suggested price (or not).  Similar to the auction, we will model the secondary market as a mechanism.  The starting point of the secondary market is the allocation picked by the auction, which is publicly revealed.  The secondary market is therefore parameterized by an allocation $\alloc \in \feasibles$, which we think of as the auction outcome.  We will tend to use $\mechs$ to refer to secondary market mechanisms, and in a slight abuse of notation we will write $\mechs(\mathbf{a}; \mathbf{x}) = (\alloc^{\mechs}(\mathbf{a}; \mathbf{x}), \pay^{\mechs}(\mathbf{a}; \mathbf{x}))$
for the allocation and payment rules of a secondary market $\mechs$ as a function of the initial allocation $\mathbf{x} \in \feasibles$.\footnote{In some secondary market formats it is more natural to think of actions being chosen sequentially rather than simultaneously.  E.g., in the example above, a seller first chooses a price then buyers choose whether to purchase.  One can model this by having a buyer's ``action'' be a mapping from all possible observations (e.g., prices) to a realized action (e.g., whether to buy).}

To capture our intuitive notion of a secondary market, we will introduce two mechanism properties that we will assume in secondary markets we consider.  First, we assume voluntary participation, which informally means that each agent can choose not to participate.  More formally, voluntary participation requires that each agent has an ``opt-out'' action that guarantees their utility is not reduced by the secondary market.

\begin{definition}
A secondary market $\mechs$ satisfies \emph{voluntary participation} if for each agent $i$, all valuations $\val_i$, and all feasible allocations $\mathbf{x}$, there exists an action $a^*_i$ such that, for any action profile $\mathbf{a}_{-i}$ of the other agents,
$\val_i(x_i) \leq u_i(\mechs((a^*_i, \mathbf{a}_{-i}); \mathbf{x}))$.
\end{definition}

We argue that this condition is quite mild.
For example, if the secondary market is one in which license holders can suggest take-it-or-leave-it prices, and trade happens if another user agrees to trade at that price, then a license holder might decide not to make an offer (``not participate"), and 
other agents can decide to decline any offer made (again, ``not participate"). Each agent can therefore ensure that her utility in the secondary market is the same as the utility obtained from the initial allocation $\mathbf{x}$.

We also assume that our secondary market satisfies weak budget balance, which means that it does not run a deficit with respect to payments.

\begin{definition}
A secondary market $\mechs$ satisfies \emph{weak budget balance} if 
$\sum_i \pay^{\mechs}_i(\mathbf{a}; \mathbf{x}) \geq 0$
for any action profile $\mathbf{a}$ and feasible allocation $\mathbf{x} \in \feasibles$.
\end{definition}

A mechanism that satisfies both voluntary participation and weak budget balance is called a \emph{voluntary-non-subsidized-trade mechanism}, or a \emph{trade mechanism} for short. 

\subsubsection{The Combined Market}


We are finally ready to formally model our setting of an auction followed by a secondary market.
We model this scenario as a two-round game $\game$ that consists of two mechanisms, $\mechf$ and $\mechs$, run sequentially.  We refer to $\mechf$ as the \emph{auction} and $\mechs$ as the \emph{secondary market}. 

In the first round of the game, the agents participate in the auction $\mechf$.
We denote the action space of $\mechf$ by $\asetfirst = \times_i \asetfirst_i$.
The agents simultaneously choose actions $\afirst \in \asetfirst$, resulting in outcome $\alloc^{\mechf}(\afirst)$ and payments $(\pay_i^{\mechf}(\afirst))_i$.  Each agent observes the outcome of the auction and her own payment.

The second round then starts and the agents participate in the secondary market $\mechs$.  The allocation $\alloc^{\mechf}(\afirst)$ from the auction is used as the initial allocation in the secondary market.
We denote the action space of $\mechs$ by $\asetsecond = \times_i \asetsecond_i$.
Note that the action spaces of the two mechanisms can be different, but they share a common set of feasible allocations.


To summarize, the timing of the two-round game 
$\game(\mechf, \mechs)$ proceeds as follows:  
\begin{enumerate}
    \item Each agent $i$ picks an action $\afirst_i\in \asetfirst_i$ simultaneously. Mechanism $\mechf$ runs on actions $\afirst$. 
    \item Each agent $i$ observes $\alloc^{\mechf}(\afirst)$ and $\pay^{\mechf}_i(\afirst)$.
    \item Each agent $i$ picks an action $\asecond_i\in \asetsecond_i$ simultaneously. 
    %
      Mechanism $\mechs$ runs on actions $\asecond$, starting from allocation $\alloc^{\mechf}(\afirst)$.
    \item The total payoff to agent $i$ in the combined market is $u_i(\mechs(\asecond;\alloc^{\mechf}(\afirst))) - \pay_i^{\mechf}(\afirst)$.\footnote{Note that this expression includes the payments from both the auction and the secondary market, as the secondary market transfers are included in the utility term.}
\end{enumerate}

Note that an instance of the two-round game $\game(\mechf, \mechs)$ naturally corresponds to a combined mechanism $\mechc$ which we denote by $\combinedGameSimple$, in which an action has two components: (1) an action $\afirst_i \in \asetfirst_i$ for the auction mechanisms, and (2) a mapping for each agent $i$ from the tuple of (allocation, payment) from the auction into an action for the secondary market.

\paragraph{Equilibria} The notions of BNE and PoA extend to a combined mechanism $\mechc$ as before.  We note that since an action of $\mechc$ encodes actions for both markets, the definition of BNE does not require that agents are best-responding in the secondary market given the auction's outcome.  This enables strategies with non-credible threats like ``refuse to trade with any competitors in the secondary market, even if it is beneficial to do so, unless they let me win at least $7$ items in the auction."  For multi-round games like ours, it is standard to refine the BNE solution concept and instead consider the more restrictive class of perfect Bayesian equilibria (PBE).
%
%
We formally define this equilibrium notion in Appendix~\ref{sec:sequential}.  Roughly speaking, a PBE requires (a) subgame perfection, where behavior in the secondary market is always rational given any auction outcome, and (b) that agents accurately update their beliefs after the auction outcome and behave in accordance with those beliefs in the secondary market.  As it turns out, our positive results about welfare hold at \emph{any} BNE, whether or not they satisfy these requirements.  So in particular our welfare bounds also hold for any 
PBE
as well.  Moreover, each example we use to illustrate a negative result will {not only be a BNE, but rather also}
be a PBE.  In fact, the perfect Bayesian equilibria we consider will also satisfy the stronger conditions of sequential equilibria; see Appendix~\ref{sec:sequential} for further discussion.



%

\subsection{Extensions}
\label{sec:extensions}


So far we have focused on the simpler setting for ease of notation and to more directly connect to the application of allocating emission licenses, but we will now describe two generalizations of the model.  {Most of our {positive} results in the remainder of the paper will actually be proven for this generalized setting.}\footnote{{In particular, our welfare bounds in Sections~\ref{sec:smooth} and~\ref{sec:pricing} apply in this generalized setting, assuming that agent valuations are submodular (which is a natural generalization of the assumption that agents have non-increasing marginal values).}} 

\paragraph{Combinatorial Allocation and Multiple Items} In the basic model, the items to be allocated are all identical, so each buyer is concerned only with the number of items she obtains.  
More generally, we can consider a \emph{combinatorial auction} scenario
where there is a set of (possibly different) goods to allocate and each buyer has a value for each possible combination of goods.  We then interpret an allocation $x_i$ to buyer $i$ as a subset of the available goods, $x_i \subseteq M$.  
An allocation profile $\mathbf{x} = (x_1, \dotsc, x_n)$ is then feasible if no item is double-allocated, meaning that $x_i \cap x_j = \emptyset$ for all $i \neq j$. 
The basic model is the special case where all of the items are identical {and agents have non-increasing marginal values for the items.} 
This generalization also captures scenarios where items of different types being sold alongside each other, such as licenses that apply to different calendar years or that permit different forms of emissions.  
The natural generalization of ``non-increasing marginal values'' is then that agent valuations are submodular, meaning that $v_i(S) + v_i(T) \geq v_i(S \cap T) + v_i(S \cup T)$ for all sets of items $S, T \subseteq M$.


\paragraph{Post-Auction Information Revealed}
In the basic model, after the auction but before the secondary market, each agent observes the outcome of the auction and her own auction payment. 
More generally, each agent might also observe some additional information about the auction before the secondary market begins.  For example, it may be that all agents' payments are revealed, or it might be that all bids are made public.  Formally, we can think of each buyer $i$ as observing a private signal $s_i \in S_i$ after the auction that can be correlated with $\asetfirst$, $\alloc^{\mechf}(\afirst)$, and $(\pay_i^{\mechf}(\afirst))_i$.  
In fact, we could also allow these signals to be correlated with the valuations of the agents, which allows the agents to receive additional information that even the auction has no direct access to.  We can write $\Gamma$ for the (possibly randomized) mapping from $\mathbf{v}$, $\afirst$, $\alloc^{\mechf}(\afirst)$, and $(\pay_i^{\mechf}(\afirst))_i$ to the profile of signals $(s_1, \dotsc, s_n)$ that the agents receive after the auction.  Under this generalization we would include $\Gamma$ in the description of the combined mechanism, so that $\mechc = \game(\mechf, \Gamma, \mechs)$.  In the basic setting the agents receive no signals, so we can think of $\Gamma$ as being the empty mapping that always returns a null signal.

\section{Welfare Loss and Secondary Markets}
\label{sec:example}

In this section we present an example showing that even for uniform-price auctions (which has high welfare in every equilibrium when runs in isolation), the presence of a secondary market can induce a perfect Bayesian equilibrium with low expected welfare.

\begin{theorem}
\label{thm:lower.bound}
There exists a valuation distribution $F$, a combined market $\mechc$ consisting of a uniform-price auction followed by a trade mechanism, and a PBE $\sigma$ that avoids undominated strategies such that $\wel(\mechc,\sigma,F) \leq \frac{1}{\Omega(\log(m))} \wel(F)$. 
\end{theorem}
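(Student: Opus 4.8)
The plan is to construct a concrete instance with $m$ items, one "speculator" (say agent $0$) and many "consumer" bidders, together with a simple take-it-or-leave-it resale mechanism, and to exhibit a PBE in undominated strategies where the speculator wins (almost) all the items and then resells them at monopoly prices. The key idea driving the $1/\Omega(\log m)$ gap is the equal-revenue-style distribution used in the introduction's single-item example (Example~\ref{exp:spa}): a monopolist facing a truncated equal-revenue demand earns only a constant while the efficient allocation earns $\Theta(\log H)$. I would scale the construction so that $H$ is polynomial in $m$, so that $\log H = \Theta(\log m)$. Concretely, I would have $m$ consumer bidders, bidder $i$ having a single-unit demand drawn from (a truncated equal-revenue distribution on $[1,H]$), plus the speculator who has value $0$ for consumption but anticipates reselling. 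The efficient allocation gives each consumer one item, yielding expected welfare $\Theta(m \log H) = \Theta(m\log m)$ (or one can normalize differently to get the ratio directly); the equilibrium I construct will yield only $O(m)$.

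The equilibrium to construct: the speculator bids very high (above $H$) on all $m$ marginal units, so she wins everything and pays the highest losing bid; the consumers bid $0$ (or their true values — I need to be careful to keep bids undominated, so consumers should bid truthfully, and then the speculator must bid just above the top consumer marginal to win). Wait — if consumers bid truthfully the speculator cannot profitably outbid all of them, so the right design is that consumers face a price effect: I want the consumers to *shade to zero* because in the combined market keeping their auction bids low reduces the uniform price, while they expect to reacquire items in the aftermarket anyway. That is the subtle part, and it mirrors the "demand reduction" intuition flagged in the introduction. In the aftermarket, the speculator, holding all $m$ items, offers them at a posted price $p$ and each consumer buys her one unit iff her value exceeds $p$; the revenue-maximizing $p$ against the truncated equal-revenue distribution is $\Theta(1)$ per unit sold in expectation, so total resale welfare (and total realized consumption value) is $\Theta(m)$, while $\wel(F) = \Theta(m\log m)$. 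One must verify: (a) the speculator's bid profile is a best response (bidding high to guarantee winning all units is optimal given she will monopolize resale — and her auction payment is $0$ since all consumers bid $0$); (b) each consumer's bid of $0$ is a best response and is undominated — this requires showing that given the speculator wins all items regardless and resells at price $p$, a consumer's auction bid only affects the uniform price she might pay, so bidding $0$ weakly dominates bidding higher in this subgame, while still being an undominated action overall because there exist off-path profiles where bidding $0$ is strictly optimal; (c) the off-path beliefs and subgame behavior make this a genuine PBE (indeed a sequential equilibrium), which requires specifying consistent beliefs after any auction outcome and checking the resale subgame is played rationally — the take-it-or-leave-it resale game has essentially dominant strategies for the buyers, which makes this step routine, as the paper's footnote about the simple secondary market suggests.

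The main obstacle I anticipate is step (b) combined with the undominated-strategies requirement: I must ensure the consumers' equilibrium bids are not weakly dominated (the whole point of the theorem is to beat the trivial "one agent bids infinity" non-robustness), and simultaneously ensure the speculator actually wins — these pull in opposite directions, since if consumers bid their true positive marginals the speculator can't cheaply win everything. The resolution is likely to let consumers bid $0$ (which is undominated for a unit-demand bidder in a uniform-price auction, since bidding above one's value is dominated and bidding exactly $0$ is never dominated when the price can equal one's value), and to check carefully that bidding $0$ is a best response in the combined game — a consumer who unilaterally raises her bid either still loses the auction (no change) or wins one unit but then pays the uniform price while forgoing nothing in resale, which is not profitable given how the price is set by the other $0$-bids and the speculator's high bids. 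A secondary subtlety is pinning down exactly which uniform price (highest losing bid) arises and making sure it is $0$ on-path so the speculator pays nothing; this follows if at least $m$ of the submitted marginal bids (those of the $m$ consumers) are $0$. I would also double-check the truncation level $H = \mathrm{poly}(m)$ is chosen so the equal-revenue tail contributes $\Theta(\log m)$ to the efficient welfare without the truncation atom at $H$ distorting the monopoly computation, and normalize so the final ratio is cleanly $1/\Omega(\log m)$.
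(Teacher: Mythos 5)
Your high-level plan --- a zero-value speculator who corners the uniform-price auction and then resells at monopoly prices against equal-revenue-style demand, with consumers shading to zero --- is the same engine the paper uses (Example~\ref{example:IEWDS}). But the specific construction you propose fails precisely on the point the theorem is designed to address: the speculator's bid is weakly dominated. With $m$ unit-demand consumers whose values are equal-revenue on $[1,H]$, the speculator's resale revenue from holding $k$ units is at most $k$ (one unit of revenue per displaced consumer, since the monopoly revenue against a single equal-revenue buyer is $1$), so the marginal resale value of every unit is at most $1$. Bidding above $H$ per unit is then weakly dominated by bidding (roughly) $1$ per unit: whenever the two strategies produce different outcomes, the high bid wins extra units at a uniform price exceeding $1$ (hence above their marginal resale value) and weakly raises the price paid on all other units, and it is strictly worse against, say, all consumers bidding $2$. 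Yet you cannot lower the speculator's bid to $\approx 1$, because then a consumer with $v_i\in(1,H)$ profitably deviates from bidding $0$ to winning a unit at price $\approx 1$. This circularity is exactly the tension you flag in step (b), but you locate it in the consumers' strategies; it actually bites on the speculator's side, and your construction has no way out of it. The paper resolves it with two ingredients you are missing: (i) the consumers ($A$ and $B$) are multi-unit demanders who \emph{win their first unit at auction at price $0$}, so their deviation to grab additional units is deterred by demand reduction (it would raise the price on the unit they already win to $1$) rather than by the speculator pricing them out of the market entirely; and (ii) the speculator's per-unit bid of $1$ vastly exceeds her per-unit resale revenue from $B$ (only $1/(2m)$), and this overbid is rationalized --- and shown undominated in Proposition~\ref{prop:UDS} --- by an engineered complementarity: winning $m-2$ units crowds agent $A$ out of a second unit, which preserves a sale to $A$ worth $1$ in the aftermarket.

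A second, smaller gap: against a (truncated) equal-revenue distribution the monopolist reseller is \emph{indifferent} among all posted prices, including $p=1$, which sells to everyone and recovers the full welfare $\Theta(m\log H)$. So your claim that revenue maximization forces resale welfare down to $\Theta(m)$ does not follow; you would have to either select the welfare-destroying price among the ties (weakening the equilibrium to one sustained by indifference) or, as the paper does, perturb the distribution (the CDF $F_B(z)=1-\frac{1}{1+(2m-1)z}$ with a small mass above $1$) so that the inefficient price is the \emph{unique} revenue maximizer. The uniqueness is also what makes the aftermarket dominance-solvable, which the paper needs to reduce the combined market to a one-shot game with well-defined payoffs before arguing that the auction bids are undominated.
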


The equilibrium $\sigma$ that we use to exhibit Theorem~\ref{thm:lower.bound} will have additional nice properties, such as avoiding weakly dominated strategies.\footnote{This extends a ``no-overbidding'' refinement commonly used when considering auctions in isolation.}  Before discussing this in detail, we first describe the example.





\begin{example}\label{example:IEWDS}
There are $m > 3$ units to be allocated and $3$ agents named $A,B$ and $C$.  
Agent $A$ has marginal value $2$ for the first unit, value uniformly sample from $[1,1.5]$ for the second unit, and $0$ for any subsequent units. 
Agent $B$ has the following distribution over valuations.  She always has marginal value $2$ for the first unit acquired, then a value $z_B > 0$ for each subsequent unit acquired. 
Here $z_B$ is a random variable drawn from a distribution with CDF $F_B(z) = 1-\frac{1}{1+(2m-1)z}$ for $z \in [0, 1)$ and $F_B(z) = z-\frac{1}{2m}$ for $z\in [1,1+\frac{1}{2m}]$. 
Note that given buyer value with distribution $F$,
the unique revenue maximizing price is $p=1$ with {expected revenue of} $p \cdot \Pr[z_B \geq p] = 1/(2m)$ {per-unit}.
Agent $C$ has value $0$ for any number of units; we refer to agent $C$ as a \emph{speculator}.


The primary auction is a uniform-price auction with standard bidding.
%
In the secondary market, the speculator $C$ can put some or all of the items that she has acquired in the auction up for sale, at a take-it-or-leave-it price of her choice. Agent $A$ has the first opportunity to purchase any (or all) of the items made available by $C$.  
Then agent $B$ has the option to purchase any items that are still available.
\end{example}

We now describe a particular choice of bidding strategies in the primary auction and agent behavior in the secondary market.  We will then prove that these form a perfect Bayesian equilibrium in the combined market.
In the auction, agents $A$ and $B$ each bid $2$ for exactly a single unit and $0$ for the rest of the units. The speculator $C$ bids $1$ 
for $m-2$ units and $0$ for the rest of the units.
Then, in the secondary market, the speculator offers all the units she has acquired for the price of $1$. Agent $A$ buys one unit, and agent $B$ buys all $m-3$ remaining units if $z_B\geq1$, and nothing otherwise.
Note that under this behaviour the price in the auction is 0. 
Agent $C$ has expected utility of $1+(m-3)/(2m)$, as she makes $1$ from selling to $A$, and $(m-3)/(2m)$ in expectation from selling to $B$ {(selling the remaining $m-3$ items to $B$  when $z_B\geq 1$).}

Consider the social welfare obtained in the combined market.  The total expected welfare obtained under this behavior is at most $6$ (as $2+1.5+2+\frac{1}{2m}\cdot (m-3)\cdot \left(1+\frac{1}{2m}\right)\leq 6$), whereas the optimal expected welfare is at least $4 + (m-2)\E[z]$. 
It is easy to compute that $\E[z] \geq \int_0^{1} \frac{1}{1+(2m-1)z}\dd z= \Theta((\log m) / m)$, and hence the optimal expected welfare is $\Theta(\log m)$.
Thus, if this behavior occurs at equilibrium, this implies that the price of anarchy for this combined market is $\Theta(\log m)$, {growing unboundedly large with~$m$}.



{
To complete the proof of Theorem~\ref{thm:lower.bound} we must show that this behavior forms a perfect Bayesian equilibrium.  To see this, first note that the secondary market is dominance solvable: agents $A$ and $B$ should always accept utility-improving 
trades offered by agent $C$, and hence it is a dominant strategy for agent $C$ to offer a revenue-maximizing price.  So one can equivalently think of the combined market as a one-shot auction game where payoffs of the primary auction take into account the outcomes of the secondary market, {which are unambiguous (up to zero-measure ties that do not impact utilities)}. 
We can then show that the bidding strategies described above form an equilibrium of this implied one-shot game.  Moreover, those bidding strategies are not weakly dominated, again thinking of them as strategies in an implied single-shot game.\footnote{Strategy $\sigma_i$ is said to be weakly dominated by strategy $\sigma'_i$ if $\sigma'_i$ results in weakly better utility for agent $i$ for any actions that could be taken by the other agents, and strictly better utility in at least one instance.}  The formal proof of Proposition~\ref{prop:UDS} appears in Appendix~\ref{sec:sequential}.
}

\begin{restatable}{proposition}{propsequential}\label{prop:UDS}
The above behaviour in the combined market forms a perfect Bayesian equilibrium.
{Moreover, no agent is using a weakly dominated strategy in the primary auction with respect to the payoffs implied by the secondary market.}
\end{restatable}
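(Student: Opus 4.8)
The plan is to collapse the two-round combined market into an implied one-shot auction game and then verify both assertions there.

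\emph{Reducing to an implied one-shot game.} First I would argue that, given any auction allocation, the secondary market has an essentially unique sensible play: agents $A$ and $B$ have (weakly) dominant strategies, namely to buy exactly those offered units whose marginal value is at least the posted price (breaking indifference as prescribed, which matters only on a measure-zero event and never changes utilities), so $C$'s best response is pinned down by her posterior over $A$'s and $B$'s values: offer all her units at the revenue-maximizing price for that posterior. On the equilibrium path $A$ and $B$ bid $(2,0,\dots,0)$ regardless of their realized values, so the allocation $(1,1,m-2)$ is uninformative and $C$'s posterior is the prior; moreover $A$ buys exactly one unit at any price $p\le1$, so $C$'s expected revenue equals $p$ plus $(m-3)$ times $B$'s per-unit revenue at price $p$, and since both summands are maximized over $[0,1]$ at $p=1$ (with a short check ruling out $p>1$) the revenue computation of Example~\ref{example:IEWDS} gives the unique optimal price $p=1$. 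Taking $C$'s beliefs at every information set (on- or off-path) to be the prior — which is Bayes-consistent and, because $A$'s and $B$'s equilibrium bids are type-independent, is also the limit of beliefs under uniform trembles, so the profile is even a sequential equilibrium — yields a well-defined implied one-shot game $G$ whose payoffs already incorporate resale. It then remains to check (i) the prescribed bids form a Nash equilibrium of $G$, and (ii) no prescribed bid is weakly dominated in $G$.

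\emph{Nash equilibrium of $G$.} Under the profile the uniform price is $0$: $A$ and $B$ each win one free unit, $C$ wins $m-2$ free units, and then $C$ resells one unit to $A$ at price $1$ and $m-3$ units to $B$ at price $1$ whenever $z_B\ge1$. I would rule out unilateral auction deviations case by case. If $A$ raises her first-unit bid nothing changes; if she lowers it she forfeits a free unit (and may let $C$ extract more in resale); if she bids positively on a later unit, that bid either becomes the highest losing bid — so she pays a positive uniform price on the unit she keeps — or wins an extra unit at uniform price $1$, which is strictly worse than acquiring it by resale at price $1$ while the first unit stays free (demand reduction). The argument for $B$ is identical. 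For $C$: lowering a bid on one of her first $m-2$ units risks losing it to a $0$-bid under tie-breaking, strictly reducing expected resale revenue with no offsetting gain; bidding positively on fewer than $m-2$ units can, at best under favorable tie-breaking, still yield $m-2$ units, and otherwise strictly fewer; and bidding positively on more than $m-2$ units forces a losing positive bid to set the clearing price, so $C$ pays about $1$ per unit for $\Theta(m)$ units whose expected resale value is only $\Theta(1/m)$ each. Hence no profitable deviation exists.

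\emph{No weakly dominated bid in $G$.} Here domination is the ex-post notion of the proposition's footnote: fix the opponents' bids together with their realized values (equivalently their resale responses). Each agent bids her implied-game value on the first unit — $2$ for $A$ and $B$, and $1$ for $C$ (who can always resell one unit to $A$ at price $1$ since $z_A\ge1$ surely) — and no more than her ex-post value on any later unit; crucially $C$'s bid of $1$ on units $2,\dots,m-2$ is \emph{not} overbidding, because ex post each such unit is worth $1$ to $C$ whenever $z_B\ge1$. For any alternative bid $b_i'$ I would exhibit a profile of opponents' bids and values against which the prescribed bid strictly outperforms $b_i'$: if $b_i'$ bids strictly less than the prescribed bid on some unit $j$ where the prescribed bid is positive, take competing marginals just below the prescribed value on unit $j$ and a realization in which unit $j$ attains its full ex-post value (for $C$ pick $z_B\ge1$; for $A$ or $B$ this is automatic), so the prescribed bid wins unit $j$ at a profitable price while $b_i'$ loses it; if $b_i'$ bids strictly more on some unit, or bids positively on a unit where the prescribed bid is $0$, take a realization in which that unit has ex-post value $0$ and a competing marginal inside the over-bidding gap, so $b_i'$ either wins the unit at a loss or its losing bid inflates the uniform price. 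Thus no $b_i'$ weakly dominates the prescribed bid. The delicate point throughout — and where I expect the real work — is exactly this interplay between the uniform-price rule and the resale stage: a losing marginal bid can set the clearing price, so the ``demand-reduction'' incentive to bid low is traded off against the ``win-more-units-to-resell'' incentive, which is why $C$ bids $0$ (not $1$) on units she does not want, why $A$ and $B$ bid $0$ on their second units, and why the correct domination notion must be ex post so that $C$'s bid of $1$ on the inframarginal units survives; pinning down the unambiguous continuation (off-path beliefs and tie-breaking) is the other place needing care, although on the equilibrium path there are no ties at all.
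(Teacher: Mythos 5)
Your overall architecture matches the paper's: dominance-solve the aftermarket (so $C$'s posted price of $1$ is pinned down under prior-preserving beliefs, which are trembling-consistent because the equilibrium bids are type-independent), pass to the implied one-shot game, verify the Nash property by the price-externality argument, and then certify undominatedness by exhibiting, for each alternative bid, an opponent profile against which the prescribed bid is strictly better. The Nash verification and the witnesses for agents $A$ and $B$ are essentially the paper's.

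The concrete gap is in the undominatedness witness for agent $C$ against underbidding. You propose to take ``competing marginals just below the prescribed value on unit $j$,'' i.e.\ just below $1$, and argue the prescribed bid wins unit $j$ at a profitable price. This fails because of the clearing-price externality you yourself flag: if $C$'s deviation bids $b<1$ on one unit and the competing marginals sit at $1-\delta$, then under the prescribed bid the uniform price jumps to $1-\delta$ on \emph{all} $m-2$ units $C$ wins (she pays roughly $m-2$ and resells at exactly $1$, for utility $\approx \delta(m-2)$), whereas the deviator's losing bid $b$ sets the price at $b$, so she pays $b$ per unit for $m-3$ units and pockets margin $1-b$ on each when $z_B\ge 1$ — strictly \emph{more} than the prescribed bid for small $\delta$. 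The paper avoids this by placing $A$'s competing marginals at $b+\epsilon$, just \emph{above the deviated bid} rather than just below $1$: then the clearing price is essentially unchanged ($b+\epsilon$ versus $b$) across the two scenarios, $B$-revenue is identical (the same $m-3$ units are offered to $B$ either way), and the entire difference is isolated to the one unit that shifts from $C$ to $A$ — which destroys the guaranteed sale to $A$ worth $1$ while costing only $b+\epsilon<1$. Relatedly, your rationale that $C$'s bid of $1$ is ``not overbidding because ex post each unit is worth $1$ when $z_B\ge 1$'' is not what carries the argument: the decisive force is the complementarity with the sale to $A$ (losing any one unit to $A$ forfeits a sure profit of $1$), not the ex-post resale value to $B$. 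With the witness profile corrected to the paper's choice, your proof goes through.
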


\paragraph{Discussion} 
Let's interpret this example.  {One thing to notice is that, at equilibrium, the speculator is placing a very high bid for a very large number of items; much higher than the revenue she obtains in the secondary market.  
Of course, the speculator can afford these items because of the low price, but is this a ``reasonable" strategy? At first glance it seems that this sort of behavior is a form of bullying that should be excluded by removing weakly dominated strategies (similar to overbidding in a second-price auction).
But we note that, at equilibrium, the speculator behaves as a monopolist in the secondary market.  She generates a modest amount of revenue from each license sold to agent $B$ (namely, $1/(2m)$ each), plus a large amount of revenue (revenue of $1$) by selling a license to agent $A$.  However, this sale to agent $A$ can only occur if agent $A$ obtains fewer than $2$ items at auction.
This creates an extra incentive for the speculator $C$ to obtain many items, to prevent agent $A$ from obtaining a second one.  This can cause the items to appear complementary to speculator $C$, depending on the bidding behavior of agent $A$: obtaining one fewer items could dramatically reduce $C$'s utility if that one license is won by agent $A$ instead.
This rationalizes the overbidding that occurs at equilibrium in the primary auction, where the speculator makes an effective bid much higher than her obtained revenue in the secondary market.  It is for this reason that the overbidding behavior of the speculator is not {weakly dominated}.
}

Another implication of speculator $C$'s monopolistic behavior is that she posts a high price that distorts the allocation to agent $B$.  Although agent $B$ has high expected welfare for the goods that agent $C$ holds, the speculator $C$ maximizes revenue by setting the probability of trade very low and significantly reducing welfare.

Finally, we note that bidders $A$ and $B$ are systematically under-bidding in this equilibrium.  This is driven by demand reduction effects, where the bidders are strictly incentivized to underbid in order to keep prices low.  Importantly, this behavior is not driven by indifference, and {is undominated}.



In this example a single speculator is obtaining nearly all of the items in the auction, and the bound is logarithmic in the number of items he acquires.
In Appendix~\ref{sec:lower.bound.gamma} we give a modified example of a market and PBE {in weakly undominated strategies} in which no single agent obtains more than a $\gamma$ fraction of the items in the auction, and the welfare gap is $\Omega(\log(\gamma m))$. Thus,
the lower bound does not hinge on a single speculator becoming a monopolist in the aftermarket, but rather degrades gracefully with the market power that is attained by any single speculator. In particular, the inefficiency persists even though multiple speculators obtain a large quantity of items, 
and there is no one speculator monopolizing the aftermarket.

\section{Price of Anarchy via Smooth Framework}
\label{sec:smooth}


In the previous section we saw that the expected welfare of a uniform-price auction may decrease catastrophically when there is a secondary market, even in ``natural'' equilibria that are sequentially rational and avoid weakly dominated strategies. {Can the welfare loss due to aftermarkets be bounded for other auction formats?}
Unfortunately, explicitly characterizing the associated welfare loss is a laborious task: it requires one to construct and analyze the {equilibria in the combined market, }
which can depend on the agents' distributions in subtle ways.

In this section, we circumvent the challenges of explicitly characterizing {all} 
equilibrium strategies
by showing that while adding a secondary market might harm welfare, 
the worst-case welfare guarantees of several classical mechanisms (including discriminatory auctions) will not decrease in the combined market, as long as the auction mechanism satisfies certain smoothness properties.  In other words, while the equilibrium welfare may decrease in particular market instances, worst-case guarantees 
 are retained for smooth mechanisms.  The following definition captures the notion of smoothness we require.

\begin{definition}[\citealp{syrgkanis2013composable}]\label{def:smooth}
Auction $\mech$ with action space $\actions$ is \emph{$(\lambda,\mu)$-smooth} for $\lambda>0$ and $ \mu\geq 1$,
if for any valuation profile $\mathbf{v}$,
there exists action distributions $\{\dista_i(\mathbf{v})\}_{i\in [n]}$ such that
for any action profile $\action\in \actions$,
\begin{align*}
\sum_{i\in[n]} \expect[\action'_i\sim \dista_i(\mathbf{v})]{\util_i(\mech(\action'_i, \action_{-i}))} 
\geq \lambda \cdot\wel(\mathbf{v}) - \mu \cdot\rev(\action;\mech)
\end{align*}
\end{definition}

It is known that a smooth auction in isolation achieves approximately optimal welfare at any equilibrium.

\begin{proposition}[\citealp{roughgarden2012price, syrgkanis2013composable}]\label{thm:smooth imply poa}
Let $\dists^{\Pi}$ be the family of all possible product type distributions.
If a mechanism $\mech$ is $(\lambda,\mu)$-smooth for $\lambda>0$ and $ \mu\geq 1$,
then 
the price of anarchy of $\mech$ within the family of distributions $\dists^{\Pi}$ is at most $\frac{\mu}{\lambda}$,
i.e., $\poa(\mech, \dists^{\Pi}) \leq \frac{\mu}{\lambda}$.
\end{proposition}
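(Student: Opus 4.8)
The plan is to reproduce the Bayesian smoothness-to-price-of-anarchy reduction of \citet{roughgarden2012price,syrgkanis2013composable}. Fix a product distribution $\dist\in\dists^{\Pi}$ and an arbitrary Bayes--Nash equilibrium $\strategy$ of $\mech$ under $\dist$; write $SW:=\wel(\mech,\strategy,\dist)$ for its expected welfare and $R:=\E_{\mathbf{v}\sim\dist}[\rev(\strategy(\mathbf{v});\mech)]$ for its expected revenue. It suffices to show $SW\geq(\lambda/\mu)\,\wel(\dist)$. The starting observation is the welfare identity: at any valuation profile $\mathbf{v}$ and action profile $\action$, payments are internal transfers between the bidders and the seller, so the welfare of the realized allocation equals $\sum_i\util_i(\mech(\action))+\rev(\action;\mech)$; taking expectations at equilibrium gives $\E_{\mathbf{v}}[\sum_i\util_i(\mech(\strategy(\mathbf{v})))]=SW-R$.

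First I would choose, for each agent $i$, a deviation that lets us invoke the smoothness inequality of \cref{def:smooth}. Let $\mathbf{w}\sim\dist$ be an independent ``ghost'' profile. Agent $i$, knowing only her true value $v_i$, deviates by drawing fresh ghost values $\mathbf{w}_{-i}\sim\dist_{-i}$ and playing a random action from $\dista_i(v_i,\mathbf{w}_{-i})$ --- the smoothness deviation distribution for the hybrid profile that uses her true value in coordinate $i$ and the ghost values elsewhere. Applying the equilibrium condition for each $i$ (the deviation is not profitable in expectation over the opponents' types and all randomness), then taking $\E_{v_i}$ and summing over $i$, yields $SW-R\geq\sum_i\E_{\mathbf{v},\mathbf{w}}[\E_{a'_i\sim\dista_i(v_i,\mathbf{w}_{-i})}[\util_i(\mech(a'_i,\strategy_{-i}(\mathbf{v}_{-i})))]]$, where a single ghost profile $\mathbf{w}$ may be shared across all $i$ because its restriction to any set of coordinates has the correct product law.

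The crux --- and the step I expect to require the most care --- is a relabeling that decouples the ghost profile feeding the deviation from the true profile governing the opponents' play. Because $v_i$ and $w_i$ are i.i.d.\ and independent of everything else, swapping those two coordinates preserves the joint law, so each summand above equals $\E_{\mathbf{v},\mathbf{w}}[\E_{a'_i\sim\dista_i(\mathbf{w})}[\util_i(\mech(a'_i,\strategy_{-i}(\mathbf{v}_{-i})))]]$ --- now the deviation distribution is evaluated at the full ghost profile $\mathbf{w}$ and agent $i$'s utility is measured with respect to $w_i$, consistent with how \cref{def:smooth} measures utilities for the profile $\mathbf{w}$. Fixing the pair $(\mathbf{v},\mathbf{w})$, the opponents' action profile is $\strategy_{-i}(\mathbf{v}_{-i})$, i.e.\ the non-$i$ coordinates of the single realized profile $\strategy(\mathbf{v})$, so \cref{def:smooth} applied to valuation profile $\mathbf{w}$ and action profile $\strategy(\mathbf{v})$ gives $\sum_i\E_{a'_i\sim\dista_i(\mathbf{w})}[\util_i(\mech(a'_i,\strategy_{-i}(\mathbf{v}_{-i})))]\geq\lambda\,\wel(\mathbf{w})-\mu\,\rev(\strategy(\mathbf{v});\mech)$. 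Taking expectations over $\mathbf{v},\mathbf{w}$ and using $\E_{\mathbf{w}}[\wel(\mathbf{w})]=\wel(\dist)$ turns the right-hand side into $\lambda\,\wel(\dist)-\mu R$, hence $SW-R\geq\lambda\,\wel(\dist)-\mu R$.

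Finally I would close the chain of inequalities. Since in these auction formats a bidder can always secure non-negative utility (e.g.\ by bidding zero), equilibrium utilities are non-negative and thus $R\leq SW$ (equivalently $\E_{\mathbf{v}}[\sum_i\util_i(\mech(\strategy(\mathbf{v})))]\geq0$). Combined with $\mu\geq1$ this gives $SW\geq\lambda\,\wel(\dist)-(\mu-1)R\geq\lambda\,\wel(\dist)-(\mu-1)SW$, i.e.\ $\mu\,SW\geq\lambda\,\wel(\dist)$, and since $\dist\in\dists^{\Pi}$ was arbitrary, $\poa(\mech,\dists^{\Pi})\leq\mu/\lambda$. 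As a sanity check I would first run the full-information case (a pure Nash equilibrium of $\mech$ at a fixed $\mathbf{v}$), where $\mathbf{w}=\mathbf{v}$, no resampling is needed, and the argument collapses to: equilibrium implies $\sum_i\util_i\geq\sum_i\E_{\dista_i(\mathbf{v})}[\util_i]\geq\lambda\,\wel(\mathbf{v})-\mu\,\rev$, then add $\rev$ and use $\rev\leq$ welfare. The genuinely new ingredient over that warm-up, and the likely source of any slip, is arranging the resampling deviation and the $v_i\leftrightarrow w_i$ exchange so that the ghost profile feeding $\dista_i$ is common to all $i$ while the opponents keep playing $\strategy$ on the single true profile.
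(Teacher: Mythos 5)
The paper does not actually prove this proposition---it imports it from \citet{roughgarden2012price} and \citet{syrgkanis2013composable}---and your argument is a correct reconstruction of the standard Bayesian extension theorem from those references: the resampled ghost-profile deviation $a'_i\sim \dista_i(v_i,\mathbf{w}_{-i})$, the $v_i\leftrightarrow w_i$ exchange that lets you apply \cref{def:smooth} pointwise to the valuation profile $\mathbf{w}$ and the action profile $\strategy(\mathbf{v})$, and the closing step via non-negative equilibrium utilities (so expected revenue is at most expected welfare) are exactly the canonical proof. I see no gaps; the only implicit assumption, which you correctly flag, is that each bidder has an action guaranteeing non-negative utility, which is standard and holds for the auction formats considered here.
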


{
We now show the main result of this section: if a smooth auction is followed by a
secondary market that satisfies voluntary participation and weak budget balance, then the combined market is smooth as well,
and hence
the price of anarchy is bounded for product type distributions.
}

\begin{theorem}\label{thm:poa combined market}
Let $\dists^{\Pi}$ be the family of all possible product type distributions.
For any signaling protocol $\Gamma$ and any trade mechanism $\mechs$ in the secondary market,
if an auction mechanism $\mech$ is $(\lambda,\mu)$-smooth for $\lambda\in (0,1]$ and $ \mu\geq 1$,
the combined mechanism $\mechc = \game(\mech, \Gamma, \mechs)$
is $(\lambda,\mu)$-smooth.
Thus, the price of anarchy of $\mechc$ within the family of distributions $\dists^{\Pi}$ 
for the combined market
is at most $\frac{\mu}{\lambda}$,
i.e., $\poa(\mech, \dists^{\Pi}) \leq \frac{\mu}{\lambda}$.
\end{theorem}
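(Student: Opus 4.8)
The plan is to show directly that the combined mechanism $\mechc = \game(\mech, \Gamma, \mechs)$ inherits the $(\lambda,\mu)$-smoothness of $\mech$; the price-of-anarchy bound then follows immediately from Proposition~\ref{thm:smooth imply poa}. Fix a valuation profile $\mathbf{v}$, and let $\{\dista_i(\mathbf{v})\}_{i\in[n]}$ be the action distributions witnessing smoothness of the auction $\mech$. The key idea is that a \emph{deviation in the combined market} can be constructed from a deviation in the auction by appending the ``opt-out'' behavior in the secondary market: agent $i$ draws an auction bid $\action'_i \sim \dista_i(\mathbf{v})$ and then, in the secondary market, plays the voluntary-participation action $a^*_i$ guaranteed by the definition of a trade mechanism, regardless of the signal she receives from $\Gamma$. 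Call this combined-market deviation $\dista^C_i(\mathbf{v})$.

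The main estimate is then a chain of inequalities bounding $\sum_i \expect[\action'_i \sim \dista^C_i(\mathbf{v})]{\util_i(\mechc(\action'_i, \action^C_{-i}))}$ from below. First, since agent $i$ opts out in the secondary market, voluntary participation gives that her combined-market utility is at least her utility from the auction-stage allocation $x_i^{\mech}(\action'_i, \action_{-i})$ minus her auction payment $p_i^{\mech}(\action'_i, \action_{-i})$, i.e.\ at least $\util_i(\mech(\action'_i, \action_{-i}))$, where $\action_{-i}$ is the auction component of $\action^C_{-i}$; crucially this holds \emph{pointwise} over the other agents' full combined-market actions and over the signals, because $a^*_i$ is a uniform opt-out. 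Summing over $i$ and taking expectations over the $\action'_i$, the smoothness inequality for $\mech$ yields a lower bound of $\lambda \cdot \wel(\mathbf{v}) - \mu \cdot \rev(\action;\mech)$, where $\rev(\action;\mech)$ is the auction revenue at the (arbitrary) combined-market action profile $\action^C$. The final step is to relate this to the revenue of the combined market: by weak budget balance of $\mechs$, the net transfers in the secondary market are nonnegative, so $\rev(\action^C;\mechc) \geq \rev(\action;\mech)$, hence $-\mu\rev(\action;\mech) \geq -\mu\rev(\action^C;\mechc)$ (using $\mu \geq 1 > 0$). This gives exactly the $(\lambda,\mu)$-smoothness inequality for $\mechc$.

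I expect the main obstacle to be bookkeeping around what an ``action'' in the combined mechanism is and ensuring the deviation is well-defined: a combined-market action for agent $i$ is a pair consisting of an auction bid together with a mapping from her post-auction observation (the outcome, her payment, and the signal $s_i$ from $\Gamma$) to a secondary-market action, so $\dista^C_i(\mathbf{v})$ must specify this mapping — here it is the constant map to $a^*_i$, which is clean. The one subtlety requiring care is that $a^*_i$ from the voluntary-participation definition may depend on the realized auction allocation $\mathbf{x}$; since the deviating agent observes $\mathbf{x}$ before acting in the secondary market, she can select the appropriate $a^*_i(\val_i, \mathbf{x})$, so this causes no problem. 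A second point worth stating explicitly is that the argument does not require the other agents to be at equilibrium or to behave rationally in the secondary market — the bound holds for every combined-market action profile $\action^C_{-i}$ — which is precisely why the conclusion applies to all BNE and a fortiori to all PBE. No part of the argument uses independence of the prior, so the smoothness conclusion itself is distribution-free; independence enters only through the invocation of Proposition~\ref{thm:smooth imply poa} to convert smoothness into the PoA bound over $\dists^{\Pi}$.
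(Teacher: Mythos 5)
Your proposal is correct and follows essentially the same route as the paper's proof in Appendix~\ref{apx:info acquire}: construct the combined-market deviation by sampling the auction deviation from $\dista_i(\mathbf{v})$ and appending the voluntary-participation opt-out action, then chain voluntary participation, the auction's smoothness inequality, and weak budget balance to get $(\lambda,\mu)$-smoothness of $\mechc$, concluding via Proposition~\ref{thm:smooth imply poa}. Your explicit remarks on the allocation-dependence of the opt-out action and on the argument holding pointwise over arbitrary opponent behavior are accurate refinements of what the paper states more tersely.
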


The proof of Theorem~\ref{thm:poa combined market}
is given in \cref{apx:info acquire}.
Here we comment on the implications of the theorem. 
First, note that the welfare bound does not depend on the details of the information revelation structure $\Gamma$ or the trade mechanism $\mechs$ adopted in the secondary market, so the bounds hold for any choice of each.
Moreover, our reduction framework does not require refinements on the equilibrium such as sequential equilibrium in the combined market to show that the price of anarchy is small --- the result holds for \emph{any} Bayes-Nash equilibrium.

We also note that \Cref{thm:poa combined market} extends directly to settings with multiple secondary markets executed sequentially, with any information released between each market.
This is because combining a smooth auction with a trade mechanism results with a new smooth mechanism (with the same parameters), which we can now view as a smooth auction to be combined with the next trade mechanism.


\Cref{thm:poa combined market} establishes a robust welfare guarantee for a combined market as long as the initial auction is $(\lambda,\mu)$-smooth.  For multi-unit allocation problems (such as in our basic setting), it is known that the discriminatory price auction is $(1-1/e, 1)$-smooth~\citep{de2013inefficiency}.  We therefore obtain the following corollary.

\begin{corollary}
\label{cor:discriminatory.smooth}
{Consider any multi-unit auction setting with non-increasing marginal values where the agents' valuations are distributed independently.}
Let $\mechc$ be a combined mechanism that runs the discriminatory price auction followed by an arbitrary signaling protocol $\Gamma$ and trade mechanism $\mechs$.  Then at any BNE of $\mechc$ the expected welfare is at least $(1-1/e)$ times the expected optimal welfare.
\end{corollary}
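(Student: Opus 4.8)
The plan is to obtain this corollary as an immediate consequence of Theorem~\ref{thm:poa combined market} combined with a known smoothness bound for the discriminatory price auction. First I would recall the result of \citet{de2013inefficiency}: for multi-unit auctions with non-increasing marginal values (in fact, more generally, for submodular combinatorial auctions), the discriminatory pay-your-bid auction is $(1-1/e,1)$-smooth. The deviation distributions $\{\dista_i(\mathbf{v})\}_{i\in[n]}$ witnessing smoothness have each agent $i$ submit a randomly ``shaded'' bid on (a prefix of) the bundle she receives in the efficient allocation under $\mathbf{v}$ — essentially bidding a uniformly random fraction of her marginal values — so that against any residual bid profile $\action_{-i}$ her expected utility from the deviation captures at least a $(1-1/e)$ fraction of her efficient contribution minus the revenue of $\action$; the $1-1/e$ factor is the standard exponential-tail bound controlling the chance that the shaded bid is outbid on the targeted units.

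Given this, I would simply instantiate Theorem~\ref{thm:poa combined market} with $\mech$ equal to the discriminatory auction and $(\lambda,\mu)=(1-1/e,1)$. Since $\lambda = 1-1/e \in (0,1]$ and $\mu = 1 \geq 1$, the theorem applies for an arbitrary signaling protocol $\Gamma$ and an arbitrary trade mechanism $\mechs$, and yields that the combined mechanism $\mechc = \game(\mech,\Gamma,\mechs)$ is itself $(1-1/e,1)$-smooth. By Proposition~\ref{thm:smooth imply poa} (equivalently, by the final conclusion of Theorem~\ref{thm:poa combined market}), the price of anarchy of $\mechc$ over the family of product type distributions is then at most $\mu/\lambda = \tfrac{1}{1-1/e} = \tfrac{e}{e-1}$. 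Rearranging, the expected welfare at every BNE of $\mechc$ is at least $(1-1/e)$ times the expected optimal welfare, which is exactly the claimed bound; in particular this holds for every PBE as well, since PBE is a refinement of BNE.

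I expect there to be essentially no new obstacle in the corollary itself: all of the substantive work lives in Theorem~\ref{thm:poa combined market}, whose proof must argue that the auction's smoothness deviation can be replicated inside the combined game by composing the auction deviation with the opt-out action supplied by voluntary participation, and that weak budget balance of $\mechs$ keeps the aftermarket transfers from inflating the revenue term $\rev(\action;\mech)$. The only points in the corollary that warrant a line of care are verifying that ``non-increasing marginal values'' is indeed (a special case of) the valuation class for which \citet{de2013inefficiency} established $(1-1/e,1)$-smoothness, and recording that the independence-across-agents hypothesis is needed because smoothness only implies a PoA guarantee over product distributions (Proposition~\ref{thm:smooth imply poa}).
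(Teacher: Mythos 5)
Your proposal is correct and matches the paper's own derivation exactly: the paper obtains this corollary by citing the $(1-1/e,1)$-smoothness of the discriminatory auction from \citet{de2013inefficiency} and instantiating Theorem~\ref{thm:poa combined market} with $(\lambda,\mu)=(1-1/e,1)$, yielding a price of anarchy of $e/(e-1)$ for the combined market over product distributions. The extra detail you give about the shaded-bid deviation witnessing smoothness is accurate background but not needed, since the corollary uses the smoothness result as a black box.
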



As we know from Section~\ref{sec:example}, a similar welfare bound does not hold for uniform-price auctions.  This is because, unlike the discriminatory price auction, the uniform-price auction is not $(\lambda,\mu)$-smooth for any {positive} constants $\lambda$ and $\mu$.\footnote{The uniform-price auction does satisfy a relaxed version of smoothness: it is weakly $(1-1/e, 1)$-smooth, which implies a constant welfare bound in isolation as long as agents avoid {weakly dominated} ``overbidding'' strategies~\citep{syrgkanis2013composable,de2013inefficiency}. { {In contrast,} our example in Section~\ref{sec:example} shows that eliminating dominated strategies is not sufficient to 
provide good welfare guarantees when this auction is part of a combined market.}
} 



We can also apply \Cref{thm:poa combined market} to other smooth auctions for more general allocation problems, such as submodular combinatorial auctions. This can capture, for example, a scenario where different license types are being auctioned off simultaneously.  See Appendix~\ref{app:smooth.table} for further details on the implied welfare bounds.

In the remainder of this section we will extend \Cref{thm:poa combined market} in two ways.  In Section~\ref{sec:correlated} we show that if the auction $\mech$ satisfies a stronger notion of smoothness known as semi-smoothness, 
the price of anarchy of the combined market is bounded even for correlated type distributions.  It turns out that the discriminatory auction is $(1-1/e,1)$-semi-smooth, so the welfare bound from Corollary~\ref{cor:discriminatory.smooth} applies even if agent valuations are correlated.  Second, in Section~\ref{sec:info acquire} we show that our welfare bound continues to hold even if agents are allowed to purchase signals correlated with the value realizations of agents in advance of the auction.  For example, this captures settings in which a speculator could invest in market research before participating in the auction.

\subsection{Extension: Correlated Valuations}
\label{sec:correlated}

As \Cref{thm:smooth imply poa} is proven only for independent value distributions, \Cref{thm:poa combined market} likewise applies only to product distributions.
As it turns out, we can extend \Cref{thm:poa combined market} to derive similar results for correlated distributions based on semi-smoothness \citep{lucier2011gsp}.


\begin{definition}[\citealp{lucier2011gsp,roughgarden2017price}]\label{def:smooth private deviation}
Auction $\mech$ with action space $\actions$ is \emph{$(\lambda,\mu)$-semi-smooth} for $\lambda>0$ and $\mu\geq 1$,
if for any valuation profile $\mathbf{v}$,
there exists action distributions $\{\dista_i(v_i)\}_{i\in [n]}$ such that
for any action profile $\action\in \actions$,
\begin{align*}
\sum_{i\in[n]} \expect[\action'_i\sim \dista_i(v_i)]{\util_i(\mech(\action'_i, \action_{-i});v_i)} \geq \lambda \wel(\mathbf{v}) - \mu \rev(\action;\mech)
\end{align*}
\end{definition}

The main difference between the definition of semi-smooth and smooth is that for each agent~$i$, 
the deviating action distribution $\dista_i(v_i)$ in semi-smooth only depends on her private valuation $v_i$,
not the entire valuation profile $\mathbf{v}$.

\begin{proposition}[\citealp{lucier2011gsp}]\label{thm:semi-smooth imply poa}
If a mechanism $\mech$ is $(\lambda,\mu)$-semi-smooth for $\lambda>0$ and $\mu\geq 1$,
then 
the price of anarchy of $\mech$ within the family of all distributions $\dists$
is at most $\frac{\mu}{\lambda}$,
i.e., $\poa(\mech, \dists) \leq \frac{\mu}{\lambda}$.
\end{proposition}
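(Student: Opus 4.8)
\emph{Proof plan.} The plan is to run the standard smoothness-to-price-of-anarchy argument in a Bayesian environment, now allowing correlated types, and to observe that the one place where the proof of Proposition~\ref{thm:smooth imply poa} used independence — exhibiting a feasible unilateral deviation — is supplied for free by the stronger semi-smoothness hypothesis. Fix an arbitrary prior $\dist\in\dists$ and a Bayes-Nash equilibrium $\strategy$ of $\mech$ under $\dist$; let $R$ denote the expected revenue $\E_{\mathbf{v}\sim\dist}[\rev(\strategy(\mathbf{v});\mech)]$ at this equilibrium, and let $\{\dista_i(v_i)\}_{i\in[n]}$ be the deviation distributions guaranteed by $(\lambda,\mu)$-semi-smoothness. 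First I would invoke the equilibrium best-response condition for each agent $i$ and each realized $v_i$: playing $\strategy_i(v_i)$ is, in conditional expectation over $v_{-i}\mid v_i$ and the mechanism's randomness, at least as good as any fixed action, hence also at least as good as drawing the action from $\dista_i(v_i)$. The crucial point is that $\dista_i(v_i)$ depends only on $v_i$, so ``sample an action from $\dista_i(v_i)$'' is a legitimate strategy for agent $i$ even under correlation — this is exactly why semi-smoothness extends to arbitrary priors, whereas the deviation $\dista_i(\mathbf{v})$ appearing in plain smoothness would require agent $i$ to know the entire valuation profile.

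Summing over $i$ and taking expectations over $\mathbf{v}\sim\dist$ (using the tower property to pass from conditioning on $v_i$ to the joint law), then applying Definition~\ref{def:smooth private deviation} pointwise to each realization of $\mathbf{v}$ and of the equilibrium action profile $\mathbf{a}=\strategy(\mathbf{v})$, yields
\[
\sum_{i\in[n]}\E_{\mathbf{v},\strategy}\!\left[\util_i(\mech(\strategy(\mathbf{v})))\right]\;\ge\;\E_{\mathbf{v},\strategy}\!\left[\lambda\,\wel(\mathbf{v})-\mu\,\rev(\strategy(\mathbf{v});\mech)\right]\;=\;\lambda\,\wel(\dist)-\mu R .
\]
Then comes the usual accounting. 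Since $\util_i=\val_i(\alloc_i)-\pay_i$, the left-hand side equals $\wel(\mech,\strategy,\dist)-R$; and since every agent can guarantee nonnegative utility by opting out of $\mech$, we have $R\le\wel(\mech,\strategy,\dist)$. Plugging in and rearranging gives $\wel(\mech,\strategy,\dist)\ge\lambda\,\wel(\dist)-(\mu-1)R\ge\lambda\,\wel(\dist)-(\mu-1)\wel(\mech,\strategy,\dist)$, i.e.\ $\mu\,\wel(\mech,\strategy,\dist)\ge\lambda\,\wel(\dist)$. As $\dist\in\dists$ and the equilibrium $\strategy$ were arbitrary, $\poa(\mech,\dists)\le\mu/\lambda$.

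I expect the only delicate step to be the first one: verifying that the randomized deviation to $\dista_i(v_i)$ is genuinely available to agent $i$ and that the conditioning on $v_i$ (with $v_{-i}$ possibly dependent on it) is bookkept correctly. But this is precisely the burden that the definition of semi-smoothness was designed to absorb, so there should be no real obstacle — everything after it is identical to the revenue-covering computation behind Proposition~\ref{thm:smooth imply poa}. (Alternatively one could note that a semi-smooth mechanism placed in a trivial ``no-aftermarket'' environment is still semi-smooth and then cite a one-line lemma, but the direct calculation above is just as quick.)
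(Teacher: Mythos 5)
Your proof is correct and is exactly the standard revenue-covering argument: the paper itself does not prove this proposition (it imports it from the cited reference), but your calculation mirrors the computation the paper carries out in its own proof of \cref{thm:poa for invest} --- sum the semi-smooth deviation utilities, use that $\dista_i(v_i)$ depends only on $v_i$ so the deviation is feasible under correlated priors, then combine $\sum_i \util_i = \wel - \rev$ with $\mu \geq 1$ and non-negativity of equilibrium utility. No gaps; the one delicate point you flag (conditioning on $v_i$ with dependent $v_{-i}$) is handled correctly.
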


Similarly to \Cref{thm:poa combined market}, we next show that combining a $(\lambda,\mu)$-semi-smooth auction with 
and any signaling protocol and any trade mechanism happening aftermarkets,
the resulting mechanism in the combined market has small price of anarchy for arbitrary distributions.

\begin{theorem}\label{thm:smooth private deviation}
Let $\dists$ be the family of all possible type distributions.
For any signaling protocol $\Gamma$ and any trade mechanism $\mechs$ in the secondary market,
if a mechanism $\mech$ is $(\lambda,\mu)$-semi-smooth for $\lambda\in (0,1]$ and $ \mu\geq 1$,
the combined mechanism $\combinedGame$
is $(\lambda,\mu)$-semi-smooth.
Thus, the price of anarchy of $\mech$ within the family of distributions $\dists$ 
for the combined market
is at most $\frac{\mu}{\lambda}$,
i.e., $\poa(\mech, \dists) \leq \frac{\mu}{\lambda}$.
\end{theorem}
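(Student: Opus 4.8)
The plan is to reduce this to Theorem~\ref{thm:poa combined market} by essentially the same argument, the only extra bookkeeping being that the witnessing deviation depends on an agent's own valuation alone. First I would show directly that the combined mechanism $\mechc=\game(\mech,\Gamma,\mechs)$ is $(\lambda,\mu)$-semi-smooth; the price-of-anarchy bound over the family $\dists$ of all type distributions then follows immediately by invoking Proposition~\ref{thm:semi-smooth imply poa} for $\mechc$.

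To verify semi-smoothness of $\mechc$, fix a valuation profile $\mathbf{v}$ and let $\{\dista_i(v_i)\}_{i\in[n]}$ be the action distributions witnessing $(\lambda,\mu)$-semi-smoothness of the auction $\mech$. I would let agent $i$'s deviating action $\acombined_i$ in $\mechc$ consist of two pieces: draw a bid $\afirst_i\sim\dista_i(v_i)$ for the auction, and, for the secondary-market continuation, use the ``opt-out'' rule --- after observing an auction allocation $\mathbf{x}$, play the action $a^*_i$ guaranteed for $(v_i,\mathbf{x})$ by voluntary participation of $\mechs$. Since this prescription refers only to $v_i$ and to agent $i$'s own post-auction observations, the induced distribution over agent $i$'s combined-market actions depends on $v_i$ only, which is exactly what the definition of semi-smoothness requires (in contrast, the proof of Theorem~\ref{thm:poa combined market} only needed dependence on the whole profile $\mathbf{v}$).

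The key estimate I would establish is an ex-post payoff comparison: for every profile $\action_{-i}$ of the other agents' combined-market actions and every realization of the signals from $\Gamma$, agent $i$'s payoff in $\mechc$ under this deviation is at least her deviation payoff in the isolated auction. Writing $\afirst_{-i}$ for the others' induced auction bids and $\mathbf{x}=\alloc^{\mech}(\afirst_i,\afirst_{-i})$, voluntary participation gives $\util_i(\mechs((a^*_i,\mathbf{a}_{-i});\mathbf{x}))\ge\val_i(x_i)$ for any secondary-market actions $\mathbf{a}_{-i}$ of the others, and subtracting the auction payment $\pay^{\mech}_i(\afirst_i,\afirst_{-i})$ gives a total combined payoff of at least $\val_i(x_i)-\pay^{\mech}_i(\afirst_i,\afirst_{-i})=\util_i(\mech(\afirst_i,\afirst_{-i}))$. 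Summing over $i$ after taking expectations over $\afirst_i\sim\dista_i(v_i)$, semi-smoothness of $\mech$ then yields
\begin{align*}
\sum_{i\in[n]}\expect[\afirst_i\sim\dista_i(v_i)]{\util_i(\mechc(\acombined_i,\action_{-i}))} \;\geq\; \sum_{i\in[n]}\expect[\afirst_i\sim\dista_i(v_i)]{\util_i(\mech(\afirst_i,\afirst_{-i}))} \;\geq\; \lambda\,\wel(\mathbf{v})-\mu\,\rev(\afirst;\mech),
\end{align*}
with $\afirst$ the auction component of the fixed profile $\action$. To finish, weak budget balance of $\mechs$ implies the trade mechanism never loses money, so $\rev(\action;\mechc)=\rev(\afirst;\mech)+(\text{net payments in }\mechs)\ge\rev(\afirst;\mech)$; since $\mu\ge 1>0$, this gives $-\mu\,\rev(\afirst;\mech)\ge-\mu\,\rev(\action;\mechc)$, and the display becomes the semi-smoothness inequality for $\mechc$.

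The main obstacle --- and really the only subtle point --- is that the payoff comparison for the deviating agent must hold ex post, uniformly over all combined-market strategies of the other agents (including non-credible secondary-market threats) and over all signal profiles produced by $\Gamma$, rather than merely in expectation; this is exactly what the ``for any action profile $\mathbf{a}_{-i}$'' clause in the definition of voluntary participation provides, and it is why the bound is completely insensitive to the choice of $\Gamma$. A minor bookkeeping point is to adopt the convention that the revenue of the combined mechanism counts total payments (auction plus net aftermarket transfers), so that weak budget balance delivers the revenue comparison above; iterating the argument then also handles several trade mechanisms run in sequence with information released in between, since $\mechc$ is again $(\lambda,\mu)$-semi-smooth and can play the role of the ``auction'' in the next round.
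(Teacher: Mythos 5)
Your proposal is correct and follows essentially the same route as the paper: the paper omits this proof, noting it is identical to that of Theorem~\ref{thm:poa combined market} up to replacing $\dista_i(\mathbf{v})$ by $\dista_i(v_i)$, and your argument is exactly that adaptation (auction deviation drawn from $\dista_i(v_i)$, opt-out continuation from voluntary participation, then the chain of inequalities via semi-smoothness and weak budget balance, finishing with Proposition~\ref{thm:semi-smooth imply poa}). Your explicit observation that the opt-out continuation is a map from post-auction observations to actions depending only on $v_i$, so the whole extensive-form deviation is a legitimate semi-smoothness witness, is a useful clarification the paper leaves implicit.
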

The proof of \Cref{thm:smooth private deviation} is essentially identical to
\Cref{thm:poa combined market} (up to replacing $\dista_i(\mathbf{v})$ by  $\dista_i(v_i)$) and hence omitted here. 
We can now use results regarding semi-smooth auction from the literature to prove that the price of anarchy of the corresponding combined markets is bounded.
\begin{proposition}[\citealp{de2013inefficiency}]\label{prop:semi-smooth discriminatory}
For multi-unit auctions with non-increasing marginal values, the discriminatory auction is $(1-\sfrac{1}{e}, 1)$-semi-smooth.\footnote{In \citet{de2013inefficiency}, the authors only explicitly state that the discriminatory auction is smooth.
However, their construction directly implies that the discriminatory auction is semi-smooth.} 
%
\end{proposition}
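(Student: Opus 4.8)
The plan is to exhibit, for each agent $i$, a randomized deviation $\dista_i(v_i)$ that depends only on $v_i$ (this dependence on the \emph{own} value is precisely what upgrades smoothness to semi-smoothness) and that recovers a $(1-1/e)$ fraction of each agent's efficient value at a total ``cost'' bounded by the auction's revenue. First I would fix notation: write $v_i^{(t)} = v_i(t)-v_i(t-1)$ for the $t$-th marginal value (non-increasing in $t$ by assumption), fix an efficient allocation $\mathbf o=(o_1,\dots,o_n)$ so that $\wel(\mathbf v)=\sum_i v_i(o_i)=\sum_i\sum_{t=1}^{o_i}v_i^{(t)}$, and recall that the discriminatory auction gives the $m$ items to the $m$ highest marginal bids and charges each agent the sum of her winning bids. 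Crucially, $\mathbf o$ will enter only the \emph{analysis}, never the deviation, so the semi-smoothness requirement is respected. For the deviation, agent $i$ draws a single $U\sim\mathrm{Unif}[0,1]$ and submits the bid vector whose $t$-th coordinate is $b^i_t=v_i^{(t)}(1-e^{-U})$ for $t=1,\dots,m$; this is a valid (non-negative, non-increasing) bid, it depends only on $v_i$, and each coordinate $b^i_t$ marginally has the ``first-price smoothing'' density $x\mapsto 1/(v_i^{(t)}-x)$ on $[0,(1-1/e)v_i^{(t)}]$. Using a common $U$ is the trick that keeps the vector sorted so no reordering is needed.

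Next I would carry out the per-slot reduction to the classical single-item first-price argument. Fix any profile $\action\in\actions$ and condition on $\action_{-i}$; let $\theta^i_t$ denote the $(m-t+1)$-th largest bid among $\action_{-i}$, so $\theta^i_1\le\dots\le\theta^i_m$. Agent $i$'s $t$-th highest bid lands in the global top $m$ iff at most $m-t$ of the others' bids exceed it, i.e.\ iff $b^i_t>\theta^i_t$ (ties occur with probability zero since $b^i_t$ is atomless); because $b^i_\cdot$ is non-increasing and $\theta^i_\cdot$ is non-decreasing, the set of winning slots is a prefix $\{1,\dots,k\}$, so $i$ wins $k$ units, pays $\sum_{t\le k}b^i_t$, and gets utility $\sum_{t\le k}(v_i^{(t)}-b^i_t)=\sum_{t=1}^m(v_i^{(t)}-b^i_t)\mathbf 1[b^i_t>\theta^i_t]\ge\sum_{t=1}^{o_i}(v_i^{(t)}-b^i_t)\mathbf 1[b^i_t>\theta^i_t]$, the inequality because each dropped term is non-negative ($b^i_t<v_i^{(t)}$). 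Taking expectation over $U$, each summand is exactly the one-item computation $\expect[U]{(v_i^{(t)}-b^i_t)\mathbf 1[b^i_t>\theta^i_t]}=[(1-1/e)v_i^{(t)}-\theta^i_t]^+\ge(1-1/e)v_i^{(t)}-\theta^i_t$, so $\expect{\util_i(\mech(\action'_i,\action_{-i}))}\ge(1-1/e)v_i(o_i)-\sum_{t=1}^{o_i}\theta^i_t$ for $\action'_i\sim\dista_i(v_i)$. Summing over $i$ gives $\sum_i\expect{\util_i(\mech(\action'_i,\action_{-i}))}\ge(1-1/e)\wel(\mathbf v)-\sum_i\sum_{t=1}^{o_i}\theta^i_t$.

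It then remains to establish the purely combinatorial bound $\sum_i\sum_{t=1}^{o_i}\theta^i_t\le\rev(\action;\mech)$, and I expect this to be the main obstacle: the naive estimate $\theta^i_t\le\,(m-t+1)$-th largest \emph{global} bid is far too lossy, since then a single winning bid gets charged by many agents. Since $\rev(\action;\mech)=\sum_{s=1}^m g_s$ where $g_1\ge\dots\ge g_m$ are the $m$ largest submitted bids (the winning bids), it suffices to build an injection from the $\sum_i o_i\le m$ pairs $(i,t)$ with $t\le o_i$ into $\{g_1,\dots,g_m\}$ sending $(i,t)$ to some $g_s\ge\theta^i_t$; by Hall's theorem this reduces to checking, for every threshold $\tau$, that $\#\{(i,t):t\le o_i,\ \theta^i_t\ge\tau\}\le\#\{s\le m: g_s\ge\tau\}$. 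Writing $C$ for the number of all submitted bids $\ge\tau$ and $d_i$ for the number of $i$'s own bids $\ge\tau$, one has $\theta^i_t\ge\tau\iff t\ge m+1-(C-d_i)$, so the $i$-th count is at most $\min\bigl(o_i,\,(o_i+C-m-d_i)^+\bigr)$. If $C\ge m$, the right-hand side of Hall is $m$ and the left side is at most $\sum_i o_i\le m$; if $C<m$, the right side is $C$ and a one-line estimate finishes it: for the ``active'' set $S=\{i:o_i+C-m-d_i>0\}$, $\sum_{i\in S}(o_i+C-m-d_i)\le\sum_i o_i-|S|(m-C)\le m-(m-C)=C$ when $S\neq\emptyset$, and the sum is $0$ otherwise. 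Combining this with the previous paragraph yields exactly the $(1-1/e,1)$-semi-smoothness inequality, and since the $\dista_i$ depend only on $v_i$, semi-smoothness follows (and, via Proposition~\ref{thm:semi-smooth imply poa}, re-derives a $\tfrac{e}{e-1}$ price-of-anarchy bound for arbitrary, possibly correlated, distributions). Aside from the Hall-type argument, the only points needing care are the measure-zero tie handling, the reduction step ``win slot $t$ iff $b^i_t>\theta^i_t$'', and the elementary single-item integral, all of which are routine.
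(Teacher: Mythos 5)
Your proof is correct, but be aware that the paper itself contains no proof of this proposition: it is stated as a citation to \citet{de2013inefficiency}, with a footnote asserting only that their smoothness construction ``directly implies'' semi-smoothness. What you have written is precisely the argument that footnote leaves implicit, and your key move is the right one: by having agent $i$ bid the smoothed marginals $v_i^{(t)}(1-e^{-U})$ on \emph{all} $m$ slots (with a single common $U$ to keep the bid vector sorted), the deviation depends only on $v_i$, while the efficient allocation $\mathbf{o}$ enters only through the choice of which prefix of slots to retain in the utility lower bound --- legitimate because every discarded term $(v_i^{(t)}-b^i_t)\mathbf{1}[b^i_t>\theta^i_t]$ is non-negative. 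The per-slot reduction to the single-item first-price integral and the resulting bound $\E[u_i]\ge(1-1/e)v_i(o_i)-\sum_{t\le o_i}\theta^i_t$ track the construction in \citet{de2013inefficiency}; the one step where you take a genuinely different route is the charging lemma $\sum_i\sum_{t\le o_i}\theta^i_t\le\rev(\action;\mech)$, which you establish via Hall's condition for the nested neighborhoods $\{s:g_s\ge\theta^i_t\}$, whereas the cited source proves the corresponding thresholds-versus-payments inequality by a direct counting argument over the multiset of winning bids; both are valid, and your deficiency computation (the reduction to threshold counts is sound precisely because the neighborhoods are totally ordered by inclusion, and the two cases $C\ge m$ and $C<m$ with the active set $S$ close the argument) checks out. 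The tie-breaking issues you flag are indeed harmless: strictly beating $\theta^i_t$ on a prefix of $k$ slots guarantees winning at least $k$ units under any tie-breaking, and slots with $v_i^{(t)}=0$ contribute zero to both sides of the per-slot inequality, so nothing is missing.
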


\subsection{Extension: Acquiring Additional Information}
\label{sec:info acquire}

We consider the extension where
agents can acquire costly information about other agents' private types before the auction starts. 
This captures the application where speculators gather information on the demands in the carbon market, 
and use the acquired information to improve their utilities through buying items in the auction and reselling them more expensively in the secondary market. 
In general this could have a negative impact on the equilibrium welfare of the combined market. 
In this section, we show that if the designer uses smooth auctions, 
then the welfare guarantees we obtained in 
\Cref{thm:poa combined market,thm:smooth private deviation} 
hold even when agents can acquire costly information. 

Information is captured by a signal from the types of the others to some signal space.
Specifically, for each agent $i$, 
let $\signals_i$ be the set of feasible signal structures for agent $i$.
Any signal structure $\signal_i\in \signals_i$ is a mapping from the opponents' valuations $\mathbf{v}_{-i}$ to a distribution over the signal space $S_i$.
Note that the information acquisition is potentially costly, 
i.e., 
there is a non-negative cost $\cost_i(\signal_i;v_i)$ for any $\signal_i\in \signals_i$ and any type $v_i$ of $i$. 
Let $\nosignal$ be the signal that acquires no information with zero cost. 
We assume that $\nosignal \in \signals_i$ for any agent $i$.
In our model, both the cost function $\cost_i$ and the set $\signals_i$ of any agent $i$ are
common knowledge among 
all agents.


In the following theorem, we extend Theorem \ref{thm:poa combined market} and show that the price of anarchy for smooth auctions in the combined market is bounded when agents can acquire information on the competitors.
\begin{restatable}{proposition}{lemmapoainfo}\label{thm:poa for invest}
Let $\dists^{\Pi}$ be the family of all possible product type distributions.
For any set of signals $\signals$ and any cost function $\cost$,
if mechanism $\mech$ 
is $(\lambda,\mu)$-smooth for $\lambda\in (0,1]$ and $ \mu\geq 1$,
then the price of anarchy of $\mech$ within the family of distributions $\dists^{\Pi}$ 
for the combined market with information acquisition
is at most $\frac{\mu}{\lambda}$,
i.e., $\poa(\mech, \dists^{\Pi}) \leq \frac{\mu}{\lambda}$.
\end{restatable}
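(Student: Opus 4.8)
The plan is to reduce Proposition~\ref{thm:poa for invest} to Theorem~\ref{thm:poa combined market} by observing that information acquisition is just another ``pre-stage'' that can be folded into the smoothness argument, exactly as the post-auction signaling $\Gamma$ and the aftermarket $\mechs$ were folded in. First I would set up the extended combined mechanism $\mechc$ whose action for agent $i$ consists of (a) a choice of signal structure $\signal_i \in \signals_i$, (b) a (signal-dependent) bidding strategy in the auction $\mech$, and (c) a (contingent) strategy in the secondary market. The utility of agent $i$ in $\mechc$ is then her utility in the underlying combined market $\game(\mech,\Gamma,\mechs)$ minus the acquisition cost $\cost_i(\signal_i; v_i)$. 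Crucially, costs are non-negative, and the ``do nothing'' signal $\nosignal \in \signals_i$ is available at zero cost.

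The key step is to exhibit the hypothetical deviation distributions witnessing $(\lambda,\mu)$-smoothness of $\mechc$. Given a valuation profile $\mathbf{v}$, let $\{\dista_i(\mathbf{v})\}_i$ be the action distributions guaranteed by smoothness of $\mech$. I would define the deviating action for agent $i$ in $\mechc$ to be: acquire no information (play $\nosignal$, paying cost $0$), submit a bid drawn from $\dista_i(\mathbf{v})$ in the auction, and then play the opt-out action in the secondary market (which exists by voluntary participation and guarantees utility at least $v_i(x_i)$ where $x_i$ is the auction allocation). For any action profile $\action$ of the others in $\mechc$ — which fixes their signal choices, their (possibly very well-informed) bids $\action_{-i}$ in the auction, and their secondary-market behavior — agent $i$'s utility under this deviation is at least her utility from just playing $\dista_i(\mathbf{v})$ against $\action_{-i}$ in the isolated auction $\mech$, since the secondary-market opt-out never decreases her utility and no cost is incurred. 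Summing over $i$ and applying the smoothness inequality for $\mech$ gives
\begin{align*}
\sum_{i\in[n]} \expect[\action'_i\sim \dista_i(\mathbf{v})]{\util_i(\mechc(\action'_i,\action_{-i}))}
\;\geq\; \sum_{i\in[n]} \expect[\action'_i\sim \dista_i(\mathbf{v})]{\util_i(\mech(\action'_i,\action_{-i}))}
\;\geq\; \lambda\cdot\wel(\mathbf{v}) - \mu\cdot\rev(\action;\mech),
\end{align*}
and it remains to relate $\rev(\action;\mech)$ to $\rev(\action;\mechc)$. Here I would use weak budget balance of $\mechs$ and the fact that the other agents' signal acquisitions are costs borne by those agents (not revenue) together with the budget-balance/no-subsidy of the trade mechanism: the total revenue extracted by the designer in $\mechc$ is exactly the auction revenue (the secondary market transfers money only among agents and never runs a deficit, and acquisition costs are paid to a third party or sunk, so they only help the inequality). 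Thus $\rev(\action;\mechc) = \rev(\action;\mech)$ (or the inequality goes the right way), and $\mechc$ is $(\lambda,\mu)$-smooth. Proposition~\ref{thm:smooth imply poa} then yields $\poa(\mechc,\dists^{\Pi}) \leq \mu/\lambda$, and since every BNE of $\mechc$ corresponds to a BNE of the combined market with information acquisition, we are done.

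The main obstacle I anticipate is bookkeeping around what ``revenue'' means and making sure the accounting of acquisition costs is airtight: one must be careful that when agent $i$ deviates to the zero-information action, the \emph{other} agents' strategies $\action_{-i}$ still encode arbitrary (costly, possibly highly informed) information structures, and that the smoothness inequality for $\mech$ only sees the auction-stage actions induced by $\action_{-i}$ — this is fine because $\dista_i(\mathbf{v})$ is allowed to depend on the full profile $\mathbf{v}$ (we are in the smooth, not semi-smooth, regime), so the deviation is well-defined regardless of what the others learned. A secondary subtlety is that signals in $\signals_i$ depend on $\mathbf{v}_{-i}$, so the induced game has a more complex information structure than a plain Bayesian game; but since Proposition~\ref{thm:smooth imply poa}'s proof only uses the smoothness inequality pointwise in $\mathbf{v}$ and the definition of BNE (best-response in expectation), it applies verbatim once $\mechc$ is shown to be smooth. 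A completely parallel argument, replacing $\dista_i(\mathbf{v})$ by $\dista_i(v_i)$ and invoking Theorem~\ref{thm:smooth private deviation}, gives the analogous statement for semi-smooth auctions over arbitrary (correlated) distributions, which I would remark on at the end.
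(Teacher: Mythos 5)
Your deviation is the right one---acquire no information at zero cost, bid from the smooth deviation distribution, and opt out of the aftermarket---and your cost and revenue accounting is consistent with what the paper does. The gap is in the final step, where you assert that \Cref{thm:smooth imply poa} ``applies verbatim'' once the extended game (with information acquisition folded into the action space) is shown to be smooth. It does not. Once agents can acquire signals, the object you call $\mechc$ is no longer a mechanism in the sense required by \Cref{def:smooth}: opponent $j$'s realized auction bid depends on her signal $s_j$, which is drawn from $\signal_j(\mathbf{v}_{-j})$ and hence depends on $v_i$. So the map from action profiles to outcomes depends on the type profile, and---more importantly---the proof of \Cref{thm:smooth imply poa} breaks at exactly this point. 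That proof sums the Bayes--Nash best-response inequalities and then swaps the true types $\mathbf{v}_{-i}$ with the phantom types used in agent $i$'s deviation; the swap yields a single, common opponent action profile (the equilibrium play at the phantom profile) only because each opponent's equilibrium action depends on her own type alone. With information acquisition, opponent $j$'s realized bid depends on $v_i$ through her signal, so after the swap the opponents' action profile differs across the $n$ terms of the sum, and the pointwise smoothness inequality---which requires one action profile $\action$ shared by all $n$ deviation terms, with $\rev(\action)$ on the right-hand side---cannot be invoked. This is precisely the subtlety you flag and then dismiss.

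Accordingly, the paper does not claim that the game with information acquisition is smooth. Instead it gives a direct price-of-anarchy argument (\cref{apx:info acquire}): it fixes an equilibrium $(\bar{\sigma},\hat{\sigma})$, lets $G_i$ denote agent $i$'s equilibrium bid distribution in the auction, defines the explicit deviation in which agent $i$ plays $\nosignal$, samples phantom types $\hat{\type}_{-i}\sim\dist_{-i}$, and bids from $\dista_i(\type_i,\hat{\type}_{-i})$, and then carries out the renaming and the smoothness application against the distributions $G_{-i}$ directly, using the observation that an agent's unobserved deviation at the information-acquisition stage does not alter the opponents' bid distributions. It closes with the utility/revenue/cost bookkeeping ($\mu\geq 1$, non-negative net equilibrium utility, welfare measured net of acquisition costs), which matches your accounting. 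To repair your proposal you would need to reproduce essentially this direct argument rather than cite \Cref{thm:smooth imply poa} as a black box.
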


The proof of \cref{thm:poa for invest} is provided in \cref{apx:info acquire}.
Similarly, one can extend \cref{thm:smooth private deviation} for all family of all possible type distributions when the mechanism $\mech$ is $(\lambda,\mu)$-semi-smooth. As the proof is similar to the proof of \cref{thm:poa for invest}, we omit it.

\section{Welfare Guarantees under Posted Pricing}
\label{sec:pricing}

In the previous section we showed that the welfare guarantees derived from smooth auctions, such as discriminatory price auctions, are robust to the presence of a secondary market. 
Such smooth auctions have the advantage of being agnostic to the prior distributions from which agent valuations are drawn. 
A downside is that bidding in such auctions can be quite complex: constructing an optimal bidding strategy requires sophisticated reasoning and the ability to predict market conditions. 
If the designer (i.e., government) has a sense of the market conditions, then a tempting alternative to running a smooth auction is to sell a fixed quantity of items at a pre-specified unit price, while supplies last.  
Such posted-price mechanisms have the advantage of being very simple to participate in, since each potential buyer can simply purchase her utility-maximizing bundle of the remaining items at the given prices.

A recent literature on static posted pricing and Prophet Inequalities has illustrated that such pricing methods can provide strong welfare guarantees in a variety of allocation problems, even when the order of buyer arrival is adversarial.  See \cite{lucier2017economic} for a recent survey.  For example, consider the special case where there is a single indivisible good to be sold and each buyer's value is drawn independently.  It is known that if there is no secondary market, then setting a single take-it-or-leave-it price and {(while the item is still available) letting buyers in sequence each} choose whether to purchase, guarantees half of the expected maximum.  Does this guarantee still hold in the presence of a secondary market?

As it turns out, the answer depends on how the fixed price is selected.  The canonical solution to the single-item problem, based on the classic Prophet inequality, chooses price
$p^* = \sup\{ p : \Pr[ \max_i v_i > p ] \geq 1/2\}$.  That is, the median of the distribution over maximum values.  This choice of price guarantees half of the expected optimal welfare with no secondary market~\citep{samuel1984comparison}, but the following example shows that this is no longer the case in the presence of a secondary market.\footnote{The 
{guarantee of half of the expected maximum welfare}
with no secondary market requires some care in the case that a value is precisely equal to $p$.  But this occurs with probability $0$ in our example, so our negative result holds regardless of how this is handled.}

\begin{example}\label{example:posted-fails}
There is a single item to be sold and two potential  buyers.  The first buyer has value $v_1$ drawn uniformly from $[0,1]$.  The second buyer has value $v_2$ equal to $0$ with probability $1-\epsilon<1$, and with the remaining probability $\epsilon$ the value $v_2$ is set equal to $\epsilon^{-1}$ times a random variable drawn from the equal-revenue distribution capped at $H$.  That is, with probability $\epsilon>0$, $v_2 = \epsilon^{-1} z$ where $z$ is drawn from a distribution with CDF $F(z) = \frac{z-1}{z}$ for $z \in [1,H]$ and $F(H) = 1$, {and with the remaining probability $v_2=0$}.

The efficient allocation gives the good to buyer $2$ whenever $v_2 > 0$, leading to an expected welfare of at least $\epsilon \times \epsilon^{-1} \times \text{\bf E}[z] = \Theta(\log H)$. 

Note that the median price is $p^*=\sup\{ p : \Pr[ \max_i v_i > p ] \geq 1/2\} = \frac{1}{2(1-\epsilon)}$.  Suppose we offer this price to each buyer in sequence, starting with buyer 1.  If there is no secondary market, then the first buyer will purchase only if $v_1 \geq p^*$, which occurs with probability less than $1/2$.  The item is therefore available for purchase for the second buyer with probability at least $1/2$, leading to an expected welfare of $\Omega(\log(H))$. { This mechanism therefore obtains a constant fraction of the optimal welfare when running in isolation}.

Now suppose that {the posted-price mechanism is followed by} 
a secondary market in which the winning buyer (if any) can post a take-it-or-leave-it price offer to the losing buyer.  In this case the first buyer would always prefer to purchase the item at price ${p^*=\frac{1}{2(1-\epsilon)}}$, and
then offer to resell it to the second buyer in the secondary market at price $p' = \epsilon^{-1} H$.  Note that this choice of $p'$ is revenue-maximizing, assuming that no extra information about buyer valuations is revealed between the auction and the secondary market, and obtains expected revenue $1 \geq v_1$.
This is therefore a sequential equilibrium.  The expected welfare at this equilibrium is $O(1)$, since $v_1 = O(1)$ and $\Pr[v_2 \geq \epsilon^{-1}H] = \epsilon H^{-1}$.  Taking $H$ sufficiently large leads to an arbitrarily large welfare gap.
\end{example}

This example shows that the pricing strategy based the median of maximum values can lead to significant incentives for a low-value buyer to purchase with the intention to resell.
Then, due to monopolist distortions, significant welfare is subsequently lost in the secondary market.

Another approach to setting static posted prices is based on so-called ``balanced prices"~\citep{kleinberg2012matroid,dutting2020prophet}.  
In the single-item example described above, this corresponds to setting a price equal to $\frac{1}{2}\text{\bf E}[\max_i v_i]$.  This approach likewise guarantees half of the optimal welfare for the single-item prophet inequality problem~\citep{kleinberg2012matroid}.
{
We will show that, unlike Example~\ref{example:posted-fails}, this guarantee continues to hold even in the presence of a secondary market.
} 
This is true not only for single-item auctions, 
{but for the broader broad class of multi-unit auctions, and even to combinatorial allocation problems. }

To formalize this claim, we make use of a general definition of balanced prices due to~\citet{dutting2020prophet}.
We will state this definition in a general combinatorial auction setting in which the set of items $M$ are not necessarily identical.  A pricing function $p$ assigns to each set of items $x \subseteq M$ a price $p(x) \geq 0$.  For example, this function might assign a price to each individual item 
{and set $p(x)$ to be the sum of item prices,}
or more generally $p(x)$ might assign arbitrary prices to each bundle.\footnote{This definition naturally extends to fractional or randomized allocations, 
{but in order to keep notation simple, in this section we will restrict attention to deterministic allocations. 
Theorem \ref{thm:poa.pricing} can be  extended to fractional or randomized allocations with the appropriate notational adjustments.}
}

{First some notation.  Write $\opt(\mathbf{v}, S)$ for the welfare-optimal allocation of the items in $S \subseteq M$ {when the valuations are $\mathbf{v}$}, and denote $\opt(\mathbf{v}) = \opt(\mathbf{v}, M)$.  In a slight abuse of notation, write $M\backslash \mathbf{x}$ for $M \backslash (\cup_i x_i)$, the set of items that are unallocated in feasible solution $\mathbf{x}$.  Then, in particular, $\opt(\mathbf{v}, M\backslash \mathbf{x})$ is the welfare-optimal allocation of the items that are unallocated in $\mathbf{x}$.}

The following is a definition from~\citet{dutting2020prophet} specialized to our setting.

\begin{definition}
For a given valuation realization $\mathbf{v}$, a pricing function $p$ is $(\alpha,\beta)$-balanced if, for any pair of allocations $\mathbf{x}$, $\mathbf{x}'$ that are disjoint and jointly feasible (i.e., they allocate disjoint sets of items), we have
\begin{itemize}
    \item $\sum_{i\in \agents} p(x_i) \geq \frac{1}{\alpha}(\mathbf{v}(\opt(\mathbf{v})) - \mathbf{v}(\opt(\mathbf{v}, M\backslash \mathbf{x})))$,
    \item $\sum_{i\in \agents} p(x'_i) \leq \beta \cdot \mathbf{v}(\opt(\mathbf{v}, M\backslash \mathbf{x}))$
\end{itemize}
\end{definition}

That is, prices $p$ are balanced if the price paid for any allocation $\mathbf{x}$ is at least the loss in optimal welfare due to losing the items in $\mathbf{x}$ (up to a factor of $\alpha$). Secondly,  the total price of any allocation $\mathbf{x}'$ that remains feasible after $\mathbf{x}$ is removed, is at most $\beta$ times the optimal welfare achievable using the items not allocated in $\mathbf{x}$.  {Note that $\mathbf{x}'$ need not be the welfare-optimal allocation {of the items in $M\backslash \mathbf{x}$.}}

Note that this definition of balanced prices is with respect to a particular realization {$\mathbf{v}$} of agent types.  {We will use this to construct a static pricing rule {(i.e., prices that are independent of realizations)} by taking an expectation over types.  Formally, we say that a posted-price mechanism uses $(\alpha,\beta)$-balanced prices if it (a) defines an $(\alpha,\beta)$-balanced price function $p^\mathbf{v}$ for all valuation profiles $\mathbf{v}$, then (b) sets its actual static price function $p$ according to $p(x) = \frac{\alpha}{1+\alpha\beta}E_{\mathbf{v}}[p^{\mathbf{v}}(x)]$.}  That is, prices are set by taking expectations {of the type-specific prices} over the buyer types and multiplying by the constant $\frac{\alpha}{1+\alpha\beta}$.  For example, in the single-item prophet inequality setting, a $(1,1)$-balanced price for a given realization of values $\mathbf{v}$ is {the price $p^\mathbf{v} = \max_i v_i$}.  The appropriate choice of posted price for the prophet inequality is then $\frac{1}{2}\text{\bf E}[\max_i v_i]$, the expectation of the balanced prices times $\tfrac{1}{2}$. 

It is known that in the absence of a secondary market, using balanced prices leads to a strong welfare guarantee.

\begin{proposition}[\citealp{dutting2020prophet}]
Fix the valuation distributions $F = \times_i F_i$.
If a posted-price mechanism $\mech$ uses $(\alpha,\beta)$-balanced prices for $\alpha, \beta\geq 1$,
then the expected welfare obtained by $\mech$ is at least $\frac{1}{1+\alpha\beta}$ times the expected optimal welfare.
\end{proposition}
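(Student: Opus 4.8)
The plan is to run the standard ``welfare $=$ revenue $+$ utility'' accounting for sequential posted prices, and to show that the two defining inequalities of $(\alpha,\beta)$-balancedness, together with the scaling factor $c=\tfrac{\alpha}{1+\alpha\beta}$ built into the definition of ``uses $(\alpha,\beta)$-balanced prices'', force the loss term to cancel. Fix any arrival order and run the mechanism on a realized profile $\mathbf{v}\sim F$; let $\mathbf{x}=\mathbf{x}(\mathbf{v})\in\feasibles$ be the resulting allocation, $\mathrm{Rev}=\sum_i p(x_i)$ the collected revenue, and $\mathrm{Util}=\sum_i\big(v_i(x_i)-p(x_i)\big)$ the total buyer utility, so that $\wel(\mathbf{v},\mathbf{x})=\mathrm{Rev}+\mathrm{Util}$. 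Since every buyer may decline to purchase, $v_i(x_i)-p(x_i)\ge 0$ and hence $\mathrm{Util}\ge 0$.

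\emph{Revenue.} Using $p(\cdot)=c\,\expect[\mathbf{w}\sim F]{p^{\mathbf{w}}(\cdot)}$ we have $\mathrm{Rev}=c\,\expect[\mathbf{w}]{\sum_i p^{\mathbf{w}}(x_i)}$. For each realization $\mathbf{w}$, apply the first balanced inequality of $p^{\mathbf{w}}$ to the feasible allocation $\mathbf{x}$, obtaining $\sum_i p^{\mathbf{w}}(x_i)\ge\tfrac1\alpha\big(\mathbf{w}(\opt(\mathbf{w}))-\mathbf{w}(\opt(\mathbf{w},M\backslash\mathbf{x}))\big)$. Taking expectations over $\mathbf{v}$ and the independent copy $\mathbf{w}$ yields $\expect{\mathrm{Rev}}\ge\tfrac1{1+\alpha\beta}\big(\wel(F)-R\big)$, where $R:=\expect[\mathbf{v},\mathbf{w}]{\mathbf{w}(\opt(\mathbf{w},M\backslash\mathbf{x}(\mathbf{v})))}$ is the expected optimal welfare of the items the mechanism leaves unallocated, re-evaluated under a fresh draw.

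\emph{Utility and combination.} When buyer $i$ takes her turn, the set of still-available items contains $M\backslash\mathbf{x}$, so her utility-maximizing choice does at least as well as purchasing any bundle built from the unallocated items. Comparing against her share of an optimal re-allocation of those items (here the decreasing-marginals / submodularity assumption is used to align this deviation with the optimal allocation while keeping it feasible) and charging its price via the second balanced inequality gives, after summing over $i$ and taking expectations, a bound of the form $\expect{\mathrm{Util}}\ge(\text{a nonnegative ``unsold-welfare'' term})-\tfrac{\alpha\beta}{1+\alpha\beta}R$. Adding the revenue and utility bounds, the $R$-terms cancel — exactly what the two balancedness inequalities and the coefficient $c=\tfrac{\alpha}{1+\alpha\beta}$ are designed to achieve — leaving $\expect{\wel}\ge\tfrac1{1+\alpha\beta}\wel(F)$, as claimed.

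\textbf{Main obstacle.} The delicate step is the utility bound. The crude version ``buyer $i$ could re-buy the optimal re-allocation of the unsold items'' is too weak: it already fails for a single item, since at buyer $i$'s turn the available set contains everything \emph{not yet purchased}, which is in general strictly larger than the set of items that are \emph{ultimately} unsold, and ignoring this loses the utility of buyers who actually buy. One must therefore work with the available set at each turn, combine this with the decreasing marginals, and do the bookkeeping so that the ``unsold-welfare'' terms produced by the utility bound match those produced by the revenue bound — all while keeping three valuation profiles properly decoupled: the realized $\mathbf{v}$, the correlated allocation $\mathbf{x}(\mathbf{v})$, and the dummy profile $\mathbf{w}$ appearing inside the expectation that defines the static prices. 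Making these pieces line up exactly is the technical heart of Dütting et al.'s analysis.
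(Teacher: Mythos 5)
Your decomposition (welfare $=$ revenue $+$ utility), your revenue bound, and the intended cancellation are exactly the paper's argument: the paper cites this proposition to D\"utting et al.\ and reproves it, in the more general combined-market form, as Theorem~\ref{thm:poa.pricing}, of which the proposition is the special case of a trivial aftermarket. What your proposal leaves unexecuted is precisely the utility bound, and the step you flag as ``the technical heart'' has a concrete resolution that you should state rather than defer. The naive deviation ``buy a bundle from $M\backslash\mathbf{x}(\mathbf{v})$'' is not even well-defined, since $\mathbf{x}(\mathbf{v})$ depends on buyer $i$'s own purchase; the fix is to sample an independent phantom profile $\mathbf{v}'\sim F$ and have buyer $i$, with true value $v_i$, deviate to purchase the bundle $\opt_i\big((v_i,\mathbf{v}'_{-i}),\,M\backslash\mathbf{x}(v'_i,\mathbf{v}_{-i})\big)$ and skip the aftermarket. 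This bundle is available to her because in a sequential posted-price mechanism the set of items remaining at $i$'s turn depends only on $\mathbf{v}_{-i}$, and $M\backslash\mathbf{x}(v'_i,\mathbf{v}_{-i})$ is contained in that set; note that no submodularity is needed in this step (decreasing marginals enter only in constructing balanced prices, as in Claim~\ref{claim:multiunit.pricing}), so your parenthetical invoking it is a red herring. Swapping the roles of $v_i$ and $v'_i$ (a measure-preserving change of variables) turns the summed deviation value into exactly $\E_{\mathbf{v},\mathbf{v}'}\big[\mathbf{v}'(\opt(\mathbf{v}',M\backslash\mathbf{x}(\mathbf{v})))\big]$ minus the price of that allocation, and the second balancedness inequality bounds that price by $\tfrac{\alpha\beta}{1+\alpha\beta}$ times the same quantity, giving $\E\big[\sum_i u_i\big]\ge\tfrac{1}{1+\alpha\beta}R$ with $R$ identical to the term in your revenue bound. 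With that one device supplied, your plan closes and coincides with the paper's proof.
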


As we now show, this result extends to any equilibrium in the combined market setting with an arbitrary trade mechanism being used as a secondary market.  This theorem applies to multi-unit and combinatorial auction allocation problems.


\begin{theorem}\label{thm:poa.pricing} 
Fix the valuation distributions $F = \times_i F_i$.
For any signaling protocol $\Gamma$ 
and any trade mechanism $\mechs$ in the secondary market,
if a posted-price mechanism $\mechf$ uses $(\alpha,\beta)$-balanced prices for $\alpha, \beta\geq 1$,
then the expected welfare obtained by the combined mechanism $\combinedGame$ at any Bayesian-Nash equilibrium is at least $\frac{1}{1+\alpha\beta}$ times the expected optimal welfare.
\end{theorem}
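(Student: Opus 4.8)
The plan is to show that the balanced-prices welfare analysis of \citet{dutting2020prophet} --- developed for posted-price mechanisms with myopic buyers and no aftermarket --- survives essentially verbatim once we (i) allow the buyers to be strategic (playing an arbitrary Bayes--Nash equilibrium) and (ii) append an arbitrary trade mechanism. Two structural properties of a trade mechanism do all of the extra work: weak budget balance will let us lower bound the welfare of the final allocation by the sum of equilibrium utilities plus the revenue of the posted-price phase, and voluntary participation will make a suitable deviation both available and effective. The signaling protocol $\Gamma$ will play no role, since the deviation we use does not react to the post-auction signal.

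First I would set up a welfare decomposition. Fix a Bayes--Nash equilibrium of $\mechc=\game(\mechf,\Gamma,\mechs)$, let $p$ be the (static) price function of $\mechf$, realize valuations $\mathbf{v}\sim F$, and write $\mathbf{x}^{\mechf}$ for the (random) allocation of the posted-price mechanism, $\mathbf{X}$ for the final allocation, $u_i$ for the total equilibrium utility of agent $i$, and $q_i$ for her net payment in the secondary market. Because a trade mechanism only redistributes items, $\cup_i X_i=\cup_i x^{\mechf}_i=:Q$, the set of items that leave the seller. From $u_i=v_i(X_i)-p(x^{\mechf}_i)-q_i$ and $\sum_i q_i\geq 0$ (weak budget balance) we get $\wel(\mathbf{v},\mathbf{X})=\sum_i v_i(X_i)\geq \sum_i u_i+\sum_i p(x^{\mechf}_i)$, and hence $\E[\wel(\mathbf{v},\mathbf{X})]\geq \sum_i\E[u_i]+\E[\text{revenue of }\mechf]$.

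Next I would lower bound $\sum_i\E[u_i]$ by a deviation argument. For agent $i$, consider the deviation: in the posted-price phase, when her turn comes and she observes the set $R_i$ of still-available items, buy a utility-maximizing bundle $T_i\subseteq R_i$ at the posted prices; then in the secondary market play the opt-out action guaranteed by voluntary participation. Since the arrival order is fixed, $R_i$ is unaffected by $i$'s own deviation and coincides with the set of items not sold to earlier buyers in equilibrium; in particular $o_i(\mathbf{v})\setminus Q\subseteq M\setminus Q\subseteq R_i$, where $o_i(\mathbf{v})$ denotes the bundle that the efficient allocation $\opt(\mathbf{v})$ assigns to $i$. Voluntary participation gives deviation utility at least $v_i(T_i)-p(T_i)$, and since $T_i$ maximizes utility over subsets of $R_i$ (so in particular beats the bundle $o_i(\mathbf{v})\setminus Q$, which the agent need not be able to identify), the equilibrium best-response condition yields $\E[u_i]\geq \E[v_i(o_i(\mathbf{v})\setminus Q)-p(o_i(\mathbf{v})\setminus Q)]$. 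At this point all the ingredients match the isolated balanced-prices calculation: the revenue $\sum_i p(x^{\mechf}_i)$ is bounded below via the $\alpha$-side of the definition of $(\alpha,\beta)$-balanced prices (it covers the welfare lost by removing $Q$ from the market), the total price $\sum_i p(o_i(\mathbf{v})\setminus Q)$ is bounded above via the $\beta$-side, and the remaining value inequalities use submodularity of the valuations; the standard two-independent-copies exchange turns these realization-dependent statements about $p^{\mathbf{v}}$ into statements about the static price $p(x)=\frac{\alpha}{1+\alpha\beta}\E_{\mathbf{v}}[p^{\mathbf{v}}(x)]$, and it is here that independence of the $F_i$ is used. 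Plugging into the decomposition, the terms measuring the optimal welfare on the unsold items $M\setminus Q$ cancel exactly as in the isolated proof, leaving $\E[\wel(\mathbf{v},\mathbf{X})]\geq \frac{1}{1+\alpha\beta}\wel(F)$.

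I expect the main obstacle to be conceptual rather than computational: one must resist the temptation to quote the isolated guarantee of \citet{dutting2020prophet} as a black box, because strategic behavior genuinely changes which items are sold in the posted-price phase --- this is exactly the phenomenon exploited by Example~\ref{example:posted-fails} --- so the balanced-prices bookkeeping has to be re-run with the equilibrium allocation in place of the myopic one. The two points needing care are (a) checking that the ``buy-then-opt-out'' deviation is a legitimate strategy, i.e.\ that agent $i$'s phase-$1$ choice depends only on her own valuation and the set she observes when her turn comes, which is why the deviation targets her utility-maximizing bundle and only afterwards compares it against $o_i(\mathbf{v})\setminus Q$; and (b) confirming that weak budget balance and voluntary participation are precisely the two hypotheses that collapse the combined-market accounting onto the isolated one --- weak budget balance so that the secondary market introduces no revenue leakage in the welfare decomposition, and voluntary participation so that the deviation's value for the purchased bundle is not undone by the aftermarket.
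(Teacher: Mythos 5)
Your overall architecture matches the paper's proof: decompose equilibrium welfare as (sum of equilibrium utilities) plus (posted-price revenue) using weak budget balance, lower-bound the revenue via the $\alpha$-side of balancedness, and lower-bound utilities via a ``buy a bundle, then opt out'' deviation whose legitimacy rests on voluntary participation. The gap is in the deviation step. You conclude $\E[u_i]\geq \E[v_i(o_i(\mathbf{v})\setminus Q)-p(o_i(\mathbf{v})\setminus Q)]$, where both the optimal bundle $o_i(\mathbf{v})$ and the sold set $Q=Q(\mathbf{v})$ are determined by the \emph{same} realization $\mathbf{v}$, and you reserve the two-independent-copies exchange for converting the realization-dependent prices $p^{\mathbf{v}}$ into the static price $p$. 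But that conversion is just linearity of expectation and needs no independence; the place where the fresh copy is indispensable is inside the deviation bundle itself. After applying the $\beta$-side of balancedness pointwise, your price term is controlled by $\E_{\mathbf{v},\mathbf{v}'}[\mathbf{v}'(\opt(\mathbf{v}',M\setminus Q(\mathbf{v})))]$, and the $\alpha$-side revenue bound subtracts this same quantity; cancellation therefore requires your value term $\E_{\mathbf{v}}[\sum_i v_i(o_i(\mathbf{v})\setminus Q(\mathbf{v}))]$ to dominate $\E_{\mathbf{v},\mathbf{v}'}[\mathbf{v}'(\opt(\mathbf{v}',M\setminus Q(\mathbf{v})))]$. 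That inequality is false in general: $Q(\mathbf{v})$ is correlated with $\mathbf{v}$, and if the equilibrium tends to sell exactly the items the realized optimum would use, the left-hand side collapses toward $0$ while a fresh profile $\mathbf{v}'$ can still extract substantial value from the leftover items.

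The paper avoids this by putting the phantom copy into the deviation itself: agent $i$ considers buying $\opt_i((v_i,\mathbf{v}'_{-i}),\,M\setminus\mathbf{x}(v'_i,\mathbf{v}_{-i}))$ --- her share of the optimum under the hybrid profile $(v_i,\mathbf{v}'_{-i})$, restricted to the items left unsold under the hybrid profile $(v'_i,\mathbf{v}_{-i})$. This bundle is still available to her, since availability depends only on $\mathbf{v}_{-i}$, and swapping $v_i\leftrightarrow v'_i$ (legitimate precisely because $F$ is a product distribution) turns the summed value term into exactly $\E_{\mathbf{v},\mathbf{v}'}[\mathbf{v}'(\opt(\mathbf{v}',M\setminus\mathbf{x}(\mathbf{v})))]$, the quantity in which both balancedness bounds are expressed. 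Your deviation of buying the utility-maximizing available bundle is fine --- it weakly dominates any specific bundle, including the hybrid one --- so the repair is simply to compare against the hybrid bundle rather than against $o_i(\mathbf{v})\setminus Q$. (A minor point: submodularity is not invoked in this argument at all; it enters only in establishing that $(1,1)$-balanced prices exist, as in Claim~\ref{claim:multiunit.pricing}.)
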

Our argument is an adaptation of a proof method due to \citet{dutting2020prophet}. 
The details of the proof is given in \cref{apx:price proof}.

\subsection{Application: Balanced Prices for Carbon Markets}\label{sec:balanced-prices}


We can apply Theorem~\ref{thm:poa.pricing} to any allocation problem for which balanced prices exist.  For example, it is known that one can design $(1,1)$-balanced item prices for submodular combinatorial auctions~\citep{feldman2014combinatorial}.  A multi-unit auction with weakly decreasing marginal values is a special case of a submodular combinatorial auction in which all items are identical.  We can therefore conclude from Theorem~\ref{thm:poa.pricing} the existence of item prices that guarantee half of the expected optimal welfare in our model of a carbon license market with an arbitrary aftermarket.\footnote{The approximation ratio is tight even for the single-item setting \citep{kleinberg2012matroid}.}

One thing to note is that in the standard construction for submodular combinatorial auctions, there is no guarantee that all items will be assigned the same price even if the items are identical.  We next argue that in fact there always exist $(1,1)$-balanced prices that are identical across items, so in fact there is only a single price that can be interpreted as a per-unit price offered to all prospective buyers.

Our construction is as follows.  For each valuation profile $\mathbf{v}$, calculate the optimal allocation $\mathbf{x}^*(\mathbf{v})$
by greedily allocating in order of highest marginal value. 
Define {$w^{\mathbf{v}} \triangleq \frac{1}{m}\sum_i v_i(x^*_i(\mathbf{v}))\geq 0$} to be the {average} per-unit welfare of this optimal allocation. 
We use the average per-unit welfare $w^{\mathbf{v}}$ as a price for each unit of the item, so in particular all units have the same price.  That is, the price to acquire $k$ units is $p^{\mathbf{v}}(k) = w^{\mathbf{v}} \times k$. 
We claim that this choice of unit prices is $(1,1)$-balanced.

\begin{claim}
\label{claim:multiunit.pricing}
For a multi-unit allocation problem, 
for any profile~$\mathbf{v}$ of valuations with non-increasing marginal values, 
the price function $p^{\mathbf{v}}(x) = w^{\mathbf{v}} \cdot |x|$ where $w^{\mathbf{v}} = \frac{1}{m}\sum_i v_i(x^*_i(\mathbf{v}))$
is $(1,1)$-balanced.
\end{claim}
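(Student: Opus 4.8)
The plan is to verify the two inequalities in the definition of $(1,1)$-balanced prices directly from the structure of the greedy optimal allocation, using the fact that $p^{\mathbf{v}}$ assigns the same per-unit price $w^{\mathbf{v}}$ to every item. Fix a valuation profile $\mathbf{v}$ with non-increasing marginal values, and fix two disjoint, jointly feasible allocations $\mathbf{x}$ and $\mathbf{x}'$, so that $|\mathbf{x}| + |\mathbf{x}'| \le m$ where I write $|\mathbf{x}| = \sum_i x_i$. Let $k = |\mathbf{x}|$. The key observation I would establish first is a ``marginal value'' characterization of the greedy optimum: if we sort all marginal values $v_i(j)-v_i(j-1)$ (over all agents $i$ and all $j \ge 1$) in non-increasing order $g_1 \ge g_2 \ge \dots$, then $\mathbf{v}(\opt(\mathbf{v})) = \sum_{\ell=1}^{m} g_\ell$, and more generally for the problem restricted to $m'$ available items the optimum is $\sum_{\ell=1}^{m'} g_\ell$. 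In particular $\mathbf{v}(\opt(\mathbf{v}, M \setminus \mathbf{x})) = \sum_{\ell=1}^{m-k} g_\ell$, since removing $k$ items just caps the number of marginals we may take at $m-k$. Hence $\mathbf{v}(\opt(\mathbf{v})) - \mathbf{v}(\opt(\mathbf{v}, M\setminus\mathbf{x})) = \sum_{\ell = m-k+1}^{m} g_\ell$, the sum of the $k$ smallest marginals among the top $m$.

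For the first (lower bound) inequality with $\alpha = 1$, I need $\sum_i p^{\mathbf{v}}(x_i) = w^{\mathbf{v}} \cdot k \ge \sum_{\ell=m-k+1}^{m} g_\ell$. Since $w^{\mathbf{v}} = \frac{1}{m}\sum_{\ell=1}^m g_\ell$ is the \emph{average} of the top $m$ marginals, and $\sum_{\ell=m-k+1}^m g_\ell$ is the sum of the $k$ \emph{smallest} among those same $m$ marginals, the average of the $k$ smallest is at most the overall average $w^{\mathbf{v}}$; multiplying by $k$ gives $\sum_{\ell=m-k+1}^m g_\ell \le k \, w^{\mathbf{v}}$, which is exactly what is needed. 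For the second (upper bound) inequality with $\beta = 1$, I need $\sum_i p^{\mathbf{v}}(x'_i) = w^{\mathbf{v}} \cdot |\mathbf{x}'| \le \mathbf{v}(\opt(\mathbf{v}, M\setminus\mathbf{x})) = \sum_{\ell=1}^{m-k} g_\ell$. Here $|\mathbf{x}'| \le m - k$ by joint feasibility, and $\sum_{\ell=1}^{m-k} g_\ell$ is the sum of the $m-k$ \emph{largest} marginals among the top $m$; its average over $m-k$ terms is at least the overall average $w^{\mathbf{v}}$, so $\sum_{\ell=1}^{m-k} g_\ell \ge (m-k)\, w^{\mathbf{v}} \ge |\mathbf{x}'| \, w^{\mathbf{v}}$, again as required.

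The main thing to be careful about — and what I expect to be the only real obstacle — is justifying the marginal-value characterization of $\opt(\mathbf{v}, S)$ for the restricted problems, i.e., that greedily taking the largest remaining marginal values is optimal and that the resulting value depends on $S$ only through $|S|$ (since all items are identical). This is the standard exchange/concavity argument for allocating identical items among agents with concave (non-increasing-marginal) valuations, and I would either cite it or include a one-line exchange argument: swapping a chosen marginal for a larger unchosen one never decreases welfare, and concavity ensures each agent's chosen marginals form a prefix of her own marginal sequence so the global greedy is well-defined. Everything else reduces to the elementary fact that the average of the smallest $k$ of $m$ numbers is at most their overall average, and the average of the largest $m-k$ is at least their overall average. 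I would also note the edge cases $k=0$ and $k=m$ are immediate (both inequalities become $0 \ge 0$ or reduce to the isolated-optimum identity), completing the proof of the claim.
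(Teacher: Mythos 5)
Your proposal is correct and follows essentially the same route as the paper's proof: both arguments identify $\mathbf{v}(\opt(\mathbf{v})) - \mathbf{v}(\opt(\mathbf{v}, M\backslash \mathbf{x}))$ with the sum of the $k$ smallest marginals in the optimal allocation and $\mathbf{v}(\opt(\mathbf{v}, M\backslash \mathbf{x}))$ with the sum of the $m-k$ largest, then compare averages against the overall per-unit average $w^{\mathbf{v}}$. The only difference is that you make explicit the greedy/exchange justification for the sorted-marginals characterization of the restricted optima, which the paper's proof takes for granted.
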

\begin{proof}
Choose any pair of 
allocations $(\mathbf{x},\mathbf{x}')$ as in the definition of balancedness.  Suppose $\mathbf{x}$ allocates $k \leq m$ items in total, meaning that $\sum_i x_i = k$ 
Since allocation $\mathbf{x}'$ is feasible after the items allocated under $\mathbf{x}$ are removed, we have that $\mathbf{x}'$ allocates at most $m-k$ items.

For the first condition of balancedness, note that $\sum_i p^{\mathbf{v}}(x_i)$ is $k$ times the per-unit welfare of the optimal allocation.  On the other hand, $\mathbf{v}(\opt(\mathbf{v})) - \mathbf{v}(\opt(\mathbf{v}, M\backslash \mathbf{x}))$ is the loss of welfare due to removing any $k$ items from the auction, which is precisely the sum of the $k$ smallest marginal values in the optimal allocation.  The average of these $k$ smallest marginal values is at most the overall average, and hence $\sum_i p^{\mathbf{v}}(x_i) \geq \mathbf{v}(\opt(\mathbf{v})) - \mathbf{v}(\opt(\mathbf{v}, M\backslash \mathbf{x}))$.

For the second condition, note that $\sum_i p^{\mathbf{v}}(x'_i)$ is at most $m-k$ times the per-unit welfare of the optimal allocation.  On the other hand, $\mathbf{v}(\opt(\mathbf{v}, M\backslash \mathbf{x}))$ is the value of the optimal allocation of $m-k$ elements, which is the sum of the largest $m-k$ marginals in the overall optimal allocation.  The average of these $m-k$ largest marginals is at least the overall average, and hence $\sum_i p^{\mathbf{v}}(x'_i) \leq \mathbf{v}(\opt(\mathbf{v}, M\backslash \mathbf{x}))$.

We conclude that both conditions of balancedness are satisfied with $(\alpha,\beta) = (1,1)$.
\end{proof}

Given this choice of $(1,1)$-balanced prices, Theorem~\ref{thm:poa.pricing} then implies that setting a static price per unit that equals to half of the expected per-unit welfare, $\frac{1}{2m}\text{\bf E}[\sum_i v_i(x_i(\mathbf{v}))]$, guarantees half of the optimal expected welfare in any equilibrium, under any secondary market implementable as a trade mechanism and for any set of information revealed between the posted-price mechanism and the secondary market. 

{
\begin{example}
Recall the lower bound market example in Section~\ref{sec:example}, which illustrated a low-welfare equilibrium in a uniform auction with an aftermarket.  In that example, the welfare-maximizing allocation always allocates two goods to bidder $A$ and the remaining $m-2$ goods to bidder $B$, {for a total expected welfare of $\Theta(\log m)$.} 
In this example,  {\cref{thm:poa.pricing} together with the $(1,1)$-balanced prices we present in \cref{claim:multiunit.pricing} }
would set a per-unit price of {$E[v(\opt(v))]/(2m)= \Theta((\log m)/m) $} 
and allow bidders to purchase as many units as desired at this price, 
while supplies last.  Theorem~\ref{thm:poa.pricing} implies that, at this price, any purchasing behavior at equilibrium will achieve at least half of the expected optimal welfare. Intuitively, the price is high enough to discourage the speculator agent $C$ from buying more than two items 
(since at most two items can be sold to agent $A$, and any items sold to agent $B$ generate an expected revenue of at most $1/m$ each, which is less than the auction {per-unit} reserve)
but also low enough that agent $B$ will nearly always decide to purchase most of the available items.
\end{example}
}



\section{Best of Both Worlds: Uniform-Price Auction with Reserves}
\label{sec:reserves}
{We realize that moving from an auction to a posted price mechanism is a radical change in many markets, including the carbon allowances market. We thus consider the problem of finding a minimal change to a uniform-price auction} which will still take care of the problem of {significant efficiency loss} 
that can happen as a result of combing the auction with a secondary market (as illustrated in the example presented in Section \ref{sec:example}). {As we now show, 
a sufficient change is to add to the uniform-price auction a per-unit reserve price based on balanced prices.}
Any marginal bids strictly less than this reserve price are ignored, but bids exactly equal to the reserve are allowed. Formally, in the following theorem we show that the welfare bound from Theorem~\ref{thm:poa.pricing} continues to hold when we add {an appropriately chosen per-unit reserve price} 
to the uniform-price auction.
{The proof of \cref{thm:reserve} is given in Appendix \ref{apx:price proof}.
}

\begin{theorem}\label{thm:reserve}
For a multi-unit allocation problem with non-increasing marginal values {sampled independently from $\dist = \times_i\dist_i$}, 
consider mechanism $\mechf$ that is a uniform-price auction subject to a per-unit reserve price of {$p = \frac{1}{2m}E_{\mathbf{v}\sim \dist }[\mathbf{v}(\opt(\mathbf{v}))]$}.
Then for any signaling protocol $\Gamma$ and any trade mechanism $\mechs$ in the secondary market,
and at any Bayesian Nash equilibrium 
of the combined market $\combinedGame$, the expected welfare is at least 
half of the expected optimal welfare.
\end{theorem}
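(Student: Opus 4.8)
The first thing to notice is that the prescribed reserve is exactly the static per-unit price that would be used by the posted-price mechanism of Theorem~\ref{thm:poa.pricing} together with the $(1,1)$-balanced prices of Claim~\ref{claim:multiunit.pricing}: since the per-realization price $p^{\mathbf{v}}(x)=w^{\mathbf{v}}\,|x|$ with $w^{\mathbf{v}}=\tfrac1m\,\mathbf{v}(\opt(\mathbf{v}))$ is $(1,1)$-balanced, the induced static price is $\tfrac{\alpha}{1+\alpha\beta}\,E_{\mathbf{v}\sim\dist}[w^{\mathbf{v}}]=\tfrac12\,E_{\mathbf{v}\sim\dist}[\tfrac1m\mathbf{v}(\opt(\mathbf{v}))]=p$, and $\tfrac{1}{1+\alpha\beta}=\tfrac12$. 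So the statement to prove is that the welfare guarantee of Theorem~\ref{thm:poa.pricing} survives when the sequential posted-price mechanism at price $p$ is replaced by the uniform-price auction with per-unit reserve $p$. The plan is therefore to re-run the proof of Theorem~\ref{thm:poa.pricing} and check that the only two structural properties it uses of the posted-price mechanism are inherited by the uniform-reserve auction $\mechf$, so that $\combinedGame$ admits the same inequality chain and the same bound at \emph{any} Bayes-Nash equilibrium.

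\textit{Revenue side.} In the uniform-price auction with reserve $p$ the clearing price is always at least $p$, so for every realization and every bid profile the auction collects at least $p$ per unit sold --- weakly more than the posted-price mechanism at price $p$ collects on the same number of sold units. Since the secondary market $\mechs$ is weakly budget balanced, total payments at any equilibrium of $\combinedGame$ weakly exceed the auction revenue, so the lower bound on collected payments used in the proof of Theorem~\ref{thm:poa.pricing} (the part that invokes the first balancedness condition of $p^{\mathbf{v}}$) carries over verbatim.

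\textit{Deviation side.} Fix an equilibrium and an agent $i$. In the proof of Theorem~\ref{thm:poa.pricing}, agent $i$'s deviation is to purchase her demand at the static price and then opt out of the secondary market (using voluntary participation of $\mechs$). The analogue here is: bid the reserve $p$ --- formally $p+\varepsilon$, then let $\varepsilon\to0$ to break all ties in her favor --- on a number of units equal to her demand $d_i(v_i)=\arg\max_{k}\big(v_i(k)-p k\big)$, bid $0$ on the remaining units, then opt out of $\mechs$. Two facts make this work. First, if agent $i$ wins $q$ of these units then, being a winner, her bid is at least the $m$-th highest bid, so every losing bid is at most her bid $p+\varepsilon$; hence the clearing price lies in $[p,p+\varepsilon]$ and her payment tends to $pq$ as $\varepsilon\to0$. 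Second, the number she wins is $q_i=\min\{d_i,\;\max\{0,\,m-T^{>}_{-i}\}\}$, where $T^{>}_{-i}$ is the number of competing marginal bids strictly above $p$; and since marginal values are non-increasing, $q_i$ maximizes $v_i(\cdot)-p\cdot$ over $\{0,1,\dots,\max\{0,m-T^{>}_{-i}\}\}$, so her deviation utility is at least
\begin{align*}
v_i\!\big(\min\{|x^*_i(\mathbf{v})|,\;\max\{0,m-T^{>}_{-i}\}\}\big)\;-\;p\cdot\min\!\big\{|x^*_i(\mathbf{v})|,\;\max\{0,m-T^{>}_{-i}\}\big\}.
\end{align*}
This is exactly the lower bound the proof of Theorem~\ref{thm:poa.pricing} obtains for the posted-price deviation, with ``units claimed by bids ranked above $i$'s'' playing the role of ``items sold to arrivals before $i$''. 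Summing the equilibrium best-response inequalities over $i$, adding the revenue bound, and invoking the two balancedness conditions of $p^{\mathbf{v}}$ as in Theorem~\ref{thm:poa.pricing} yields expected equilibrium welfare at least $\tfrac12\,E_{\mathbf{v}\sim\dist}[\mathbf{v}(\opt(\mathbf{v}))]$.

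\textit{The main obstacle.} In the posted-price mechanism the number of items unavailable to agent $i$ is at most the number of units ever sold, hence at most $m$; in the uniform auction a competitor may submit many marginal bids above the reserve \emph{without} winning a commensurate number of units (it can be outbidding itself), so $T^{>}_{-i}$ can exceed $m$ and agent $i$'s deviation may win nothing. I would resolve this with a case split on the realized equilibrium bids: whenever at least $m$ marginal bids lie (weakly) above the reserve, all $m$ units are sold at a per-unit price of at least $p$, so the revenue term \emph{alone} already certifies welfare at least $pm=\tfrac12\,E_{\mathbf{v}\sim\dist}[\mathbf{v}(\opt(\mathbf{v}))]$ on that event; in the complementary event $T^{>}_{-i}<m$ for every $i$, and the deviation argument above applies. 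Interleaving this case analysis with the expectation-over-realizations step of the balanced-prices computation is the one piece of bookkeeping genuinely new relative to the proof of Theorem~\ref{thm:poa.pricing}, and is where the argument needs the most care. Finally, the promised graceful-degradation statement follows by running the same argument with the reserve replaced by $c\cdot p$ for $c\in(0,1]$, which costs only an additional factor of $c$ in the guarantee.
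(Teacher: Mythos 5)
Your overall strategy is the same as the paper's: recognize the reserve as the $(1,1)$-balanced static price from \cref{claim:multiunit.pricing}, lower-bound buyer surplus via a deviation that bids the reserve on one's demanded quantity and opts out of $\mechs$ (using voluntary participation), lower-bound revenue by $p$ times the number of units sold (using weak budget balance of $\mechs$), and add the two. The paper's deviation bids exactly $p$ (allowed, since only bids strictly below the reserve are discarded) on $y_i=\min\{z(v_i),\opt_i((v_i,\mathbf{v}'_{-i});m)\}$ units rather than $p+\varepsilon$ on the full demand, but that is cosmetic.

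The genuine problem is your proposed resolution of the ``cap'' obstacle. The binary case split does not work: on the event that some but not all items are claimed by above-reserve bids (say $Z\in(0,m)$ items go unsold, where $Z(\mathbf{b})=m-\sum_i x_i(\mathbf{b})$), \emph{neither} term alone certifies the bound --- revenue is only $p(m-Z)<pm$, and the deviating agent is still capped at roughly $\min\{d_i,Z\}$ units, so summing the deviation inequalities yields (via balancedness and the phantom coupling) only about $\tfrac{Z}{m}\E[\mathbf{v}(\opt(\mathbf{v}))]-pZ=pZ$ of surplus, not $pm$. Moreover, conditioning on a $\mathbf{v}$-measurable event distorts the identity $\E_{\mathbf{v}}[\tfrac1m\mathbf{v}(\opt(\mathbf{v}))]=2p$ that the argument relies on. The paper avoids any case split: it lower-bounds the quantity agent $i$ wins under the deviation by $\min\{y_i,\,Z(\mathbf{b}(v'_i,\mathbf{v}_{-i}))\}$, carries the random variable $Z(\mathbf{b}(\mathbf{v}'))$ through the change of variables so that the surplus bound becomes $\E_{\mathbf{v}'}[Z(\mathbf{b}(\mathbf{v}'))]\cdot\E_{\mathbf{v}}[\tfrac1m\mathbf{v}(\opt(\mathbf{v}))]-p\,\E[Z(\mathbf{b}(\mathbf{v}))]=p\,\E[Z(\mathbf{b}(\mathbf{v}))]$, and pairs it with revenue at least $p\,(m-\E[Z(\mathbf{b}(\mathbf{v}))])$; the two add to $pm=\tfrac12\E[\mathbf{v}(\opt(\mathbf{v}))]$ for every value of $\E[Z]$. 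Note also that this requires bounding availability by $Z(\mathbf{b}(v'_i,\mathbf{v}_{-i}))$ --- the unsold quantity under the \emph{equilibrium} bid profile with $i$'s value resampled --- rather than by your $\max\{0,m-T^{>}_{-i}\}$, since only the former factorizes as an independent quantity after the swap of $v_i$ and $v'_i$. Your write-up gestures at the correct additive combination in the ``deviation side'' paragraph, so the fix is to drop the case split and do the $Z$-bookkeeping explicitly.
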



\paragraph{Estimation Errors and Calculating Reserve Prices}
{
The reserve price in Theorem~\ref{thm:reserve} depends on the expected optimal welfare attainable by allocating items.  As it turns out, the welfare guarantee in Theorem~\ref{thm:reserve} is robust to mistakes in the prices.  This robustness was noted by~\citet{feldman2014combinatorial} for posted-price mechanisms in submodular combinatorial auctions, and the argument extends to our setting without change. 
Specifically, for any $\epsilon > 0$, if instead of setting reserve $p = \frac{1}{2m}E[\mathbf{v}(\opt(\mathbf{v}))]$ we set a reserve price $p'$ such that $|p' - p| < \epsilon$, then the expected welfare of the resulting combined market is at least half of the expected optimal welfare less $m\epsilon$, where recall that $m$ is the total number of items for sale.  
To see why, recall the proof of Theorem~\ref{thm:reserve} and note that lowering a license's price by $\epsilon$ reduces the revenue obtained by at most $\epsilon$, and increasing a license's price by $\epsilon$ reduces buyer surplus from purchasing that item by at most $\epsilon$.  Collecting up these additive losses leads to a total loss of at most $m\epsilon$. 
A formal statement and proof are deferred to Appendix~\ref{sec:uniform.errors}.

One implication of this robustness result is that if one sets reserve prices based on a noisy estimate of $E[\mathbf{v}(\opt(\mathbf{v}))]$, such as obtained by samples or historical data, then the welfare guarantee one obtains will degrade in proportion to the estimation error.
}


\begin{corollary}
For a multi-unit allocation problem with non-increasing marginal values {sampled independently from $\dist = \times_i\dist_i$}, 
suppose that the seller has access to an estimate $\estimate$ to the optimal welfare such that 
with probability at least $1-\delta$, 
$\estimate\in [(1-\epsilon)\mathbf{v}(\opt(\mathbf{v})), (1+\epsilon)\mathbf{v}(\opt(\mathbf{v}))]$
for constants $\delta,\epsilon\in [0,1]$.
Consider the uniform-price auction subject to a per-unit reserve price of
$p = \frac{\estimate}{2m}$.
Then with probability at least $1-\delta$, for any trade mechanism, and at any Bayesian Nash equilibrium of the combined market, the expected welfare is at least 
$\frac{1-\epsilon}{2}$ 
fraction of the expected optimal welfare.
\end{corollary}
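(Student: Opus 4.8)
The plan is to obtain this as an immediate consequence of Theorem~\ref{thm:reserve} together with its robustness-to-estimation-error extension (the additive statement deferred to Appendix~\ref{sec:uniform.errors}, in the spirit of \citet{feldman2014combinatorial}). Throughout, write $W = \wel(\dist) = E_{\mathbf{v}\sim\dist}[\mathbf{v}(\opt(\mathbf{v}))]$ for the expected optimal welfare and $p^\star = \frac{W}{2m}$ for the ``ideal'' per-unit reserve that Theorem~\ref{thm:reserve} uses; the corollary's mechanism instead uses the random reserve $p = \frac{\estimate}{2m}$, so the combined market and its equilibrium-welfare guarantee are random objects depending on the realized estimate $\estimate$.

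First I would condition on the good event $\event = \{\,\estimate \in [(1-\epsilon)W,\ (1+\epsilon)W]\,\}$, which holds with probability at least $1-\delta$ by hypothesis. On $\event$ the realized reserve satisfies $p = \frac{\estimate}{2m} \in [(1-\epsilon)p^\star,\ (1+\epsilon)p^\star]$, hence $|p - p^\star| \le \epsilon\, p^\star = \frac{\epsilon W}{2m}$ (and $p \ge 0$ since $\epsilon \le 1$, so the auction is well-defined). Next I would invoke the estimation-error robustness bound: if the reserve $p^\star$ is replaced by any reserve $p'$ with $|p' - p^\star| \le \eta$, the equilibrium-welfare guarantee of the combined market degrades by at most $m\eta$ additively. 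The reason, mirroring the discussion after Theorem~\ref{thm:reserve}: in the balanced-price accounting behind the proof of Theorem~\ref{thm:reserve}, lowering a unit's price by at most $\eta$ costs at most $\eta$ in the revenue term of the bound, while raising it by at most $\eta$ costs at most $\eta$ in the purchasing-surplus term, and summing over the $m$ units yields a total additive loss of at most $m\eta$. Applying this with $\eta = \frac{\epsilon W}{2m}$ gives, conditional on $\event$, expected welfare at least $\frac{W}{2} - m \cdot \frac{\epsilon W}{2m} = \frac{1-\epsilon}{2} W = \frac{1-\epsilon}{2}\wel(\dist)$ at every BNE of the combined market, for every trade mechanism $\mechs$ and signaling protocol $\Gamma$ (exactly as in Theorem~\ref{thm:reserve}). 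Since $\event$ has probability at least $1-\delta$, this proves the corollary.

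I expect the only genuinely delicate step to be the bookkeeping of randomness: the reserve is computed from the random estimate $\estimate$, so the correct reading of the statement is ``with probability at least $1-\delta$ over $\estimate$, the resulting fixed-reserve combined market attains at least $\frac{1-\epsilon}{2}\wel(\dist)$ in every equilibrium,'' and one must make sure the robustness bound is applied as a deterministic, instance-by-instance additive inequality (valid for each realization of $\estimate$ in $\event$) rather than merely in expectation over $\estimate$. A minor point to reconcile is the slightly informal hypothesis, which writes $\mathbf{v}(\opt(\mathbf{v}))$ in place of its expectation $W$; the intended and natural reading is that $\estimate$ is a multiplicative approximation of the expected optimal welfare $W$, which is precisely the statistic the reserve of Theorem~\ref{thm:reserve} depends on, and the argument above is phrased against that reading.
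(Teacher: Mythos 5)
Your proposal is correct and matches the paper's intended argument: the paper proves exactly the additive robustness statement you invoke as Theorem~\ref{thm:reserve.errors} in Appendix~\ref{sec:uniform.errors} (welfare at least $\frac{1}{2}\E[\mathbf{v}(\opt(\mathbf{v}))] - m\eta$ for any reserve within $\eta$ of the ideal one), and the corollary is obtained precisely by conditioning on the good event and applying that bound with $\eta = \frac{\epsilon}{2m}\E[\mathbf{v}(\opt(\mathbf{v}))]$. Your reading of the hypothesis as a multiplicative approximation to the \emph{expected} optimal welfare is also the intended one.
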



\newpage
\bibliographystyle{apalike}
\bibliography{ref}

\newpage
\appendix
\section{Missing Proof for Smooth Auctions}
\label{apx:info acquire}

\subsection{Proof of Theorem~\ref{thm:poa combined market}}

First we finish the task of proving Theorem~\ref{thm:poa combined market}.

\begin{proof}[Proof of Theorem~\ref{thm:poa combined market}]
Let $\mechf = \mech$ for clarity of notation. Let $\asetfirst, \asetsecond$ be the action spaces of mechanisms $\mechf, \mechs$ respectively,
and let $\asetcombined$ be the action space of the combined market.
Note that $\acombined\in\asetcombined$ is an extensive form action,
which is choosing action $\afirst\in\asetfirst$ in the first market,
and then choosing action $\asecond\in\asetsecond$ based on the allocation, payment, and the signal realized in the first market.
For each agent $i$, by voluntary participation,
there exists action $\hat{a}\second_i \in \asetsecond_i$
such that her payoff is at least her value of the initial allocation in the secondary market.
Since mechanism $\mechf$ is $(\lambda,\mu)$-smooth,
for any valuation profile $\mathbf{v}$,
there exists action distributions $\{\dista\first_i(\mathbf{v})\}_{i\in [n]}$ such that
for any action profile $\afirst\in \asetfirst$,
\begin{align*}
\sum_{i\in[n]} \expect[\hat{\action}\first_i\sim \dista^1_i(\mathbf{v})]{\util_i(\mechf(\hat{\action}\first_i, \afirst_{-i}))}
\geq \lambda \wel(\mathbf{v}) - \mu \rev(\afirst;\mechf).
\end{align*}
For any valuation profile $\mathbf{v}$, let $\{\dista\combined_i(\mathbf{v})\}_{i\in [n]}$ be the distributions over actions $\hat{\action}\combined_i$ for each agent $i$
where $\hat{\action}\combined_i$ chooses action $\hat{\action}\first_i$ according to distribution $\dista\first_i(\mathbf{v})$,
and then always chooses action $\hat{\action}\second_i$ regardless of the signal realization $s_i$ from $\Gamma$, or the allocation and payment in the first market.
For any $\acombined \in \actions\combined$,
we have
\begin{align*}
&\sum_{i\in[n]} \expect[\hat{\action}\combined_i\sim \dista\combined_i(\mathbf{v})]{\util_i(\mechc(\hat{\action}\combined_i, \acombined_{-i}))}
\geq \sum_{i\in[n]} \expect[\hat{\action}\first_i\sim \dista^1_i(\mathbf{v})]{\util_i(\mechf(\hat{\action}\first_i, \afirst_{-i}))} \\
&\geq \lambda \wel(\mathbf{v}) - \mu \rev(\afirst;\mechf)
\geq \lambda \wel(\mathbf{v}) - \mu \rev(\acombined;\mechc)
\end{align*}
where the first inequality holds by the definition of $\hat{\action}^2_i$, the second as  $\mechf$ is $(\lambda,\mu)$-smooth,
and the last inequality holds since mechanism $\mechs$ is weakly budget balanced.
Thus, the combined mechanism $\combinedGame$
is $(\lambda,\mu)$-smooth.
Finally, by \Cref{thm:smooth imply poa}, we have that the price of anarchy in the combined market is at most $\frac{\mu}{\lambda}$.
\end{proof}

\subsection{Proof of \cref{thm:poa for invest}}
Next we formally prove \cref{thm:poa for invest} for the setting in which agents can acquire costly signals about the valuations of other agents prior to participating in a combined mechanism. 

\begin{proof}[Proof of \cref{thm:poa for invest}]
Since mechanism $\mech$ is $(\lambda,\mu)$-smooth, 
by definition, 
for any value type $\type$,
there exists action distributions $\{\dista_i(\type)\}_{i\in [n]}$ such that
for any action profile $\action\in \actions$,
\begin{align*}
\sum_{i\in[n]} \expect[\action'_i\sim \dista_i(\type)]{\util_i(\mech(\action'_i, \action_{-i});\type_i)} \geq \lambda \wel(\type) - \mu \rev(\action;\mech).
\end{align*}

Suppose in equilibrium, for any agent $i$ 
the information acquisition strategy is $\bar{\sigma}_i(\type)$
and the bidding strategy is $\hat{\sigma}_i(\type,s_i)$.
Note that since the information acquisition decisions
are not revealed to the opponents, 
the distribution over bids of any agent $i$
is not affected by the information acquisition decisions taken 
by agent $j\neq i$. 
Let $G_i$ be the distribution over actions in the auction for agent $i$ under equilibrium strategies 
$\bar{\sigma}_i$ 
and $\hat{\sigma}_i$, when her type is distributed according to $\dist_i$.

Now consider the following deviating strategy for agent $i$. 
Agent $i$ will not acquire any information by adopting signal structure $\nosignal$.
Then in the auction, agent $i$ simulates the behavior of the other agents by first sampling
$\hat{\type}_{j}$ according distribution
$\dist_{j}$ for any $j\neq i$,
and then follow the action distribution $\dista_i(\type_i,\hat{\type}_{-i})$.
The expected utility of all agents given this deviating strategy is 
\begin{align*}
&\sum_{i\in [n]}\expect[\type_i\sim\dist]{
\expect[\action'_i\sim \dista_i(\type_i,\hat{\type}_{-i});
\action_{-i}\sim G_{-i}]{
\util_i(\mech(\action'_i, \action_{-i});\type_i)
}}\\
&= \sum_{i\in [n]}\expect[\type\sim\dist]{
\expect[\action'_i\sim \dista_i(\type);
\action_{-i}\sim G_{-i}]{
\util_i(\mech(\action'_i, \action_{-i});\type_i)
}}\\
&\geq 
\expect[\type\sim\dist]{
\lambda \cdot \wel(\type) - 
\mu \cdot \expect[\action \sim G]{
\rev(\action;\mech)}}\\
&= \lambda\cdot \wel(\dist) 
- \mu \cdot \expect[\action \sim G]{
\rev(\action;\mech)}
\end{align*}
where the first equality holds by renaming the random variables $\hat{\type}_{-i}$ as $\type_{-i}$.
The inequality holds by applying the definition of the smoothness 
and taking expectation over the actions according to distribution $G$.
Note that in every equilibrium
$\bar{\sigma}(\type)= (\bar{\sigma}_1(\type_1), \ldots, \bar{\sigma}_n(\type_n))$ and 
$\hat{\sigma}(\type,s)= (\hat{\sigma}_1(\type_1,s_1), \ldots, \hat{\sigma}_n(\type_n,s_n))$, 
the utility of any agent is at least her utility given the above deviating strategy. 
Thus, 
\begin{align}\label{eq:util.bound}
&\sum_{i\in [n]}\expect[\type\sim\dist]{
\expect[\signal\sim\tilde{\sigma}(\type);s\sim \signal(\type)]{
\expect[\action \sim \hat{\sigma}(\type,s)]{
\util_i(\mech(\action);\type_i)}
- \cost_i(\signal_i,\type_i)}}\nonumber\\
&\geq 
\lambda\cdot \wel(\dist) 
- \mu \cdot \expect[\action \sim G]{
\rev(\action;\mech)}.
\end{align}
By rearranging the terms and noting that the sum of expected utility is the difference between 
equilibrium welfare and the expected revenue, we also have
\begin{align}\label{eq:util.eq}
&\sum_{i\in [n]}\expect[\type\sim\dist]{
\expect[\signal\sim\tilde{\sigma}(\type);s\sim \signal(\type)]{
\expect[\action \sim \hat{\sigma}(\type,s)]{
\util_i(\mech(\action);\type_i)}
- \cost_i(\signal_i,\type_i)}} \nonumber\\
&= 
\wel(\mech,(\bar{\sigma},\hat{\sigma}),\dist)
- \expect[\action \sim G]{
\rev(\action;\mech)}.
\end{align}
Multiplying both sides of the equality~\eqref{eq:util.eq} with factor $\mu$ and combining it with the inequality~\eqref{eq:util.bound} above, 
and recalling that $\mu\geq 1$ and the equilibrium utility is non-negative, we have 
\begin{align*}
\mu\cdot \wel(\mech,(\bar{\sigma},\hat{\sigma}),\dist)
\geq \lambda\cdot \wel(\dist), 
\end{align*}
i.e., $\poa(\mech, \dists^{\Pi}) \leq \frac{\mu}{\lambda}$.
\end{proof}

\subsection{Other Smooth Auctions}
\label{app:smooth.table}

In Table~\ref{table:smooth} we list some additional examples of smooth auctions for allocation problems that fall within our framework.  For each auction format we note the smoothness bound as well as the implied welfare bound when the auction is used in a combined market.

\begin{table}[htp]
	\begin{center}
		\begin{tabular}{|c|c|c|c|}
			\hline
			{\bf Auction Environment} & {\bf Mechanism $\mechf$} & {\bf  Smoothness} &
			\begin{minipage}{70pt} \vspace{5pt}
				{\bf PoA in \\Combined Market~$\mechc$}
				\vspace{3pt} \end{minipage}\\
			\hline
			\multirow{2}{*}{single-item} & first-price auction & $(1-\sfrac{1}{e}, 1)^*$ & $\sfrac{e}{(e-1)}$\\
			\cline{2-4}
			& all-pay auction & $(\sfrac{1}{2},1)^*$ & $2$\\
			\hline
			\multirow{2}{*}{combinatorial, submodular } & simultaneous first price & $(1-\sfrac{1}{e}, 1)^\dagger$ & $\sfrac{e}{(e-1)}$\\
			\cline{2-4}
			& simultaneous all pay & $(\sfrac{1}{2},1)^\dagger$ & $2$\\
			\hline
		\end{tabular}
		\begin{minipage}{0.9\textwidth}{
				\center \footnotesize
				$^*$ \cite{roughgarden2012price} \qquad
				$^\dagger$ \cite{syrgkanis2013composable} \qquad
			}
		\end{minipage}
	\end{center}
	\caption{\label{table:smooth}
		The first column lists the auction environment and the second the auction mechanism. 	
		The third column lists the $(\lambda,\mu)$-smoothness
		results from the literature, 
		and by \cref{thm:poa combined market}, this implies the price of anarchy upper bound for the combined market, when every valuations are independently distributed, as listed in the last column. }
	\label{default}
\end{table}

\section{Low Welfare Perfect Bayesian Equilibrium}
\label{sec:sequential}

{In this section we prove \cref{prop:UDS}. The claim states that the strategies described form a perfect Bayesian equilibrium (PBE) in the combined market of \cref{example:IEWDS}. 
We actually prove a stronger claim, showing that these strategies form a sequential equilibrium (SE), which is a refinement of perfect Bayesian equilibrium. We start by defining the two notions (PBE and SE) for combined markets. We then present the proof of \cref{prop:UDS}. 
Finally, we present another example of welfare loss that is parameterized by the market power of any speculator. }

\subsection{Sequential Equilibrium in Combined Markets}

{
We now formally define perfect Bayesian equilibrium (PBE) and sequential equilibrium (SE).  
Fully general definitions of PBE and SE 
are beyond the scope of this paper, so we will provide a definition tailored to our setting of combined mechanisms.

Recall our description of the two-stage combined market game $\game(\mechf, \Gamma, \mechs)$.  Recall also that a BNE consists of a profile of agent strategies $\mathbf{\sigma}^1$ for the auction $\mechf$, where $\sigma_i^1$ maps agent $i$'s type to an action, and a profile of strategies $\mathbf{\sigma}^2$ for the secondary auction $\mechs$, where $\sigma_i^2$ maps agent $i$'s type and the realization of observations $(x^{\mechf}(a^1), p_i^{\mechf}(a^1), s_i)$ to an action.

A perfect Bayesian equilibrium (PBE) additionally includes a belief function $\beta_i$ for each agent $i$, which maps the realization of observations $(x^{\mechf}(a^1), p_i^{\mechf}(a^1), s_i)$ for agent $i$ into a distribution over the types of the other agents.  We think of $\beta_i$ as agent $i$'s posterior belief about the other agents' types given the observed outcome of the auction game.  We write $\beta = (\beta_1, \dotsc, \beta_n)$ for the tuple of belief functions.

The collection $(\sigma^1, \sigma^2, \beta)$ forms a perfect Bayesian equilibrium (PBE) if
\begin{itemize}
\sloppy
    \item $\beta_i$ is the correct posterior distribution over agent types, given the observations $(x^{\mechf}(a^1), p_i^{\mechf}(a^1), s_i)$, assuming agents behave according to $\sigma^1$ in the auction $\mechf$, given the prior type distribution.
    \item For any realization of auction outcomes and observations, strategy $\sigma^2_i$ maximizes the expected utility of agent $i$ in the secondary market $\mechs$ under beliefs $\beta_i$.
    \item Given that agents apply strategies $\sigma^2$ in the secondary market, following strategy $\sigma_i^1$ in the auction $\mechf$ maximizes the expected utility of agent $i$.
\end{itemize}
In words, a PBE satisfies subgame perfection (as agents must best-respond in the secondary market for any realization of the initial auction), and moreover agents update their beliefs rationally and consistently given the strategies of others.

An important subtlety is how Bayes' rule should be applied to events with $0$ probability, such as under deviations from the equilibrium.  
The definition of PBE does not specify what beliefs are to be held by an agent if they observe an auction outcome that has probability $0$ given the type distribution and strategy profile $\sigma^1$.  
In sequential equilibrium, these beliefs are constrained by
thinking of each strategy as a limit of `trembling' strategies in which all possible actions have a positive chance of being observed.
More formally, a sequential equilibrium (SE) is a PBE such that there exists a sequence of totally mixing strategy profiles $\{\sigma^{1,k}\}$ for the auction mechanism converging to $\sigma^1$ 
and a sequence of beliefs $\{\beta^k\}_{k\geq 1}$ converging to $\beta$ such that $\beta^k$ is consistent with each agent applying Bayes' rule to all observations, under the assumption that agents are applying strategy profile $\sigma^{1,k}$.
}

\subsection{Proof of \cref{prop:UDS}}

\begin{proof}[Proof of \cref{prop:UDS}]
{
We prove a stronger claim
by showing that this behavior forms a sequential equilibrium.
We first describe the beliefs of the agents about the valuation profile when entering the secondary market.  Specifically, we claim that each agent will have a posterior distribution that is simply equal to the prior distribution over all other agents' valuations.  That is, no agent receives an informative signal about the valuations of other agents.  These beliefs are indeed consistent with the suggested strategies, since the proposed bidding behavior of each agent in the auction does not depend on her valuation.  Furthermore, we assume that agents would rationalize any observed off-path behavior as independent of valuations, which is consistent with retaining the prior beliefs.

Now consider behavior in the secondary market given these beliefs. Whichever price speculator $C$ chooses, it is a dominant strategy for agents $A$ and $B$ to purchase items as long as their marginal values exceed the listed price.  Given this, it is dominant for agent $C$ to post a price that maximizes revenue, in expectation over her beliefs about the agent values.  One can verify that the suggested price of $1$ will always maximize revenue for agent $C$, given the prior valuation distribution, no matter what allocation is generated by the auction.  We can therefore assume that, no matter what outcome is generated by the auction, agent $C$ will list all acquired items at a price of $1$.

Given this behavior in the secondary market, we verify that the proposed bidding behavior in the auction is indeed an equilibrium. 
Indeed, any unilateral change in an agent's bid will either not change the outcome of the auction (when the acquired quantity is the same or lower), or turn the clearing price to $1$ or higher (if the agent obtains more items). 
This results in lower utility to every agent in the combined market, since items sell at a price of $1$ in the secondary market. 
}
\end{proof}

{
We note that this sequential equilibrium has an additional desirable property. 
To describe it, we first note that since the agents have uniquely optimal behavior in the secondary market, it is straightforward define payoffs in the initial uniform-price auction with respect to the utility that will be obtained assuming this secondary market behavior.  Thinking of payoffs in this way, our property is that the agents are not employing \emph{weakly dominated strategies} in the primary auction.

\begin{definition}
A strategy $\sigma^1_i$ is weakly dominated by strategy $\overline{\sigma}^1_i$ if, for every action profile $a^1_{-i}$ of the other agents, the total expected utility of agent $i$ is no greater when using $\sigma^1_i$ than when using $\overline{\sigma}^1_i$, and there exists some action profile for which $\overline{\sigma}^1_i$ is strictly better.  Strategy $\sigma^1_i$ is \emph{not weakly dominated} if there is no other strategy that weakly dominates it.
\end{definition}

For agents $A$ and $B$, it is easy to see that the chosen bid profiles are not weakly dominated.  Indeed, any bid less than $2$ on the first item would be worse against a competing bid strictly between the placed bid and $2$ on each item, and any non-zero bid on a second item would be strictly worse against a competitor that bids $2$ on $m-1$ items, as this would only increase the price paid.

To see why the bidding strategy of $C$ is not weakly dominated, notice that if agent $A$ wins two units in the primary market, then the speculator cannot obtain utility by selling to agent $A$ in the secondary market.  So any bid on fewer than $m-2$ units would be strictly worse than bidding $1$ on $m-2$ units if agent $A$ bids $\epsilon$ on two units and agent $B$ bids $1$ on a single unit.  Likewise, bidding $b < 1$ on any of the $m-2$ units would be strictly worse than bidding $1$ on $m-2$ units if agent $A$ bids $b+\epsilon$ on two units and agent $B$ bids $1$ on a single item, since the per-unit price is effectively unchanged but the speculator wins one fewer unit and agent $A$ wins one more, causing the speculator to lose the opportunity to sell a unit to agent $A$.  Finally, increasing the bids above $0$ on any of the remaining two units can only increase the price paid by agent $C$, which would strictly reduce her utility.
}

\subsection{Additional Example: Welfare Gap with Limited Auction Winnings per Agent}
\label{sec:lower.bound.gamma}


One aspect of the example proving Proposition~\ref{prop:UDS} is that, on the equilibrium path, a single speculator wins nearly all of the items in the auction.  In practice we may not expect to see such stark auction outcomes, or for a single agent to acquire such strong monopoly power.
The following example shows that the welfare gap from Proposition~\ref{prop:UDS} can be large even if each agent possesses only a bounded fraction $\gamma \in (0,1)$ of the items in the aftermarket.  One corollary is that even if a regulator imposes an auction rule limiting the number of items that can be acquired by any one participant, there can still be significant welfare loss due to speculation.

\begin{example}\label{example:additional refine}
For $\gamma<1$ and let $k=\left \lceil{ \frac{1}{\gamma}}\right \rceil$.
For an integer $r>3$, let $m=r\cdot k$. 
There are $m > 3k$ units to be allocated and $3k$ agents, which we categorize into $k$ groups.
Agents are named $A_i,B_i$ and $C_i$ inside group $i$.   
Agent $A_i$ has marginal value $2$ for the first unit, value uniformly sample from $[1,1.5]$ for the second unit, and $0$ for any subsequent units. 
Agent $B_i$ has the following distribution over valuations.  She always has marginal value $2$ for the first unit acquired, then a value $z_{B_i} > 0$ for each subsequent unit acquired. 
Here $z_{B_i}$ is a random variable drawn from a distribution with CDF $F_B(z) = 1-\frac{1}{1+(2m/k-1)z}$ for $z \in [0, 1)$ and $F_B(z) = z-\frac{k}{2m}$ for $z\in [1,1+\frac{k}{2m}]$. 
Note that given buyer value with distribution $F$,
the unique revenue maximizing price is $p=1$ with {expected revenue of} $p \cdot \Pr[z_B \geq p] = k/(2m)$ {per-unit}.
Agent $C_i$ has value $0$ for any number of units; we refer to agent $C_i$ as a \emph{speculator}.

The primary auction is a uniform-price auction with standard bidding.
In the secondary market, trading only occurs within each group. The speculator $C_i$ can put some or all of the items that she has acquired in the auction up for sale, at a take-it-or-leave-it per-item price $p$ of her choice. Agent $A_i$ has the first opportunity to purchase any (or all) of the items made available by $C_i$ paying $p$ per item acquired.  
Then agent $B_i$ has the option to purchase any item that are still available, at price $p$.
\end{example}

We now describe the bidding strategies in the primary auction and agents' behavior in the secondary market, for agents in each group $i$. 
In the auction, agents $A_i$ and $B_i$ each bid $2$ for exactly a single unit and $0$ for the rest of the units. The speculator $C_i$ bids $1$ for $r-2 = \frac{m}{k}-2$ units and $0$ for the rest of the units.
Note that this implies that each speculator wins at most $\gamma$ fraction of the items in the entire market. 
Then, in the secondary market, the speculator $C_i$ offers all the units she has acquired for the price of $1$, first to $A_i$ and the remaining item to $B_i$. Agent $A_i$ buys one unit, and agent $B_i$ buys all $\frac{m}{k}-3$ remaining units if $z_{B_i}\geq1$, and nothing otherwise.
Note that under this behaviour the price in the auction is 0. 
\begin{proposition}
The above behaviour in the combined market forms a sequential equilibrium.
{Moreover, no agent is using a weakly dominated strategy in the primary auction with respect to the payoffs implied by the secondary market.}
\end{proposition}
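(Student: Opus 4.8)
The plan is to mirror the proof of \cref{prop:UDS} almost verbatim, exploiting that \cref{example:additional refine} is simply $k$ disjoint copies of \cref{example:IEWDS} --- with every occurrence of $m$ replaced by the per-group size $r=m/k$ --- glued together only through a single global uniform-price auction. First I would fix beliefs exactly as before: since every agent's prescribed bid is independent of her realized valuation, the posterior each agent holds on entering her group's secondary market is just the prior over the other agents' types, and any off-path auction outcome is rationalized as uninformative about valuations. This assessment is a sequential equilibrium witness because the prescribed play is pure and the beliefs are the prior, so one can take the trembling sequence to be any totally mixed profile converging to the prescribed bids, with the induced (prior) beliefs.

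Next I would analyze the secondary markets, which is entirely a within-group argument since trade never crosses groups. In group $i$, whatever per-unit price the speculator $C_i$ posts, it is weakly dominant for $A_i$ and $B_i$ to buy exactly the units whose marginal value strictly exceeds that price; given that, $C_i$'s optimal posted price against the prior is the single-reseller pricing problem of \cref{example:IEWDS} with $m\mapsto r$: the sale to $A_i$ earns revenue $1$ (since $A_i$'s second-unit marginal lies in $[1,1.5]\ge 1$ whenever $A_i$ holds fewer than two units), and $p\mapsto p\,\Pr[z_{B_i}>p]$ is maximized uniquely at $p=1$ with per-unit value $k/(2m)$ --- by checking that this map is increasing on $[0,1)$ and decreasing on $[1,1+k/(2m)]$ under $F_B$. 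Hence, on and off path, $C_i$ lists everything at price $1$, $A_i$ takes one unit, and $B_i$ takes the rest iff $z_{B_i}\ge 1$. Then I would verify the auction step: under the prescribed bids there are exactly $m$ positive marginal bids ($2k$ at value $2$, $k(r-2)$ at value $1$), they all win, and the highest losing bid is $0$, so the clearing price is $0$ and the allocation is as claimed. For a unilateral auction deviation, if it does not raise the deviator's won quantity then either nothing changes or she wins (weakly) less at a clearing price still bounded by the others' zero bids --- not profitable; if it does raise her quantity, she must place a positive bid on an extra unit, which either loses (pushing the clearing price strictly above $0$ for every winner, including her) or displaces a value-$1$ bid (pushing the clearing price to $1$), and since each item resells for only $1$, acquiring extra units at a positive per-unit auction price is strictly worse than buying them from her group's speculator at price $1$ --- or, for a speculator herself, strictly worse since she would pay at least $r-1$ at auction while reselling at most $\tfrac{3}{2}$.

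The one genuinely new ingredient relative to \cref{prop:UDS} is that a clearing-price change caused by a deviation ripples to \emph{all} groups; I would dispatch this by the simple observation that a deviator's own payoff depends only on her own valuation, her own auction payment, and her own within-group trades, so the spillover onto the other groups' payments is payoff-irrelevant to her, and the per-group argument above is unaffected. Finally, for ``no weakly dominated strategy'' I would reuse the witnessing opponent profiles from the proof of \cref{prop:UDS}, rescaled to group size $r$, letting all agents outside the relevant group play their prescribed bids so the within-group bid counts are unchanged: against a competitor bidding $2$ on $m-1$ units, $A_i$ and $B_i$ strictly prefer their prescribed bid (any lower first bid loses to an intermediate competing bid; any positive second bid only raises the price), and against $A_i$ bidding $b+\epsilon$ on two units, $B_i$ bidding $1$ on one unit, and the remaining groups bidding as prescribed, a bid count shows that bidding $1$ on all $r-2$ units leaves $A_i$ with one unit (so $C_i$'s revenue-$1$ resale to $A_i$ survives), whereas any bid on fewer units, or $b<1$ on some unit, lets $A_i$ win two units and destroys that resale while changing $C_i$'s per-unit auction price by at most $\epsilon$; letting $\epsilon\to 0$ makes the prescribed bid strictly better. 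The main obstacle is purely bookkeeping --- keeping the global marginal-bid counts straight so each per-group argument ports over unchanged, and confirming in each case that clearing-price spillovers are irrelevant to the deviator's payoff.
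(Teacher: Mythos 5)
Your proposal is correct and follows essentially the same route as the paper, which omits this proof entirely with the remark that it is ``essentially identical to the proof of \cref{prop:UDS}''; you carry out exactly that adaptation (prior beliefs, dominance-solvable within-group secondary markets, the clearing-price deviation analysis, and the rescaled weak-dominance witnesses). Your one added observation---that a deviator's effect on the global clearing price spills over to other groups but is payoff-irrelevant to the deviator herself---is the right way to justify the paper's ``essentially identical'' claim.
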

The proof is essentially identical to the proof of \cref{prop:UDS}, and hence omitted here.

Consider the social welfare obtained in the combined market. 
The total expected welfare obtained for each group $i$ under this behavior is at most $6$ (as $2+1.5+2+\frac{k}{2m}\cdot (\frac{m}{k}-3)\cdot \left(1+\frac{k}{2m}\right)\leq 6$), 
whereas the optimal expected welfare for group $i$ is at least $4 + (\frac{m}{k}-2)\E[z]$. 
It is easy to compute that $\E[z] \geq \int_0^1 \frac{1}{1+(2m/k-1)z} = \Theta(\frac{k}{m}\cdot\log \frac{m}{k})$, and hence the optimal expected welfare is $\Theta(\log \frac{m}{k}) = \Theta(\log \gamma m)$.
Thus, if this behavior occurs at equilibrium, this implies that the price of anarchy for this combined market is $\Theta(\log \gamma m)$, {growing unboundedly large with~$\gamma m$}.

\section{Missing Proofs for Balanced Prices}
\label{apx:price proof}
\subsection{Proof of \cref{thm:poa.pricing}}
\begin{proof}[Proof of \cref{thm:poa.pricing}]
Our argument is an adaptation of a proof method due to \citet{dutting2020prophet}.
Our approach is to use the definition of balancedness to derive lower bounds on the utility obtained by the buyers and on the revenue collected by the primary auction mechanism.  Since agents achieve non-negative utility at equilibrium, and since the secondary market is budget-balanced, we can add these together to obtain a lower bound on the welfare generated in the combined market.

Since $\mechf$ is a posted-price mechanism that uses $(\alpha,\beta)$-balanced prices, we know that it uses a static pricing function $p$ defined by
$p(x) = \frac{\alpha}{1+\alpha\beta}\text{\bf E}_\mathbf{v}[p^\mathbf{v}(x)]$  where 
{for each possible type profile $\mathbf{v}$ the price $p^{\mathbf{v}}$ is $(\alpha,\beta)$-balanced for $\mathbf{v}$.}
Fix some Nash equilibrium of the combined mechanism $\mechc$, and write $\mathbf{x}(\mathbf{v})$ for the allocation obtained by the posted-price mechanism $\mechf$, at this equilibrium, when the valuation profile is~$\mathbf{v}$. 
We emphasize that $\mathbf{x}(\mathbf{v})$ is the allocation from the posted-price mechanism, and does not account for any transfer of items that might occur in the secondary market. 
We will also write $u_i(\mathbf{v})$ for the utility obtained by agent $i$ in the combined market (including any transfers that occur in the secondary market) when valuations are $\mathbf{v}$.

We first bound the consumer surplus.  We will have each player consider a strategy that purchases a certain collection of items in the primary market and avoids participating in the secondary market.  This strategy will always be feasible, since $\mechs$ is a trade mechanism.  {The strategy will depend only on the distribution over valuations, so in particular it will be independent of the signaling scheme $\Gamma$ and the details of $\mechs$.  Our utility bound will therefore hold for any choice of $\Gamma$ and any trade mechanism $\mechs$.}

To describe these strategies, we first sample ``phantom'' valuations $\mathbf{v}' \sim F$.  Buyer $i$ will consider buying the set of items $\opt_i( (v_i, \mathbf{v}'_{-i}), M\backslash \mathbf{x}(v'_i, \mathbf{v}_{-i}) )$ at the posted prices.  This is buyer $i$'s part of the optimal allocation, under valuations $(v_i, \mathbf{v}'_{-i})$, of all items that are not allocated (at equilibrium) by the posted price mechanism under valuations $(v'_i, \mathbf{v}_{-i})$. 
This strange choice of valuation profiles 
is carefully chosen to make the valuation of the allocation independent of the items available, while still coupling with the outcomes under valuation profile $\mathbf{v}$.  Since $\opt_i( (v_i, \mathbf{v}'_{-i}), M\backslash \mathbf{x}(v'_i, \mathbf{v}_{-i}) )$ is a subset of the items actually available to agent $i$ when agents have valuations $\mathbf{v}$ {(as the set of items available to agent $i$ only depends on the valuation profile $\mathbf{v}_{-i}$ in a posted pricing mechanism)}, her expected utility $\text{\bf E}_{\mathbf{v}}[u_i(\mathbf{v})]$ is at least the expected utility of following this strategy (and not participating in the secondary market).  That is,
\begin{align*}
    \text{\bf E}_\mathbf{v}[u_i(\mathbf{v})] & \geq  \text{\bf E}_{\mathbf{v}, \mathbf{v}'}[ v_i(\opt_i( (v_i, \mathbf{v}'_{-i}), M\backslash \mathbf{x}(v'_i, \mathbf{v}_{-i}) )) - p(\opt_i( (v_i, \mathbf{v}'_{-i}), M\backslash \mathbf{x}(v'_i, \mathbf{v}_{-i}) )) ] \\
    & = \text{\bf E}_{\mathbf{v},\mathbf{v}'}[v'_i(\opt_i(\mathbf{v}', M\backslash \mathbf{x}(\mathbf{v}))) - p(\opt_i(\mathbf{v}', M\backslash \mathbf{x}(\mathbf{v})))].
\end{align*}
Here the inequality holds because agent $i$ is choosing her utility-optimal allocation given the prices, and the equality holds by a change of variables (swapping the role of $v'_i$ and $v_i$).

Summing the previous inequality over all buyers gives
\begin{equation}
\label{eq:util.1}
    \text{\bf E}_\mathbf{v}\left[\sum_i u_i(\mathbf{v})\right] \geq \text{\bf E}_{\mathbf{v},\mathbf{v}'}\left[\mathbf{v}'(\opt(\mathbf{v}', M\backslash \mathbf{x}(\mathbf{v})))\right] - \text{\bf E}_{\mathbf{v},\mathbf{v}'}\left[\sum_i p(\opt(\mathbf{v}', M\backslash \mathbf{x}(\mathbf{v}))) \right].
\end{equation}
Recalling that we set $p_i$ to be $\frac{\alpha}{1+\alpha\beta}$ times $\text{\bf E}_v[p^v_i]$, we can use linearity of expectation plus the definition of balanced prices (applied pointwise to each realization of $\mathbf{v}'$ and $\mathbf{x}(\mathbf{v})$) to conclude that
\[ \text{\bf E}_{\mathbf{v},\mathbf{v}'}\left[\sum_i p(\opt(\mathbf{v}', M\backslash \mathbf{x}(\mathbf{v})))\right] \leq \beta \cdot \frac{\alpha}{\alpha\beta + 1} \cdot \text{\bf E}_{\mathbf{v},\mathbf{v}'}[\mathbf{v}'(\opt(\mathbf{v}', M\backslash\mathbf{x}(\mathbf{v})))]. \]
Substituting into \eqref{eq:util.1} yields
\begin{equation}
\label{eq:util.2}    
\text{\bf E}_\mathbf{v}\left[\sum_i u_i(\mathbf{v})\right] \geq \left(1 - \frac{\alpha\beta}{1+\alpha\beta}\right) \text{\bf E}_{\mathbf{v},\mathbf{v}'}\left[\mathbf{v}'(\opt(\mathbf{v}', M\backslash \mathbf{x}(\mathbf{v})))\right] = \frac{1}{1+\alpha\beta}\text{\bf E}_{\mathbf{v},\mathbf{v}'}\left[\mathbf{v}'(\opt(\mathbf{v}', M\backslash \mathbf{x}(\mathbf{v})))\right].
\end{equation} 

We next provide a bound on the revenue generated in the posted-price mechanism at equilibrium.  Recall that $\mathbf{x}(\mathbf{v})$ is the equilibrium allocation in the posted-price mechanism under valuations $\mathbf{v}$.  From the definition of balanced prices we have
\begin{align*}
    \sum_i p(x_i(\mathbf{v})) &= \frac{\alpha}{1+\alpha\beta}\sum_i \text{\bf E}_{\mathbf{v}'}\left[p^{\mathbf{v}'}(x_i(\mathbf{v}))\right] \\
    & \geq \frac{1}{1+\alpha\beta} \text{\bf E}_{\mathbf{v}'}[\mathbf{v}'(\opt(\mathbf{v}')) - \mathbf{v}'(\opt(\mathbf{v}', M\backslash \mathbf{x}(\mathbf{v})))].
\end{align*}
Taking expectations over $\mathbf{v}\sim F$ yields
\begin{equation}
\label{eq:rev.1}
\text{\bf E}_\mathbf{v}\left[\sum_i p(x_i(\mathbf{v}))\right] \geq \frac{1}{1+\alpha\beta}\text{\bf E}_\mathbf{v}[\mathbf{v}(\opt(\mathbf{v}))] - \frac{1}{1+\alpha\beta}\text{\bf E}_{\mathbf{v}',\mathbf{v}}[\mathbf{v}'(\opt(\mathbf{v}', M\backslash \mathbf{x}(\mathbf{v})))].
\end{equation}

We are finally ready to bound the expected welfare obtained at equilibrium.  We first argue that the expected welfare obtained at equilibrium is at least the sum of the expected utility bound \eqref{eq:util.2} and expected revenue bound \eqref{eq:rev.1} calculated above.  Recall that the utility bound is a lower bound on the expected utility obtained from the buyers in the combined mechanism.  Moreover, since the secondary market $\mechs$ is a trade mechanism, it is budget-balanced and hence no additional revenue is extracted or lost.  Thus our bound on revenue collected in the auction also bounds the total revenue raised for the combined mechanism.  The total welfare obtained by the combined mechanism $\mechc$ is therefore the sum of the consumer surplus and the revenue generated by the posted price mechanism.

Summing up the utility and revenue bounds~\eqref{eq:util.2} and~\eqref{eq:rev.1}, we conclude that the total expected welfare of $\mechc$ is at least
\[ \text{\bf E}_\mathbf{v}\left[\sum_i u_i(\mathbf{v})\right] + \text{\bf E}_\mathbf{v}\left[\sum_i p(x_i(\mathbf{v}))\right] \geq \frac{1}{1+\alpha\beta}\text{\bf E}_\mathbf{v}[\mathbf{v}(\opt(\mathbf{v}))] \]
as claimed.
%
%
\end{proof}

\subsection{Proof of \cref{thm:reserve}}
\begin{proof}[Proof of \cref{thm:reserve}]
Our analysis is similar to the proof of Theorem~\ref{thm:poa.pricing}.  
Fix some Bayes-Nash equilibrium of the combined mechanism. 
Let $\mathbf{b}(\mathbf{v})$ be the bids made in the auction at this equilibrium
and $\opt(\mathbf{v}; k)$ be the welfare-optimal allocation of $k$ units when the valuation profile is $\mathbf{v}$. 
We denote the allocation of the auction under bids $\mathbf{b}$
as $\mathbf{x}(\mathbf{b})$, 
and the number of items left unallocated by the auction when agents bid according to $\mathbf{b}$ as $Z(\mathbf{b})$.
That is, $Z(\mathbf{b}) = m - \sum_i x_i(\mathbf{b})$.

We first bound the buyer surplus at equilibrium. To this end, fix a valuation profile $\mathbf{v}$ and consider a possible deviation by buyer $i$. 
Sample phantom valuations $\mathbf{v}' \sim F$. 
Let $z(v_i)$ be the largest index $j$ such that $v_{ij} \geq p$. 
Note that due to the reserve price, agent $i$ would obtain negative marginal utility for any items won in excess of $z(v_i)$. 
Let $y_i = \min\{z(v_i), \opt_i((v_i, \mathbf{v}'_{-i}); m)\}$ 
be agent $i$'s optimal allocation when others' valuations are $\mathbf{v}'_{-i}$, excluding any items for which her marginal value is less than $p$.
Our proposed deviation for agent $i$ is to place an auction bid of $b'_i$ where $b'_{ij} = p$ for $j \leq y_i$, and $b_{ij} = 0$ for $j > y_i$, 
then not participate in the secondary market. 
Importantly, this deviation depends on $v_i$ but not $v_{-i}$, and this deviation is feasible given any signaling protocol $\Gamma$. 
Note also that the utility obtained under this deviation can only be non-negative since $y_i \leq z(v_i)$ and the price paid per item obtained is exactly $p$.

Under this deviation, either agent $i$ wins $y_i$ items or all items are sold, 
and in the latter case agent~$i$ receives all items not allocated to the other agents under bid profile $b'_i(v'_i), \mathbf{b}_{-i}(\mathbf{v}_{-i})$.
Note that this quantity should be at least the number of items such that the bids of agents from $-i$ are strictly below $p$,
which again is at least 
$Z(b_i(v'_i), \mathbf{b}_{-i}(\mathbf{v}_{-i}))$, the number of items unallocated if we also include agent $i$ bidding at equilibrium as if her valuation is $v'_i$. 
Thus we conclude that 
\[ x_i(b'_i, \mathbf{b}_{-i}(\mathbf{v}_{-i})) \geq \min\{y_i, Z(\mathbf{b}(v'_i, \mathbf{v}_{-i}))\}.\]



The right hand side term of the above inequality is at least $\min\{z(v_i), \opt_i((v_i, \mathbf{v}'_{-i}); Z(\mathbf{b}(v'_i, \mathbf{v}_{-i})))\}$, 
agent $i$'s part of the optimal allocation of items left unallocated, up to a maximum of $z(v_i)$.  We conclude that
\begin{align*}
    \E_{\mathbf{v}}[u_i(\mathbf{v})]
    &\geq \E_{\mathbf{v},\mathbf{v}'}[v_i(\min\{z(v_i), \opt_i((v_i, \mathbf{v}'_{-i}); Z(\mathbf{b}(v'_i, \mathbf{v}_{-i})))\}) \\
    &\qquad\qquad\qquad - p \times \min\{z(v_i), \opt_i((v_i, \mathbf{v}'_{-i}); Z(\mathbf{b}(v'_i, \mathbf{v}_{-i})))\}] \\
    &\geq \E_{\mathbf{v},\mathbf{v}'}[v_i(\opt_i((v_i, \mathbf{v}'_{-i}); Z(\mathbf{b}(v'_i, \mathbf{v}_{-i})))) - p \times \opt_i((v_i, \mathbf{v}'_{-i}); Z(\mathbf{b}(v'_i, \mathbf{v}_{-i})))] \\    
    & = \E_{\mathbf{v},\mathbf{v}'}[v_i(\opt_i(\mathbf{v}; Z(\mathbf{b}(\mathbf{v}')))) - p \times \opt_i(\mathbf{v}; Z(\mathbf{b}(\mathbf{v}')))]
\end{align*}
where the second inequality follows because dropping the minimum with $z(v_i)$ can only introduce items with negative marginal utility at price $p$, and the final equality is a change of variables.

Summing over all agents, we have
\begin{align*}
    \sum_i \E_{\mathbf{v}}[u_i(\mathbf{v})]
    & \geq \sum_i \E_{\mathbf{v},\mathbf{v}'}[v_i(\opt_i(\mathbf{v}; Z(\mathbf{b}(\mathbf{v}')))) - p \times \sum_i \E_{\mathbf{v},\mathbf{v}'}[\opt_i(\mathbf{v}; Z(\mathbf{b}(\mathbf{v}')))]\\
    & \geq \E_{\mathbf{v},\mathbf{v}'}\left[\frac{Z(\mathbf{b}(\mathbf{v}'))}{m}\times \mathbf{v}(\opt(\mathbf{v}))\right] - p \times \E_{\mathbf{v}'}[Z(\mathbf{b}(\mathbf{v}'))]\\
    &= \E_{\mathbf{v}'}[Z(\mathbf{b}(\mathbf{v}'))] \times \E_{\mathbf{v}}\left[\frac{1}{m} \mathbf{v}(\opt(\mathbf{v}))\right] - p \times \E_{\mathbf{v}'}[Z(\mathbf{b}(\mathbf{v}'))]\\
    &= p \times \E_{\mathbf{v}}[Z(\mathbf{b}(\mathbf{v}))]
\end{align*}
%
%

On the other hand, the expected revenue raised is at least $p$ times the expected number of items sold in the auction, due to the reserve price.  That is, the revenue is at least
\begin{align*}
    p \times (m - \E_{\mathbf{v}}[Z(\mathbf{b}(\mathbf{v}))]).
\end{align*}
As in the proof of Theorem~\ref{thm:poa.pricing}, the total welfare of the combined mechanism is at least the sum of buyer surplus and the revenue generated by the auction.  Thus, summing up expected buyer surplus and expected revenue, we have that the total welfare generated at equilibrium is at least
\begin{align*}
    &p \cdot \E_{\mathbf{v}}[Z(\mathbf{b}(\mathbf{v}))] 
    + p (m - \E_{\mathbf{v}}[Z(\mathbf{b}(\mathbf{v}))]) 
    = p\cdot m 
    = \frac{1}{2}\cdot\E_{\mathbf{v}}[\mathbf{v}(\opt(\mathbf{v}))]
\end{align*}
as required.
%
%
\end{proof}

\section{Uniform-Price Auction with Imprecise Reserves}
\label{sec:uniform.errors}

We now state and prove a 
{generalized version}
of Theorem~\ref{thm:reserve} in which the reserve price used in the uniform-price auction may differ from the suggested value, {proving that the equilibrium welfare guarantee degrades gracefully with the error in price. In the special case that the price is exactly the suggested value, we get Theorem~\ref{thm:reserve} as a special case.} 


\begin{theorem}\label{thm:reserve.errors}
For a multi-unit allocation problem with non-increasing marginal values, consider a uniform-price auction subject to a per-unit reserve price of $p'$ such that $|p' - p| < \epsilon$ where
$p = \frac{1}{2m}\mathbf{v}(\opt(\mathbf{v}))$.
Then for any trade mechanism, and at any equilibrium of the combined market, the expected welfare is at least 
$\frac{1}{2}E[v(\opt(v))] - m\epsilon$.
\end{theorem}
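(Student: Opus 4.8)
The plan is to re-run the proof of Theorem~\ref{thm:reserve} essentially verbatim, carrying the \emph{actual} reserve $p'$ through every step in place of the ideal value $p=\frac{1}{2m}\E_{\mathbf{v}\sim\dist}[\mathbf{v}(\opt(\mathbf{v}))]$ (note $p$ in the statement must be read as this prior expectation, since a reserve cannot depend on the realized profile), and then to absorb the gap $|p'-p|<\epsilon$ into an additive error at the very end. As in Theorem~\ref{thm:reserve}, the expected welfare of the combined mechanism at any Bayes--Nash equilibrium is at least the sum of (i) the expected buyer surplus and (ii) the expected revenue of the uniform-price auction, because the secondary market is a trade mechanism and hence weakly budget balanced: it contributes no net payments and only weakly increases surplus. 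So it suffices to lower-bound (i) and (ii) separately in terms of $p'$.

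For the revenue term, every item sold clears at a per-unit price of at least $p'$ by the reserve, so the expected revenue is at least $p'\,(m-\E_{\mathbf{v}}[Z(\mathbf{b}(\mathbf{v}))])$, where $Z(\mathbf{b})$ is the number of items left unsold under bids $\mathbf{b}$; since $p'>p-\epsilon$ this is at least $(p-\epsilon)(m-\E_{\mathbf{v}}[Z(\mathbf{b}(\mathbf{v}))])$. For the surplus term I would reuse the phantom-valuation deviation from the proof of Theorem~\ref{thm:reserve}, but with the threshold taken relative to $p'$: redefine $z(v_i)$ as the largest index $j$ with $v_{ij}\geq p'$, have buyer $i$ deviate to bidding $p'$ on her first $y_i=\min\{z(v_i),\ \opt_i((v_i,\mathbf{v}'_{-i});m)\}$ units (and $0$ afterwards) and to abstain from the aftermarket. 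The same reasoning as in Theorem~\ref{thm:reserve} --- feasibility of this bid under any signaling protocol $\Gamma$; the fact that it still secures agent $i$'s part of the optimal allocation of the $Z(\mathbf{b}(\mathbf{v}'))$ unsold units; dropping the $\min$ with $z(v_i)$ can only add units of marginal value below $p'$, hence of nonpositive marginal utility at price $p'$; and $\sum_i v_i(\opt_i(\mathbf{v};k))=\mathbf{v}(\opt(\mathbf{v};k))\geq\frac{k}{m}\mathbf{v}(\opt(\mathbf{v}))$ with $\sum_i\opt_i(\mathbf{v};k)\leq k$ --- yields
\begin{align*}
\sum_i \E_{\mathbf{v}}[u_i(\mathbf{v})]
&\geq \E_{\mathbf{v}'}[Z(\mathbf{b}(\mathbf{v}'))]\cdot\E_{\mathbf{v}}\!\left[\tfrac{1}{m}\mathbf{v}(\opt(\mathbf{v}))\right]-p'\cdot\E_{\mathbf{v}'}[Z(\mathbf{b}(\mathbf{v}'))] \\
&= (2p-p')\cdot\E_{\mathbf{v}}[Z(\mathbf{b}(\mathbf{v}))],
\end{align*}
using $\E_{\mathbf{v}}[\tfrac1m\mathbf{v}(\opt(\mathbf{v}))]=2p$ and a change of variables between $\mathbf{v}$ and $\mathbf{v}'$; and since $p'-p<\epsilon$ and $Z\geq 0$ this is at least $(p-\epsilon)\,\E_{\mathbf{v}}[Z(\mathbf{b}(\mathbf{v}))]$.

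Adding (i) and (ii) telescopes the $Z$-terms: the expected equilibrium welfare is at least $(p-\epsilon)\,\E_{\mathbf{v}}[Z(\mathbf{b}(\mathbf{v}))]+(p-\epsilon)(m-\E_{\mathbf{v}}[Z(\mathbf{b}(\mathbf{v}))])=(p-\epsilon)m=\tfrac12\E_{\mathbf{v}}[\mathbf{v}(\opt(\mathbf{v}))]-m\epsilon$, which is the claim, and $\epsilon=0$ recovers Theorem~\ref{thm:reserve}.

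I do not expect a genuine obstacle: this is a robustness bookkeeping argument whose intuitive content is exactly the one stated in the body (lowering the reserve by at most $\epsilon$ costs at most $\epsilon$ of revenue per item, raising it by at most $\epsilon$ costs at most $\epsilon$ of surplus per item, over $m$ items). The only points needing care are making the deviation threshold $p'$ rather than $p$, so that the deviating bid is genuinely individually rational and the ``drop the $\min$'' step still discards only weakly harmful units; the harmless sign issue when $2p-p'$ could be negative (then $p-\epsilon<0$ as well and the bound is trivial, since buyer surplus is nonnegative); and recording at the outset that $p$ in the statement denotes the prior expectation $\frac{1}{2m}\E_{\mathbf{v}\sim\dist}[\mathbf{v}(\opt(\mathbf{v}))]$.
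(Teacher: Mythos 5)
Your proposal is correct and follows essentially the same route as the paper's proof: re-run the phantom-valuation deviation argument from Theorem~\ref{thm:reserve} with the actual reserve $p'$ in the deviating bids and in the revenue bound, then use $|p'-p|<\epsilon$ to replace $p'$ by $p-\epsilon$ in both the surplus and revenue lower bounds before summing to $(p-\epsilon)m$. Your remark that the threshold $z(v_i)$ and the ``drop the min'' step should be stated relative to $p'$ rather than $p$ is a small but genuine tightening of the paper's write-up, which is slightly inconsistent on this point.
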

\begin{proof}
Fix some Bayes-Nash equilibrium of the combined mechanism. 
Let $\mathbf{b}(\mathbf{v})$ be the bids made in the auction at this equilibrium
and $\opt(\mathbf{v}; k)$ be the welfare-optimal allocation of $k$ units when the valuation profile is $\mathbf{v}$. 
We denote the allocation of the auction under bids $\mathbf{b}$
as $\mathbf{x}(\mathbf{b})$, 
and the number of items left unallocated by the auction when agents bid according to $\mathbf{b}$ as $Z(\mathbf{b})$.
That is, $Z(\mathbf{b}) = m - \sum_i x_i(\mathbf{b})$.

We first bound the buyer surplus at equilibrium. To this end, fix a valuation profile $\mathbf{v}$ and consider a possible deviation by buyer $i$. 
Sample phantom valuations $\mathbf{v}' \sim F$. 
Let $z(v_i)$ be the largest index $j$ such that $v_{ij} \geq p$. 
Note that due to the reserve price, agent $i$ would obtain negative marginal utility for any items won in excess of $z(v_i)$. 
Let $y_i = \min\{z(v_i), \opt_i((v_i, \mathbf{v}'_{-i}); m)\}$ 
be agent $i$'s optimal allocation when others' valuations are $\mathbf{v}'_{-i}$, excluding any items for which her marginal value is less than $p'$.
Our proposed deviation for agent $i$ is to place an auction bid of $b'_i$ where $b'_{ij} = p'$ for $j \leq y_i$, and $b_{ij} = 0$ for $j > y_i$, 
then not participate in the secondary market. 
Importantly, this deviation depends on $v_i$ but not $v_{-i}$. 
Note also that the utility obtained under this deviation can only be non-negative since $y_i \leq z(v_i)$ and the price paid per item obtained is exactly $p'$.

Under this deviation, either agent $i$ wins $y_i$ items or all items are sold, 
and in the latter case agent~$i$ receives all items not allocated to the other agents under bid profile $b'_i(v'_i), \mathbf{b}_{-i}(\mathbf{v}_{-i})$.
Note that this quantity should be at least the number of items such that the bids of agents from $-i$ are strictly below $p'$,
which again is at least 
$Z(b_i(v'_i), \mathbf{b}_{-i}(\mathbf{v}_{-i}))$, the number of items unallocated if we also include agent $i$ bidding at equilibrium as if her valuation is $v'_i$. 
Thus we conclude that 
$$ x_i(b'_i, \mathbf{b}_{-i}(\mathbf{v}_{-i})) \geq \min\{y_i, Z(\mathbf{b}(v'_i, \mathbf{v}_{-i}))\}.$$



The right hand side term of the above inequality is at least $\min\{z(v_i), \opt_i((v_i, \mathbf{v}'_{-i}); Z(\mathbf{b}(v'_i, \mathbf{v}_{-i})))\}$, 
agent $i$'s part of the optimal allocation of items left unallocated, up to a maximum of $z(v_i)$.  We conclude that
\begin{align*}
    \E_{\mathbf{v}}[u_i(\mathbf{v})]
    &\geq \E_{\mathbf{v},\mathbf{v}'}[v_i(\min\{z(v_i), \opt_i((v_i, \mathbf{v}'_{-i}); Z(\mathbf{b}(v'_i, \mathbf{v}_{-i})))\}) \\
    &\qquad\qquad\qquad - p' \times \min\{z(v_i), \opt_i((v_i, \mathbf{v}'_{-i}); Z(\mathbf{b}(v'_i, \mathbf{v}_{-i})))\}] \\
    &\geq \E_{\mathbf{v},\mathbf{v}'}[v_i(\opt_i((v_i, \mathbf{v}'_{-i}); Z(\mathbf{b}(v'_i, \mathbf{v}_{-i})))) - p' \times \opt_i((v_i, \mathbf{v}'_{-i}); Z(\mathbf{b}(v'_i, \mathbf{v}_{-i})))] \\    
    & = \E_{\mathbf{v},\mathbf{v}'}[v_i(\opt_i(\mathbf{v}; Z(\mathbf{b}(\mathbf{v}')))) - p' \times \opt_i(\mathbf{v}; Z(\mathbf{b}(\mathbf{v}')))]
\end{align*}
where the second inequality follows because dropping the minimum with $z(v_i)$ can only introduce items with negative marginal utility at price $p$, and the final equality is a change of variables.

Summing over all agents, we have
\begin{align*}
    \sum_i \E_{\mathbf{v}}[u_i(\mathbf{v})]
    & \geq \sum_i \E_{\mathbf{v},\mathbf{v}'}[v_i(\opt_i(\mathbf{v}; Z(\mathbf{b}(\mathbf{v}')))) - p' \times \sum_i \E_{\mathbf{v},\mathbf{v}'}[\opt_i(\mathbf{v}; Z(\mathbf{b}(\mathbf{v}')))]\\
    & \geq \E_{\mathbf{v},\mathbf{v}'}\left[\frac{Z(\mathbf{b}(\mathbf{v}'))}{m}\times \mathbf{v}(\opt(\mathbf{v}))\right] - p' \times \E_{\mathbf{v}'}[Z(\mathbf{b}(\mathbf{v}'))]\\
    &= \E_{\mathbf{v}'}[Z(\mathbf{b}(\mathbf{v}'))] \times \E_{\mathbf{v}}\left[\frac{1}{m} \mathbf{v}(\opt(\mathbf{v}))\right] - p' \times \E_{\mathbf{v}'}[Z(\mathbf{b}(\mathbf{v}'))]\\
    &\geq \E_{\mathbf{v}'}[Z(\mathbf{b}(\mathbf{v}'))] \times \E_{\mathbf{v}}\left[\frac{1}{m} \mathbf{v}(\opt(\mathbf{v}))\right] - (p + \epsilon) \times \E_{\mathbf{v}'}[Z(\mathbf{b}(\mathbf{v}'))]\\
    &= (p - \epsilon) \times \E_{\mathbf{v}}[Z(\mathbf{b}(\mathbf{v}))]
\end{align*}
where the last inequality uses the assumption that $|p' - p| < \epsilon$, and the final equality uses the definition of $p$.
%
%

On the other hand, the expected revenue raised is at least $p'$ times the expected number of items sold in the auction, due to the reserve price.  That is, the revenue is at least
\begin{align*}
    p' \times (m - \E_{\mathbf{v}}[Z(\mathbf{b}(\mathbf{v}))]) \geq (p - \epsilon) \times (m - \E_{\mathbf{v}}[Z(\mathbf{b}(\mathbf{v}))]).
\end{align*}
As in the proof of Theorem~\ref{thm:poa.pricing}, the total welfare of the combined mechanism is at least the sum of buyer surplus and the revenue generated by the auction.  Thus, summing up expected buyer surplus and expected revenue, we have that the total welfare generated at equilibrium is at least
\begin{align*}
    &(p - \epsilon) \cdot \E_{\mathbf{v}}[Z(\mathbf{b}(\mathbf{v}))] 
    + (p - \epsilon) (m - \E_{\mathbf{v}}[Z(\mathbf{b}(\mathbf{v}))]) 
    = (p - \epsilon)\cdot m 
    = \frac{1}{2}\cdot\E_{\mathbf{v}}[\mathbf{v}(\opt(\mathbf{v}))] - m \epsilon
\end{align*}
as required.
%
%
\end{proof}

\section{Efficiency in Single-Item Symmetric Environments}
\label{apx:single efficient}
We have shown that the welfare loss is bounded for various auctions.
In this section, we are interested in understanding conditions under which there will not be any welfare loss.
We present such conditions for some single-item auctions (e.g., first-price auctions),\footnote{Second-price auctions have multiple Nash equilibria, so equilibrium selection is an issue there, even within a standalone market.} 
in the well-studied case of two agents with i.i.d.\ valuation distributions.
Note that without aftermarkets, 
\citet{chawla2013auctions} show that for symmetric distributions, 
the first-price auction has a unique equilibrium, and it is efficient.

In \citet{hafalir2008asymmetric},
the authors consider the setting with two i.i.d.\ agents
and show that when combining the first-price auction with a secondary market taken from a specific family of secondary markets\footnote{Essentially, in the model of \citet{hafalir2008asymmetric},
the agent winning the item in the auction posts a price to the other agent in the secondary market.
}, assuming bids in the auction are not revealed before the secondary market, any BNE of the combined market in which bids in the auction are non-decreasing is welfare-maximizing.
We generalize their BNE efficiency result by showing it holds under significantly weaker conditions.  First, our result allows the aftermarket to be any trade mechanism that satisfies ex post individual rationality.
\begin{definition}\label{def:ex post participate}
A secondary market $\mechs$ satisfies \emph{ex post individual rationality} if 
for any Bayesian Nash equilibrium strategy $\sigma$ in the secondary market
given initial allocation $\alloc$, 
for any type profile~$\type$ and any agent~$i$, 
we have $\val_i(\alloc; \type_i) \leq u_i(\mechs(\alloc, \sigma(\theta));\type_i)$.
\end{definition}
Second, we allow arbitrary information about bids, outcomes, and payments of the auction to be revealed before the secondary market.
Given this information agents update their belief about others' valuation.
As in \citet{hafalir2008asymmetric}, we consider only strategies in which
the bids in the auction are non-decreasing.\footnote{We conjecture that this assumption is unnecessary and that any BNE of the combined market is efficient, even without this assumption.}
The techniques we use in the proof are similar to the techniques in \citet{chawla2013auctions}.
It shows that any monotone strategies must be essentially identical.

\begin{restatable}{theorem}{thmefficient}\label{thm:symmetric efficiency}
Consider the  single-item setting with two i.i.d.\ buyers with atomless and bounded valuation distribution that has positive density everywhere on the support.
Fix the auction $\mechf$ to be either the first-price auction or the all-pay auction. Fix any signaling protocol $\Gamma$ and
any trade mechanism satisfying  ex post individual rationality.
Then  in the combined market $\combinedGame$,
every Bayesian Nash equilibrium\footnote{Note that there is at least one efficient equilibrium:  agents adopt symmetric strategies in the auction, the resulting auction allocation is efficient,
		and no trade occurs in the secondary market.} in which the bids of the agents in the auction are non-decreasing in values, is efficient.
\end{restatable}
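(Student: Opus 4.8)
The plan is to reduce the claim to a statement about the auction alone, and then to establish that statement by a Chawla--Hartline-style argument that the two bidders' monotone strategies must essentially coincide.

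\emph{Reduction to efficiency of the auction allocation.} I would first argue that if the auction allocates the single item to the higher-value buyer almost surely, then so does the combined market. Fix a realization with $v_1>v_2$ (ties occur with probability $0$ since $\dist$ is atomless), so the auction gives the item to buyer~$1$. Let $\rho$ be the probability that the item ends up with buyer~$2$ after the secondary market, and let $p_1,p_2$ be the buyers' expected net transfers there. Ex post individual rationality gives $v_1(1-\rho)-p_1\ge v_1$ (buyer~$1$'s value for her auction allocation is $v_1$) and $v_2\rho-p_2\ge 0$ (buyer~$2$'s value for her auction allocation is $0$), i.e.\ $p_1\le -v_1\rho$ and $p_2\le v_2\rho$; weak budget balance gives $p_1+p_2\ge 0$, so $v_2\rho\ge p_2\ge -p_1\ge v_1\rho$, forcing $\rho=0$. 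Hence the final allocation is efficient whenever the auction's allocation is, and in the first-price and all-pay formats the item is always sold so nothing is left over. It therefore suffices to show that at any BNE with non-decreasing bids the higher-value buyer wins the auction a.s.; equivalently, that the two monotone bidding strategies $\beta_1,\beta_2$ agree up to a measure-zero set.

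\emph{Two preliminary facts.} (i) Standard no-atom / no-gap arguments show each $\beta_i$ is continuous and strictly increasing on the support of $\dist$: an atom admits a profitable infinitesimal overbid and a gap in the range admits a profitable downward jump, and these remain profitable in the combined market because a deviating buyer may \emph{opt out} of the secondary market (voluntary participation) and thereby retain at least her pure-auction payoff while the opponent's behaviour, including her resale behaviour, is held fixed; the positive-density assumption is used here. (ii) An envelope argument for the induced direct mechanism: since buyer~$i$ of value $v_i$ may deviate to the \emph{entire} equilibrium strategy $\strategy_i(v_i')$ of any value $v_i'$ (a bid together with a complete continuation in the secondary market, as a function of whatever post-auction signal $\Gamma$ produces), her interim combined-market utility $U_i(\cdot)$ is convex with $U_i'(v_i)=a_i(v_i)$ a.e., where $a_i(v_i)$ is the interim probability that buyer~$i$ holds the item at the end of the combined market; moreover $U_i(\underline v)=0$ since the lowest type wins the auction with probability zero at any profitable bid.

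\emph{The strategies coincide.} Consider the deviation in which buyer~$1$ of value $v$ submits buyer~$2$'s equilibrium bid $\beta_2(v)$ and then opts out of the secondary market; since $\beta_2$ is strictly increasing and $\dist$ atomless, this bid beats the opponent (who plays $\strategy_2$, hence bids $\beta_2(v_2)$) with probability exactly $\dist(v)$, giving $U_1(v)\ge (v-\beta_2(v))\dist(v)$ in the first-price auction (and $U_1(v)\ge v\dist(v)-\beta_2(v)$ in the all-pay auction), and the symmetric deviation gives the mirror bound for buyer~$2$ against $\beta_1$. Combining these with the equilibrium payoff identities (conditional on winning, ex post IR forces combined utility at least $v-\beta_i(v)$ in both formats, and conditional on losing at least $0$ resp.\ $-\beta_i(v)$, pinning $U_i(v)$ in terms of the interim auction-win probability $q_i$) and with the envelope formula $U_i(v)=\int_{\underline v}^{v}a_i$, the same coupling argument as in \citet{chawla2013auctions} forces $q_1\equiv q_2\equiv\dist$, whence by continuity and strict monotonicity $\beta_1\equiv\beta_2$; the auction allocation is then efficient, and by the reduction so is the combined market. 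The main obstacle is exactly this last step: the auction outcome and the signal released by $\Gamma$ create an endogenous common-value component, so one cannot quote standard private-value first-price uniqueness results verbatim, and $\mechs$ is arbitrary. Voluntary participation neutralizes both — every lower bound above comes from a deviation that opts out of the aftermarket, so it never reasons about $\Gamma$ or the internals of $\mechs$ — leaving a purely auction-theoretic coupling problem in which the assumptions that $\dist$ is atomless, bounded, and has positive density everywhere are precisely what make the Chawla--Hartline crossing argument close.
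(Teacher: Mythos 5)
There is a genuine gap, and it sits exactly where you locate the ``main obstacle.'' Your opening reduction (ex post IR plus weak budget balance imply no trade when the auction already allocated efficiently) matches the paper's first observation. But you then declare that it \emph{suffices} to show the auction allocation itself is efficient, i.e.\ that $\beta_1\equiv\beta_2$, and this is a strictly stronger statement than the theorem. The paper never proves it. Its argument isolates a maximal interval $(\bar v,\bar v')$ on which $b_1\ge b_2$, proves the utility crossing inequalities $u_1(\bar v)\ge u_2(\bar v)$ and $u_1(\bar v')\le u_2(\bar v')$, and then applies the payment identity to the \emph{final} (post-resale) interim allocations $\tilde x_i$: since $\tilde x_1\ge\tilde x_2$ on the interval, $u_1(\bar v')-u_1(\bar v)=\int\tilde x_1\ge\int\tilde x_2=u_2(\bar v')-u_2(\bar v)$, and the crossing inequalities force equality, hence $\tilde x_1=\tilde x_2$. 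The conclusion is only that the \emph{combined-market} allocation is efficient --- possibly because the secondary market resells the item to the higher-value loser --- not that the bid functions coincide. Your route requires ruling out asymmetric-bid equilibria in which the aftermarket corrects the auction's misallocation, and nothing in your argument does so.

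The specific step that fails is the claim that the ``equilibrium payoff identities'' pin $U_i(v)$ in terms of the interim auction-win probability $q_i$. Ex post IR only gives a \emph{lower} bound of $v-\beta_i(v)$ on the combined utility conditional on winning; the winner may earn strictly more by reselling, and the loser may earn strictly positive surplus by buying in the aftermarket. Correspondingly, the envelope derivative is the final allocation probability $a_i$, not $q_i$, so the Chawla--Hartline coupling closes an identity in $a_i$, not in $q_i$, and cannot force $q_1\equiv q_2\equiv F$. (This is precisely why the paper's analogue of your utility lower bound --- its Claim on the bid limits at $\bar v$ --- must also prove a matching \emph{upper} bound on the deviator's combined utility, carefully bounding the resale gain by $\val_1-\hat\val_2\le\epsilon/4$ near the crossing point; your proposal has no counterpart to this.) Two smaller issues: the no-atom/no-gap argument for continuity and strict monotonicity of $\beta_i$ is also not routine here, since the value of winning a tie is no longer $v-b$ once resale is possible (the paper avoids this by working with one-sided limits and allowing flats and jumps); and $U_i(\underline v)=0$ for \emph{both} agents is not immediate, since a lowest type who wins might profit from resale --- the paper establishes it only for the agent who bids weakly lower, in a specific boundary case. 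To repair your proof you would either have to genuinely establish bid symmetry (which appears to be open even to the authors) or retarget the coupling at the final allocations, at which point you recover the paper's argument.
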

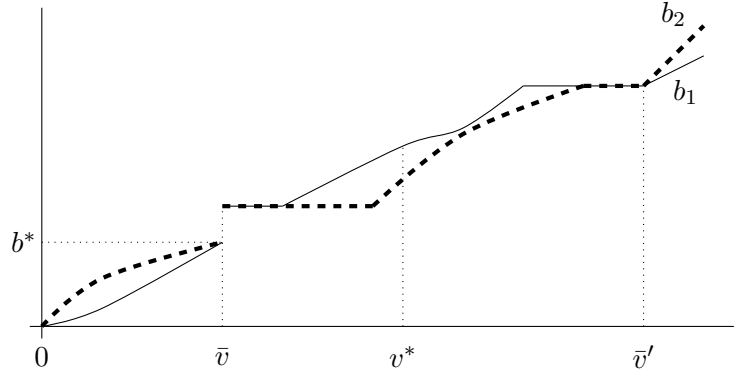
\begin{figure}[t]
\centering
\begin{tikzpicture}[scale = 0.8]

\draw (-0.2,0) -- (11.5, 0);
\draw (0, -0.2) -- (0, 5.3);

\draw plot [smooth, tension=0.6] coordinates { (0,0) (1,0.3) (3, 1.4)};
\draw (3, 2) -- (4, 2);
\draw plot [smooth, tension=0.6] coordinates { (4,2) (6,3) (7, 3.3) (8, 4)};
\draw (8, 4) -- (10, 4);
\draw (10, 4) -- (11, 4.5);

\begin{scope}[ultra  thick]

\draw [dashed] plot [smooth, tension=0.6] coordinates { (0,0) (1,0.8) (3, 1.4)};
\draw [dashed] (3, 2) -- (5.5, 2);
\draw [dashed] plot [smooth, tension=0.6] coordinates { (5.5,2) (7,3.2) (9, 4)};
\draw [dashed] (9, 4) -- (10, 4);
\draw [dashed] (10, 4) -- (11, 5);
\end{scope}

\draw (0, -0.5) node {$0$};

\draw [dotted] (6, 0) -- (6, 3);
\draw (6, -0.5) node {$\val^*$};
\draw [dotted] (3, 0) -- (3, 2);
\draw (3, -0.5) node {$\bar{\val}$};
\draw [dotted] (10, 0) -- (10, 4);
\draw (10, -0.5) node {$\bar{\val}'$};

\draw (10.7, 3.9) node {$\bid_1$};
\draw (10.5, 5.2) node {$\bid_2$};

\draw (-0.3, 1.4) node {$\bid^*$};
\draw [dotted] (0, 1.4) -- (3, 1.4);

\end{tikzpicture}
\vspace{-6pt}
\caption{\label{f:symmetric dist}
Asymmetric bid functions $\bid_1$ and $\bid_2$.
}
\end{figure}

In \Cref{thm:symmetric efficiency}, we obtain efficiency for both first-price auction 
and all-pay auction.
The proof for those auctions are similar and in the following parts, we only show it for the first-price auction. 

First we observe that Myerson payment identity result for the BNE expected payment of an agent as function of her expected allocation, 
holds in the combined market
when agents do not acquire costly information.

\begin{lemma}[\citealp{myerson1981optimal}]
\label{lem:payment identity}
In single-item setting, for any combined market, in any BNE and any agent $i$, 
if $\tilde{\alloc_i}(\cdot)$ is the interim allocation of $i$, and $\tilde{\pay}(\cdot)$ her interim payment function, then
\begin{align*}
\tilde{\pay}_i(\val_i) = \val_i\tilde{\alloc}_i(\val_i) - \int_0^{\val} \tilde{\alloc_i}(z) \dd z + \tilde{\pay}_i(0)
\end{align*}
where $\tilde{\pay}_i(0)$ is the expected payment of agent $i$ with value $0$.
\end{lemma}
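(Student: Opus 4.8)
The plan is to reduce the statement to the classical Myerson envelope argument, by observing that from agent $i$'s point of view the entire combined market $\mechc=\game(\mechf,\Gamma,\mechs)$ behaves like a single abstract one-shot mechanism with (final) interim allocation rule $\tilde{\alloc}_i(\cdot)$ and (total) interim payment rule $\tilde{\pay}_i(\cdot)$ --- provided one first establishes that a type-$\val_i$ agent is always able to ``mimic'' the full equilibrium strategy of any other type $\val_i'$. So the first step is to fix a BNE $\strategy$ of $\mechc$ (with no information-acquisition stage, as assumed here) and an agent $i$, and to write $U_i(\val_i) := \val_i\,\tilde{\alloc}_i(\val_i) - \tilde{\pay}_i(\val_i)$ for her interim expected equilibrium utility, where $\tilde{\alloc}_i(\val_i)$ is the probability that $i$ holds the item at the end of the secondary market when her value is $\val_i$ and everyone plays $\strategy$, and $\tilde{\pay}_i(\val_i)$ is her expected total payment across the auction and the secondary market.

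The crucial step is the \emph{mimicry deviation}. A type-$\val_i$ agent can deviate by playing, at every stage, whatever type $\val_i'$ would play: she submits the auction action $\strategy_i^1(\val_i')$, and then, after observing the auction outcome, her own auction payment, and the signal produced by $\Gamma$, she acts in $\mechs$ according to $\strategy_i^2$ evaluated as if her type were $\val_i'$. This is a feasible alternative strategy because neither the action sets at either stage, nor the other agents' strategies, nor the mechanisms $\mechf$, $\Gamma$, $\mechs$ depend directly on agent $i$'s value --- they condition only on chosen actions and on signals. Hence the joint distribution over final allocations and payments that agent $i$ faces under this deviation coincides exactly with the one faced by a genuine type-$\val_i'$ agent, so her expected final allocation is $\tilde{\alloc}_i(\val_i')$ and her expected total payment is $\tilde{\pay}_i(\val_i')$. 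Since $\strategy$ is a BNE this deviation is not profitable, giving the incentive inequality
\[ U_i(\val_i) \;\geq\; \val_i\,\tilde{\alloc}_i(\val_i') - \tilde{\pay}_i(\val_i') \qquad\text{for all } \val_i,\val_i'. \]

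Given this family of inequalities, the remainder is the standard argument. Because $U_i(\val_i)=\sup_{\val_i'}\big(\val_i\,\tilde{\alloc}_i(\val_i')-\tilde{\pay}_i(\val_i')\big)$, with the supremum attained at $\val_i'=\val_i$, the function $U_i$ is a pointwise supremum of affine functions of $\val_i$, hence convex, hence absolutely continuous on the interior of the support, and $U_i'(\val_i)=\tilde{\alloc}_i(\val_i)$ at every point of differentiability (i.e.\ almost everywhere); monotonicity of $\tilde{\alloc}_i$ follows by adding the inequality for $(\val_i,\val_i')$ to the one with the roles of $\val_i$ and $\val_i'$ swapped. Integrating yields $U_i(\val_i)=U_i(0)+\int_0^{\val_i}\tilde{\alloc}_i(z)\,\dd z$, and substituting $U_i(0)=0\cdot\tilde{\alloc}_i(0)-\tilde{\pay}_i(0)=-\tilde{\pay}_i(0)$ together with $U_i(\val_i)=\val_i\,\tilde{\alloc}_i(\val_i)-\tilde{\pay}_i(\val_i)$ and rearranging gives exactly
\[ \tilde{\pay}_i(\val_i) = \val_i\,\tilde{\alloc}_i(\val_i) - \int_0^{\val_i}\tilde{\alloc}_i(z)\,\dd z + \tilde{\pay}_i(0). \]
The only genuinely delicate point --- and the one I would spell out most carefully --- is the legitimacy of the mimicry deviation in this multi-stage setting: one must verify that a strategy for $i$ in $\mechc$ is exactly a map from her type and the information she observes to actions drawn from a type-independent action set at each stage, so that ``play $\val_i'$'s strategy'' is well defined for a type-$\val_i$ agent, and that it induces precisely type $\val_i'$'s outcome distribution because every other primitive of the game conditions only on actions and signals, never on $i$'s value. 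Once this private-value structure is in place the proof uses nothing about the internals of $\mechf$, $\Gamma$, or $\mechs$, so it applies verbatim to any signaling protocol and any (trade) secondary market.
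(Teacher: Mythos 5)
Your proof is correct, and it supplies an argument that the paper itself omits: the paper states this lemma with only a citation to Myerson (1981) and the one-line remark that the payment identity ``holds in the combined market when agents do not acquire costly information,'' leaving the reader to verify that the classical envelope argument transfers. Your write-up makes that transfer explicit, and you correctly identify the only step that actually needs checking in the two-stage setting, namely that the mimicry deviation is well defined and reproduces type $v_i'$'s outcome distribution; the envelope/convexity/integration steps are then verbatim Myerson. Two implicit hypotheses in your mimicry step deserve to be stated, since both are assumptions of the paper's model rather than free facts: (i) values are drawn independently, so that conditioning on $v_i$ versus $v_i'$ does not change the distribution of the opponents' types and hence of their equilibrium actions (with correlated values the deviating agent would face a different opponent-action distribution than a genuine type $v_i'$, and the identity can fail); and (ii) the signaling protocol $\Gamma$ conditions only on actions, allocations and payments, not directly on agent $i$'s value --- the paper's general model in Section~\ref{sec:extensions} explicitly allows $\Gamma$ to depend on $\mathbf{v}$, in which case mimicking $v_i'$ would not replicate the signals the opponents observe and the argument would break. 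In the context where the lemma is invoked (Appendix~\ref{apx:single efficient}, two i.i.d.\ buyers with information revealed only about bids, outcomes and payments) both conditions hold, so your proof is a faithful and complete justification of the stated identity.
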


\begin{proof}[Proof of \cref{thm:symmetric efficiency}]
Let $\underline{v}$ be the lowest value in the support,  and let $H$ be the highest value in the support  ($H$ is finite as we assume that the support is bounded). 
Since the secondary market satisfies 
ex post individual rational and weak budget balance, 
for any valuation profile such that the allocation in the auction environment is efficient, 
no trade occurs in the secondary market, 
and hence the allocation is efficient in the combined market.
If both agents use the same strictly increasing strategy,
the allocation is efficient in the auction environment for all valuation profiles, 
and hence it is efficient in the combined market. 
Moreover, if both agents use the same monotone strategy while there exists a non-trivial interval $(\val',\val'')$ with the same bid $b''$ for all values in this interval,
at least one agent with value in the interval can strictly increase her bid by an infinitesimal amount to strictly increase her utility, 
contradicting to the assumption that the strategies are sustained under equilibrium.

It remains to consider the case where the strategies of the two agents are monotone yet asymmetric.
For any value $\val^*$ such that $\bid_1(\val^*) > \bid_2(\val^*)$,
let $\bar{\val}\leq \val^*$ be the infimum value such that $\bid_1(\bar{\val}) \geq \bid_2(\bar{\val})$, 
and $\bar{\val}'\geq \val^*$ be the supremum value such that $\bid_1(\bar{\val}') \geq \bid_2(\bar{\val}')$. 
This is illustrated in \Cref{f:symmetric dist}.
We first consider the case that $\bar{\val} > \underline{v}$ and $\bar{v}' <H$. 
The boundary case where 
we have $\bid_1(\val) \geq \bid_2(\val)$
for any $\val < v^*$,
or $\bid_1(\val) \geq \bid_2(\val)$
for any $\val > v^*$,
is considered at the end of the proof.
Note that as we assume that $\bar{\val} > \underline{v}$, the definition of $\bar{\val}$ implies that there exists $\epsilon' > 0$
such that $\bid_2(\val) > \bid_1(\val)$ for 
any $\val\in(\bar{\val}-\epsilon', \bar{\val})$.
This further implies that 
$\lim_{\val\uparrow\bar{\val}} \bid_2(\val) 
\geq \lim_{\val\uparrow\bar{\val}}\bid_1(\val)$.


For the given strategies of the agents in the combined market, we denote the expected utility of agent $i\in \{1,2\}$ with value $v$ after the aftermarket by $\util_i(v)$. 
We wish to show that $\util_1(\bar{\val}) \geq \util_2(\bar{\val})$. 
The claim that $\util_1(\bar{\val}') \leq \util_2(\bar{\val}')$ holds similarly. 
The first step to showing that $\util_1(\bar{\val}) \geq \util_2(\bar{\val})$ is establishing that
$\lim_{\val\uparrow\bar{\val}} \bid_1(\val) 
= \lim_{\val\uparrow\bar{\val}}\bid_2(\val)$.\footnote{$\lim_{\val\uparrow\bar{\val}}$ means taking the limit as $\val$ increases to $\bar{\val}$ (limit from below).} 

\begin{claim}\label{claim:bidlimit}
	It holds that 
	$\lim_{\val\uparrow\bar{\val}} \bid_1(\val) 
	= \lim_{\val\uparrow\bar{\val}}\bid_2(\val)$.
\end{claim}
\begin{proof}
Recall that $\lim_{\val\uparrow\bar{\val}} \bid_2(\val) 
\geq \lim_{\val\uparrow\bar{\val}}\bid_1(\val)$ and assume in contradiction that 
$\epsilon = \lim_{\val\uparrow\bar{\val}} \bid_2(\val) 
- \lim_{\val\uparrow\bar{\val}}\bid_1(\val)>0$.
There exists a value $\hat{\val}_2$ for agent $2$ satisfying 
$\hat{\val}_2 \in (\bar{\val}-\frac{\epsilon}{4}, \bar{\val})$
such that 
$\bid_2(\hat{\val}_2) \in 
[\lim_{\val\uparrow\bar{\val}}\bid_2(\val) - \frac{\epsilon}{2}, 
\bid_2(\bar{\val})]$. 
Thus agent 2 with value $\hat{\val}_2$ can lower her bid $\bid_2(\hat{\val}_2)$ by $\frac{\epsilon}{2}$ without affecting the allocation in the auction, 
and increase the utility in the auction environment by $\frac{\epsilon}{2}$ when she wins the item. 
Thus in any case she wins with bid $\bid_2(\hat{\val}_2)$, 
her utility from deviating and bidding $\bid_2(\hat{\val}_2)-\frac{\epsilon}{2}$ instead, is at least $\hat{\val}_2 - \bid_2(\hat{\val}_2) + \frac{\epsilon}{2}$ in the auction, and this cannot decrease in the combined market as the secondary market satisfies voluntary participation.

We next argue that her utility in the combined market when bidding $\bid_2(\hat{\val}_2)$ in the auction is at most $\hat{\val}_2 - \bid_2(\hat{\val}_2) + \frac{\epsilon}{4}$, and thus smaller than with the deviation.
Indeed, if she wins with bid $\bid_2(\hat{\val}_2)$ in the auction, her utility from the auction is $\hat{\val}_2 - \bid_2(\hat{\val}_2)$.
If she ends up reselling the item in the secondary market, 
then since it satisfies ex post individual rationality and weak budget balance, 
her additional utility is bounded by 
$ {\val}_1-\hat{\val}_2 $, where $\val_1$ is the value of agent $1$.
To complete the proof we show that when agent $2$ wins in the auction with bid $\bid_2(\hat{\val}_2)$ it holds that ${\val}_1\leq \bar{\val}$, and this will prove the claim since $\val_1-\hat{\val}_2\leq \bar{\val}-\hat{\val}_2\leq \frac{\epsilon}{4}$.

Note that if $\bid_1(\val) > \bid_2(\hat{\val}_2)$ for any $\val > \bar{\val}$, the above claim holds. 
Otherwise, let $\epsilon''>0$ be the supremum value such that $\bid_1(\val) \leq \bid_2(\hat{\val})$ for any $\val \in [\bar{\val}, \bar{\val} + \epsilon'')$. 
Moreover, by the assumptions we made on the equilibrium strategy, we have 
$\bid_1(\val) \geq \bid_2(\val) \geq \bid_2(\hat{\val})$ 
for any $\val \in [\bar{\val}, \bar{\val} + \epsilon'')$.
Thus $\bid_1(\val) = \bid_2(\val)= \bid_2(\hat{\val})$ 
for any $\val \in [\bar{\val}, \bar{\val} + \epsilon'')$.
First we claim that $\bid_2(\hat{\val}) < \bar{\val} + \epsilon''$. 
This is because otherwise for agent $2$ with value $\hat{\val}$, 
her utility under equilibrium is negative since conditional on winning, 
her payment in the auction is at least $\bar{\val} + \epsilon''$, 
the value of agent $1$ is $\val_1 \leq \bar{\val} + \epsilon''$ (as agent 1 losses the auction),
and strictly smaller with positive probability, 
and the payment from resale is at most $\val_1$.
In the case that $\bid_2(\hat{\val}) < \bar{\val} + \epsilon''$, 
there exists sufficiently small $\delta>0$
such that agent $2$ with value $\bar{\val} + \epsilon''-\delta$ 
can simply increase the bid by an infinitesimal amount, which would win the item for all opponent's value $\val_1< \bar{\val} + \epsilon''$, which increases her utility. 
Therefore, we must have $\lim_{\val\uparrow\bar{\val}} \bid_1(\val) = \lim_{\val\uparrow\bar{\val}}\bid_2(\val)$.
\end{proof}



Given \cref{claim:bidlimit} we can define $\bid^*= \lim_{\val\uparrow\bar{\val}} \bid_1(\val) = \lim_{\val\uparrow\bar{\val}}\bid_2(\val)$.
Fix some $\delta > 0$ such that $\bid_1(\val), \bid_2(\val) \in (b^*-\delta, b^*]$, and
let $\delta' \in (0, \delta)$ be such that
$\bid_2(\val) > \bid_1(\val)$
for any $\val \in (\bar{\val}-\delta',\bar{\val})$.
The utility of agent 1 with any value $\val \in (\bar{\val}-\delta', \bar{\val})$ is 
\begin{align*}
\util_1(\val) \geq (\val - b^* - \delta) \cdot \Pr_{\val\sim\dist_2}[\val \leq \bar{\val}]
\geq (\bar{\val} - b_2(\val)) \cdot \Pr_{\val\sim\dist_1}[\val \leq \bar{\val}] - 3\delta
\geq \util_2(\val) - 4\delta.
\end{align*}
The first inequality holds because player 1 can always choose to bid $b^*+\delta$ 
and wins with probability at least $\Pr_{\val\sim\dist_2}[\val \leq \bar{\val}]$ in the auction,
and retain at least the same utility in the combined market since the secondary market satisfies voluntary participation. 
The second inequality holds because $\bar{v} - \val \leq \delta'\leq \delta$, $b^*-b_2(v)\leq \delta$
and $\Pr_{\val\sim\dist_1}[\val \leq \bar{\val}] = \Pr_{\val\sim\dist_2}[\val \leq \bar{\val}]$ due to the i.i.d.\ assumption.
The final inequality follows because, similar to the argument in the previous paragraph, 
the payment to agent 2 from reselling the item to agent~1 is at most $\bar{\val}$, 
and the utility increase of agent 2 in the secondary market is at most $\delta'\leq\delta$.
Hence $\lim_{\val\uparrow\bar{\val}} \util_1(\val) \geq \lim_{\val\uparrow\bar{\val}}\util_2(\val)$.
Since the interim utility is a continuous function of the valuation, we conclude that
$\util_1(\bar{\val}) \geq \util_2(\bar{\val})$ as desired.

A symmetric argument establishes that $\util_1(\bar{\val}') \leq \util_2(\bar{\val}')$ as well.  The next claim shows that these inequalities imply the desired efficiency.

\begin{claim}\label{clm:efficient}
For any pair of values $\bar{\val}' > \bar{\val}$ such that 
$b_1(v) \geq b_2(\val)$ for any $v\in (\bar{\val}, \bar{\val}')$,
if $\util_1(\bar{\val}) \geq \util_2(\bar{\val})$
and $\util_1(\bar{\val}') \leq \util_2(\bar{\val}')$,
the allocation is efficient for values between $\bar{\val}$ and $\bar{\val}$.
\end{claim}
\begin{proof}
By \Cref{lem:payment identity}, 
the interim allocations of the agents in the combined market under Bayesian Nash equilibrium satisfy 
\begin{align*}
u_1(\bar{\val}') - u_1(\bar{\val})
= \int_{\bar{\val}}^{\bar{\val}'} \tilde{\alloc}_1(\val) \dd \val
\geq \int_{\bar{\val}}^{\bar{\val}'} \tilde{\alloc}_2(\val) \dd \val
= u_2(\bar{\val}') - u_2(\bar{\val}).
\end{align*}
The inequality holds because (1) the allocation in the auction environment satisfies $\alloc_1(\val) \geq \alloc_2(\val)$ since $\bid_1(\val) > \bid_2(\val)$ for any $\val\in (\bar{\val},\bar{\val}')$;
and (2) no trade happens in the secondary market if agent $i$ wins the item and $\val_i > \val_{-i}$ since the secondary market satisfies ex post individual rationality  and weak budget balance. 
The two claims implies that the interim allocation satisfies 
$\tilde{\alloc}_1(\val) \geq \tilde{\alloc}_2(\val)$
for any $\val\in (\bar{\val},\bar{\val}')$, 
and hence the inequality holds for the integration of the interim allocation.
Moreover, $\util_1(\bar{\val}) \geq \util_2(\bar{\val})$
and $\util_1(\bar{\val}') \leq \util_2(\bar{\val}')$
implies that 
$\util_1(\bar{\val}') - \util_1(\bar{\val}) 
\leq \util_2(\bar{\val}') - \util_2(\bar{\val})$,
and hence both inequalities must hold with equality. 

In order for the integral of interim allocations to coincide, 
we have $\tilde{\alloc}_1(\val) = \tilde{\alloc}_2(\val)$ for any $\val\in (\bar{\val},\bar{\val}')$, 
which implies that 
the item is sold from agent $-i$ to $i$ in the secondary market 
if agent $i$ loses the item, $\val_i < \val_{-i}$, and $\val_i \in (\bar{\val}, \bar{\val}')$.
Therefore, the allocation in the combined market is efficient in range $(\bar{\val}, \bar{\val}')$.
\end{proof}

Finally, we address the boundary cases where 
we have either $\bid_1(\val) \geq \bid_2(\val)$
for any $\val < v^*$,
or $\bid_1(\val) \geq \bid_2(\val)$
for any $\val > v^*$.
If $\bid_1(\val) \geq \bid_2(\val)$
for any $\val < v^*$,
then $u_1(\underline{v})\geq 0$ (as the utility of agent $1$ is always non-negative under equilibrium), 
and $u_2(\underline{v}) = 0$ (since agent $2$ with value $\underline{v}$ wins with probability zero, as the distribution is atomless,
 and no trade occurs in the secondary market when agent $2$ loses in the auction).
 This implies that $u_1(\underline{v})\geq u_2(\underline{v})$.
In the other case that $\bid_1(\val) \geq \bid_2(\val)$ for any $\val > v^*$, 
for the highest value $H$, we have $u_1(H) = u_2(H)$.
First note that under equilibrium we have $b_1(H)=b_2(H)$, because otherwise agent $1$ with value $H$ can decrease her bid and retain the same probability of winning in the auction. 
Since in this case each of the agents when her value is $H$ wins the item with probability $1$ (ties have probability 0 as the distribution has no atom at $H$)
with payment $b_1(H)=b_2(H)$, and no trade occurs in the secondary market, we have $u_1(H) = u_2(H)$. 
Again by applying \cref{clm:efficient}, 
the interim allocations for both agents coincide in the combined market, 
and the allocation is efficient.

We note that we did not make use of the structure of the signal distribution at any point in the argument above.  Thus it is apparent that the argument does not depend on the content of the signals released after the auction.  In particular, our conclusion that the allocation is efficient holds even if the auction bids are revealed before the secondary market begins.      
\end{proof}

\end{document}